\mathchardef\mhyphen="2D
\newtheorem{claim1}[theorem]{Claim}
\newif\ifarxiv
\let\oldnl\nl
\newcommand{\nonl}{\renewcommand{\nl}{\let\nl\oldnl}}
\let\C\relax
\DeclareMathOperator{\C}{\mathcal{C}} 
\let\D\relax
\DeclareMathOperator{\D}{\mathcal{D}} 
\let\P\relax
\DeclareMathOperator{\P}{\mathcal{P}} 
\let\E\relax
\DeclareMathOperator{\E}{\mathcal{E}} 
\let\S\relax
\DeclareMathOperator{\S}{\mathsf{state}}
\let\Pr\relax
\DeclareMathOperator{\Pr}{\mathbb{P}}
\let\negl\relax
\DeclareMathOperator{\negl}{\mathsf{negl}}
\begin{document}

\title{Achieving Almost All Blockchain Functionalities with Polylogarithmic Storage}



\author{Parikshit Hegde\inst{1} \and
Robert Streit\inst{1} \and
Yanni Georghiades\inst{1} \and
Chaya Ganesh\inst{2} \and
Sriram Vishwanath\inst{1}}
\authorrunning{P. Hegde et al.}
\titlerunning{Achieving Almost All Blockchain Functionalities with Polylog. Storage}
%
\institute{The University of Texas at Austin, TX, USA 
\email{\{hegde, rpstreit, yanni.georghiades, sriram\}@utexas.edu} \and
Indian Institute of Science, KA, India\\
\email{chaya@iisc.ac.in}}


\maketitle

\begin{abstract}
 In current blockchain systems, full nodes that perform all of the available functionalities need to store the entire blockchain. In addition to the blockchain, full nodes also store a blockchain-summary, called the \emph{state}, which is used to efficiently verify transactions. With the size of popular blockchains and their states growing rapidly, full nodes require massive storage resources in order to keep up with the scaling. This leads to a tug-of-war between scaling and decentralization since fewer entities can afford expensive resources. We present \emph{hybrid nodes} for proof-of-work (PoW) cryptocurrencies which can validate transactions, validate blocks, validate states, mine, select the main chain, bootstrap new hybrid nodes, and verify payment proofs. With the use of a protocol called \emph{trimming}, hybrid nodes only retain polylogarithmic number of blocks in the chain length in order to represent the proof-of-work of the blockchain. Hybrid nodes are also optimized for the storage of the state with the use of \emph{stateless blockchain} protocols. The lowered storage requirements should enable more entities to join as hybrid nodes and improve the decentralization of the system. We define novel theoretical security models for hybrid nodes and show that they are provably secure. We also show that the storage requirement of hybrid nodes is near-optimal with respect to our security definitions.
 
\keywords{Blockchains, Cryptocurrency, Storage, NIPoPoW, Hybrid Nodes, Trimming}
\end{abstract}


\section{Introduction}
\label{ssec:introduction}

Blockchains enable a group of untrusting parties to securely maintain a distributed ledger without relying on a trusted third party. 
Instead, the power to decide what is recorded in the blockchain is distributed amongst a set of decentralized nodes. 
This property is desirable for applications used by a set of mutually distrustful parties, such as a digital currency.
For this reason, \emph{cryptocurrencies} are a fundamental application of blockchains and are increasingly growing in popularity. In this paper, we focus on cryptocurrencies built on top of a proof-of-work (PoW) blockchain employing the longest chain rule.
A blockchain node for a cryptocurrency typically has the following functionalities:

\begin{enumerate}
    \item \emph{Transaction validation}: When the node receives a new transaction, it checks if the transaction is valid with respect to the transactions already confirmed in the blockchain.
    \item \emph{Block validation}: When the node receives a new block, it verifies that the block hash is valid, all the transactions in the block are valid, and the block otherwise follows all of the conventions imposed by the protocol. 
    \item \emph{State Validation}: Given a summary of currency ownership in the system, called the \emph{state}, the node verifies that the state is consistent with the \\ 
    blockchain.
    \item \emph{Mining}: The node can append a block to the blockchain by verifying its contents and producing a PoW for the block. A node which does not mine blocks is assumed to have mining power 0.
    \item \emph{Chain Selection}: Given a set of conflicting chains, the node can choose the main chain which has the most PoW. 
    Any two honest nodes which receive the same set of conflicting chains in the same order must select the same main chain.
    \item \emph{Bootstrapping new nodes to the blockchain}: A node can provide new nodes entering the system with the blockchain.
    \item \emph{Serving payment proofs}: Given a transaction $tx$, the node can provide a \emph{proof} of $tx$'s inclusion in the blockchain.
    \item \emph{Verification of payment proofs}: Given a proof of transaction $tx$'s inclusion in the blockchain, the node can verify the correctness of the proof.
\end{enumerate}

In order to have all of the functionalities above, a node must verify and store the entire blockchain. 
We will refer to such nodes as \emph{full nodes}. 
Popular systems like Bitcoin and Ethereum also allow for other types of nodes with more limited functionalities \cite{bitcoinfullnode, ethereumnode}. 
For instance, \emph{pruned nodes} initially download the entire blockchain and verify it. 
However, they later \emph{prune} the blockchain, meaning that they discard block data and only retain block headers for blocks older than the most recent $k$ blocks in the blockchain. 
By retaining a summary of the blockchain called the \emph{state}, they can still perform all desired functionalities except for serving proofs of payment and bootstrapping new nodes. \emph{Lightweight nodes} only download the block-headers, and their only functionality is to verify payment proofs provided by full nodes. Importantly, full nodes are necessary to bootstrap both pruned and lightweight nodes into the blockchain.

Since full nodes need to store the entire blockchain, their resource requirements can be high. 
This is an entry barrier that makes fewer nodes participate, which leads to a centralization of trust. 
In this paper, we optimize storage requirements in order to lower this entry barrier. 
In deference to the storage capabilities of modern computational hardware, we divide storage into two categories. 
The first is \emph{cold storage}, which is accessed infrequently and is stored on disk. This includes older blocks that are deep inside the blockchain. The second is \emph{hot storage}, which is accessed frequently and is stored in memory. Naturally, the blockchain state used to validate blocks and transactions is kept in hot storage.

In this paper, we propose a new class called \emph{hybrid nodes}. Hybrid nodes have all of the above functionalities except for the ability to provide payment proofs. 
Importantly, hybrid nodes can bootstrap new hybrid nodes into the system, meaning they do not depend on any other type of nodes, including full nodes. 
Moreover, if $B$ is the length of the blockchain, hybrid nodes only require $polylog(B)$ cold storage to represent the PoW of the chain. 
This is achieved by a process we call \emph{trimming}, an extension of non-interactive-proofs-of-proof-of-work \cite{kiayias2020} (henceforth, NIPoPoW).
NIPoPoW is a protocol that enables a prover (which is most often a full node) to provide payment proofs of $polylog(B)$ size rather than the traditional $B$ size, but NIPoPoW still requires the prover to store the entire blockchain.
We extend these techniques further in our trimming protocol to securely remove blocks and reduce storage. 

We now comment on the \emph{practical implications} of our proposed protocols for hybrid nodes. There are two main components of a blockchain with significant storage requirements for hybrid nodes. First is the storage required to represent the PoW of the chain, which is used by the consensus protocol. In traditional systems, since the entire chain of block-headers must be stored, the storage requirements for this component at the time of writing could be in the order of 100s of megabytes for systems such as Bitcoin and Ethereum. Our trimming protocols for hybrid nodes can decrease this requirement to the order of 100s of kilobytes. While 100MB might not seem large, if one wishes to run a number of blockchains on a single device then the storage requirement can quickly multiply into the gigabytes range if methods such as \emph{trimming} are not employed. Moreover, since hybrid nodes only store $polylog(B)$ number of block headers, their storage requirement grows slower with time too. The second component that requires storage is the blockchain state (UTXO or account-based for cryptocurrencies). For instance, the size of Bitcoin's UTXO set is roughly 4GB \cite{bitcoinutxo}. However, some novel stateless blockchain protocols reduce this storage requirement to the order of kilobytes by requiring clients to provide payment proofs \cite{agrawal2020}. In Section \ref{sec:stateless_blockchains} and Appendix \ref{sec:app-stateless_blockchains}, we show that hybrid nodes can employ stateless blockchain protocols, thus optimizing both their PoW and state storage.

Previous works, specifically CoinPrune and SecurePrune \cite{matzutt2020, reddy2021}, achieve the same functionalities as hybrid nodes with lower storage requirements than full nodes. 
They achieve this by storing a commitment to the blockchain-state in the blocks and pruning blocks that are deep in the blockchain. 
However, their storage requirement still scales linearly with blockchain length.
Moreover, these works provide a largely qualitative analysis of their respective protocols. 
In contrast, we perform a rigorous security analysis and provide proofs that hybrid nodes are secure.

Concurrent with the initial submission of our work, we were made aware of an independent work that uses a modification of NIPoPoWs to obtain polylogarithmic storage \cite{Kiayias2021Mining}. Although both protocols are similarly motivated, we believe that our security definitions and the corresponding analysis are novel and crucial to this area. Of particular note, we believe that security against a trim-attack (see Section \ref{sec:security-statements}), is crucial for the operation of hybrid nodes. Unlike our protocol, \cite{Kiayias2021Mining} claim to not require \emph{optimism} for \emph{succinctness} (see further in Theorem \ref{th:succinctness} and Remark \ref{rm:need_optimism}). However, we note that it doesn't seem economically viable for an adversary to expend resources to simply hurt the succinctness of hybrid nodes.

We defer the discussion of other related works which aim to optimize \newline
blockchain storage to Appendix \ref{app:other_work}.

We now summarize our results and outline the organization of the paper. In section \ref{sec:model}, we introduce the basic model and notation. In section \ref{sec:preliminaries}, we summarize CoinPrune and NIPoPoW, which are building blocks for our protocol. In section \ref{sec:trimming}, we explain the properties of the hybrid node's chain and describe the trimming protocol, though we defer the description of chain selection and state verification protocols to Appendix \ref{app:protocol} due to space constraints. In section \ref{sec:security-statements}, we introduce novel security definitions for hybrid nodes, including $\mathsf{trim\mhyphen attacked}$, $\mathsf{congruence}$, $\mathsf{state\mhyphen attacked}$, and $\mathsf{bootstrap\mhyphen attacked}$. In section \ref{sec:theorems}, we show that hybrid nodes satisfy all the security properties with high probability, and we also discuss the polylogarithmic storage requirement and the lower bound on the storage requirement. For brevity, formal proofs for these results are deferred to Appendices \ref{app:security-proof}-\ref{app:succinctness}. In section \ref{sec:stateless_blockchains}, we illustrate that when combined with stateless blockchain protocols, hybrid nodes are optimized both in terms of cold and hot storage. And, we examine directions for future work in Section \ref{app:future}.

Our primary contributions are the protocols associated with the hybrid nodes and the novel security definitions and their associated theorems. These are in sections \ref{sec:trimming}, \ref{sec:security-statements} and \ref{sec:theorems}.







\section{Model and Notation}
\label{sec:model}
In this work, we consider a set of nodes running a PoW blockchain.  We model the system using continuous time, which accurately models systems with high hash-rates such as Bitcoin and Ethereum\cite{bitcoinhash,ethereumhash}. In this section, we restrict the model description to the essentials required to describe our protocol. Supplemental model details used for security analysis are deferred to Appendix \ref{app:model_supp}. 

Several communication models are considered in the literature. The simplest is the \emph{synchronous} model where a block broadcast by a node at a certain time is received by all other nodes immediately \cite{garay2015, eyal2014}. Since time is continuous, no more than one block is mined at any given time, implying only one block could be in communication at any given time. 
More complicated communication models with communication delays are also considered in the literature \cite{pass2017, sankagiri2021}. For the sake of simplicity, we consider the synchronous model in this paper and leave it to future work to transfer our results to more complicated communication models. Note that because of synchronous communication, all honest nodes have knowledge of the same set of blocks at any given time.

\subsubsection*{Basic Blockchain Notation}
The honest (longest-)chain at time $t$ is represented by $\C_t$. 
The number of blocks in $\C_t$ is called the chain-length and is denoted as $B_t$. 
When the time $t$ is clear from context, we may drop the subscript and refer to it as just $\C$. 
Blocks in $\C$ are indexed as an array in a similar convention to the Python programming language, meaning that $\C[i]$ is denoted as block $i$. Since it is convenient, we refer to a block by its index $i$ and not its contents. $\C[0]$ is called the  \emph{genesis block}. $\C[i_1:i_2]$ represents the segment of the chain from block $i_1$ to block $(i_2-1)$. If at any time an honest node hears of another chain $\D$ which is longer than $\C_t$, then it adopts $\D$ as the honest chain (i.e., $\C_{t^+} = \D$, where $t^+$ indicates the time incrementally after $t$). The last common block between two chains $\C$ and $\D$ is called the \emph{latest common ancestor} (LCA), and is denoted as $b=\mathsf{LCA}(\C, \D)$.
Specifically, $\C_t[:b+1] = \D[:b+1]$, and $\C_t[b+1:] \cap \D[b+1:] = \emptyset$. When a new block $\Bar{b}$ is appended to $\C$, we denote the extended chain as $\C\Bar{b}$.

In our model, hybrid nodes do not store the entire chain $\C_t$, but instead store a trimmed version which contains fewer blocks than $\C_t$. The trimming protocol and its associated notation is described in Section \ref{sec:trimming}.

\subsubsection*{Blockchain State}
When a new transaction is submitted, a node must check if it is ``valid" with respect to the chain $\C_t$. This could be accomplished by parsing through the complete log of transactions in $\C_t$. However, due to the rapidly increasing size of $\C_t$, it is far more efficient for a node to validate transactions against a summary of the chain called the state, and denoted as $\S(\C_t)$. Equivalently, we may refer to the state as $\S(B_t)$, where $B_t$ is the length of the chain $\C_t$. 
Validating a transaction against $\C_t$ is equivalent to validating it with respect to $\S(\C_t)$, so using $\S(\C_t)$ is preferred due to its smaller hot storage requirement. After a new block $\Bar{b}$ is added to the chain, the new state is computed as $\S(\C_t \Bar{b}) = F(\S(\C_t),\Bar{b})$, where $F()$ is a function that applies the transactions in $\Bar{b}$ to $\S(\C_t)$. When the new block is clear from context, we denote the function simply as $F(\S(\C_t))$, and when the function is applied on $n$ sequential new blocks, we denote the operation as $\S(\C_tb_1b_2\dots b_n) = F^{n}(\C_t)$. Two types of states are popular: 1) \emph{UTXO-based State}: this stands for unspent transaction output, and is used by Bitcoin. The UTXO state consists of a list of unspent coins. A new transaction is valid with respect to the state, if it consumes one or more of these coins, and creates new coins whose total value is no larger than the consumed coins; 2) \emph{Account Based State: } This is used in Ethereum. An account-based state consists of a vector of key-value mappings, with one mapping corresponding to each user. The key establishes the user's identity, and the value establishes the balance in the user's account. A user can issue a transaction that transfers a part of his account's balance to another user.

\subsubsection*{Interlinks}

\begin{figure*}[t]
    \centering
    \includegraphics[width=\linewidth]{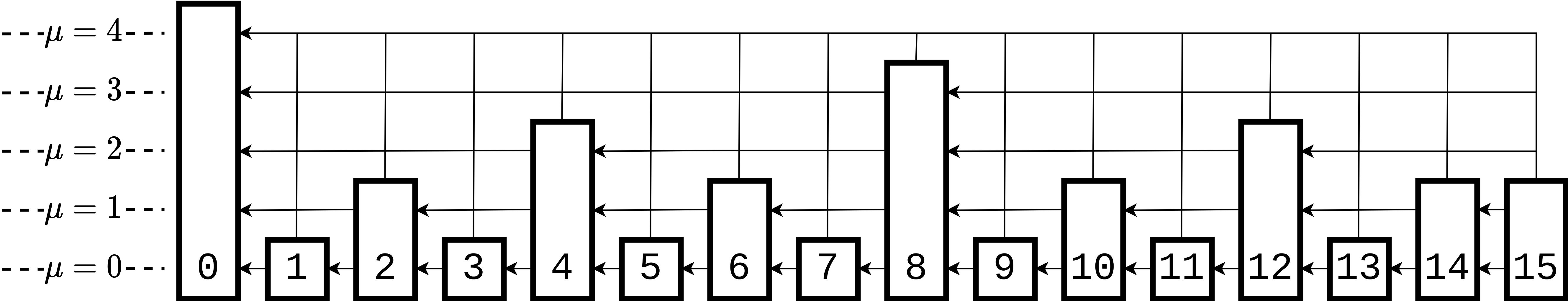}
    \centering
    \caption{An example of the interlink structure (inspired by Figure 1 of \cite{Daveas2020}). At the bottom of each block is its index, and the block's height signifies its superblock level. Each block has a link to the closest ancestor at every level, which is shown by the arrows linking blocks to ancestors. Notice all the blocks are contained in level 0, and only the genesis block is in level 4.}
    \label{fig:interlinks}
\end{figure*}

In traditional blockchains like Bitcoin, each block contains a link (using a hash) to the previous block in the chain. To enable hybrid nodes to store the blockchain in a succinct way, we employ a clever link structure called the \emph{interlink}. Interlinks were introduced in \cite{kiayias2016} and further developed in \cite{kiayias2020}.

In the interlink model, a block contains the following information: 1) transactions in the block; 2) the Merkle root $x$ of all the transactions in the block; 3) the Merkle root $y\left(\S(B_t)\right)$ of the corresponding blockchain state; 4) the block index $i$; 5) the \emph{interlink}, which contains hash links to several previous blocks and is described in detail in the following paragraphs; 6) the random nonce $\eta$; and 7) the block hash $\mathtt{id} = H(\eta, x, y, i, \mathtt{interlink} )$, where $H()$ is a hash function. For a block to be valid, $\mathtt{id}$ must contain at least $T$ leading $0$'s. Equivalently, we say that $\mathtt{id} \leq 2^{-T}$. All the information in the block \emph{except} the list of transactions is referred to as the \emph{block-header}. Observe that the $\mathtt{id}$ of the block can be verified given just the block-header.

To describe the interlink, we first need to define \emph{superblocks}. A level-$\mu$ superblock is a block with $\mathtt{id} \leq 2^{-(T+\mu)}$. Since a valid block satisfies $\mathtt{id} \leq 2^{-T}$, all valid blocks are level-0 superblocks. The genesis block is defined to be a superblock of every level from $0$ to $\infty$. And, a level-$\mu$ superblock is also a level-$\mu'$ superblock for all $0\leq \mu' \leq \mu$, since $2^{-(T+\mu)} \leq 2^{-(T+\mu')}$.

The $\mathtt{interlink}$ data-structure in a block contains a link to the previous superblock of level $\mu$ for every level $\mu$ that is in the chain $\C_t$ up to that block. Since the previous block will always be a superblock of level at least $0$, the interlink always contains a link to the previous block (thus, without any further modification, the security properties of the blockchain are unaffected). Also, since the genesis block is of all possible levels, a link to the genesis block is always included. A pictorial example of this is shown in Figure \ref{fig:interlinks}.

Using interlinks, it is possible to ``skip" over blocks when traversing the blockchain. To be more specific, it is useful to define notation for ``traversing the blockchain at level-$\mu$''. For any given chain $\C$, the level-$\mu$ \emph{upchain}, denoted $\C\uparrow^\mu$, is the sequence of all level-$\mu$ superblocks in $\C$. That is,
\begin{equation}\label{eq:upchain}
    \C\uparrow^\mu \triangleq \left\{ b: b \in \C, \text{ and } \mathtt{id}(b) \leq 2^{-(T+\mu)} \right\}.
\end{equation}

Note that although it is convenient to use set-notation to define it, $\C\uparrow^\mu$ is a sequence with the order of its blocks being the same as they are in $\C$. From the definition of the interlink, each block in the upchain $\C\uparrow^\mu$ contains a reference to the previous block in the upchain. Therefore, it is possible to traverse through $\C\uparrow^\mu$. Additionally, a chain $\C'$ is called a level-$\mu$ \emph{superchain} if all its blocks are level-$\mu$ superblocks. That is, if the underlying chain of $\C'$ is $\C$, then $\C' \subseteq \C\uparrow^{\mu}$.



We use square-brackets to index $\C\uparrow^{\mu}$, similar to a python array. However, at times it is useful to refer to blocks in $\C\uparrow^{\mu}$ according to the block's index in $\C$. In this case, we use curly-braces to index $\C\uparrow^{\mu}$. This is best illustrated using an example. Consider, $\C = \{ 0, 1, 2, 3, 4, 5, 6, 7, 8, 9, 10 \}$, and let $\C\uparrow^{\mu} = \{0, 3, 4, 7, 10\}$. In this case, $\C\uparrow^{\mu}[3:] = \{7, 10\}$, but $\C\uparrow^{\mu}\{3:\} = \{3,4,7,10\}$. As a further illustration of the upchain notation, in the blockchain in Figure \ref{fig:interlinks}, $\C\uparrow^{2} =\{ 0, 4, 8,12\}$.

\section{Preliminaries}
\label{sec:preliminaries}

Before describing our protocol, we summarize CoinPrune \cite{matzutt2020} and NIPoPoW \cite{kiayias2020}. Our protocol builds upon both of these protocols.

\subsubsection*{CoinPrune}
In CoinPrune, similar to our model, blocks contain commitments to the blockchain-state. The protocol selects a \emph{pruning point} $k$ blocks from the tip of the chain.
All blocks after the pruning point are retained completely, while only block headers are retained prior to the pruning point. The state of the blockchain at the pruning point is also stored. Then, the PoW in the chain can be established since the block headers for the entire chain are preserved. Furthermore, since the state at the pruning point is preserved and a commitment to it is stored in the blocks, the state corresponding to each block after the pruning point can be recovered. Refer to Figure \ref{fig:coinprune} for a visual description.

\begin{figure}[h]
  \centering
  \includegraphics[width=0.5\linewidth]{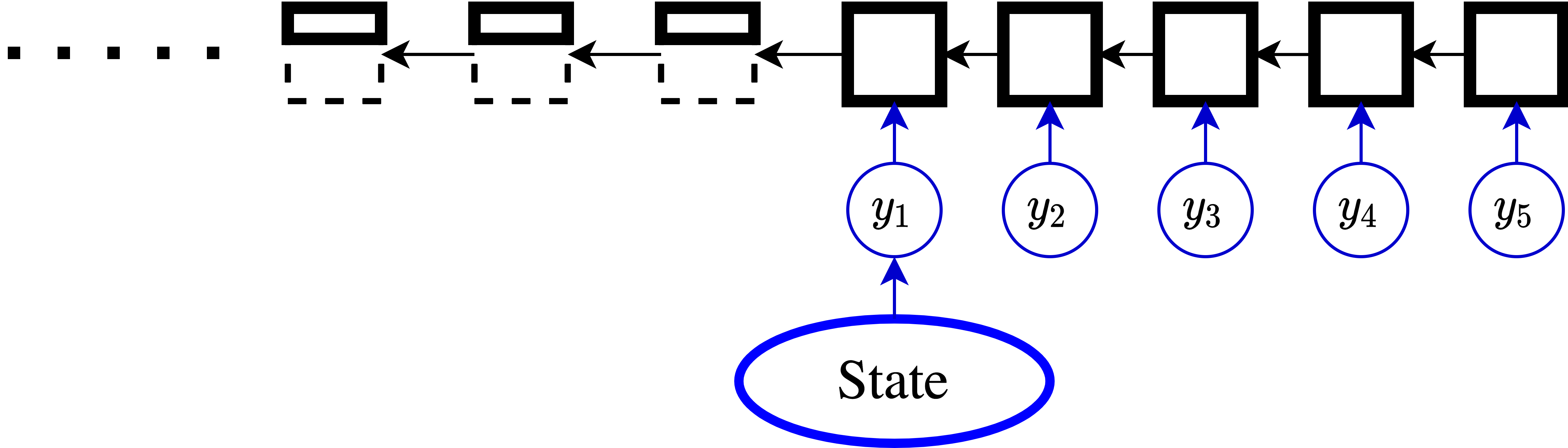}
  \caption{An instance of CoinPrune.  Complete square-boxes indicate the complete blocks after the pruning point, and incomplete boxes indicate block-headers of blocks before the pruning point. The variables $y_i$ are the state-commitments of the respective states stored in the block headers. They are shown separately from the block only for emphasis. The state at the pruning point is stored, and its validity is confirmed by the state-commitments $y_i$'s.}
  \label{fig:coinprune}
\end{figure}

While our method for establishing the state of the blockchain is similar to CoinPrune, we deviate in the way we establish the PoW of the chain. We note that although CoinPrune improves storage compared to full nodes, they still store the entire block-header chain prior to the pruning point. Thus, their storage requirement still scales linearly in the length of the chain, albeit the multiplicative constant may be very small. In contrast, we retain only a subset of the block headers, leading to sublinear storage requirement in the number of blocks in the blockchain. 
To accomplish this, we take inspiration from NIPoPoW.

\subsubsection*{NIPoPoW}
NIPoPoW is a protocol that is used to provide succinct proofs of payments to lightweight clients. Payment proofs have two components. First, the proof needs to establish the amount of PoW in the blockchain, and second, it needs to contain a proof of inclusion of the payment in the chain. Traditionally, the PoW of the chain is established by sending the entire chain of block headers to the lightweight client. NIPoPoWs optimize this step by making the following observation. Informally, by the property of concentration around the means, the $\mu$-upchain $\C\uparrow^{\mu}$ of an underlying chain $\C$ is such that $2^{\mu}\lvert \C\uparrow^{\mu} \rvert \approx  \lvert \C \rvert$ (as an illustration, in Figure \ref{fig:interlinks}, 
      superblocks of level-2 appear roughly every $2^2 = 4$ blocks). Recall that a level-$\mu$ superblock is $2^\mu$ times harder to find than a regular block (i.e., a level-0 superblock). Therefore, if $2^{\mu}\lvert \C\uparrow^{\mu} \rvert \approx  \lvert \C \rvert$, then it is as hard for an adversary to create a fork around $\C\uparrow^{\mu}$ with level-$\mu$ superblocks as it is to create a fork around $\C$ with level-$0$ superblocks. Therefore, it is sufficient to just provide $\C\uparrow^{\mu}$ as a proof of the PoW of the chain. For a large enough level $\mu$, $\C\uparrow^\mu$ is much smaller in size than the underlying chain $\C$, thus making NIPoPoWs much faster than traditional protocols.

Our protocol differs from NIPoPoWs in several ways. First, the goal of our protocol is to optimize a hybrid node's storage while retaining almost all of a full node's functionalities, while the goal of NIPoPoWs is to provide succinct payment proofs. Second, NIPoPoWs do not optimize the prover's storage since the prover must still store the entire blockchain. Third, NIPoPoWs are one-time proofs of payment, meaning they need to be generated afresh for every new proof request, whereas our protocol proceeds in an iterative manner throughout the blockchain's execution. In particular, we employ different \emph{level ranges} (elaborated in the next section) in order to optimize storage throughout time, whereas NIPoPoW only uses a single level range. Fourth, since our end goal is different, our security requirements are different from NIPoPoWs. Lastly, we note that since our security models are different, we use vastly different parameters in our protocol compared to NIPoPoW, and also do novel analysis.


\section{Trimming Protocol}
\label{sec:trimming}

In this section we describe our protocol to trim the blockchain. Other associated protocols that compare trimmed-chains to select the main chain, and that verify the blockchain state are described in Appendix \ref{app:protocol} due to lack of space. 

 
First, we start with an intuitive description of the trimmed chain which is best understood by referring to the example in Figure \ref{fig:trimmingexample}. Similar to NIPoPoWs, the high-level idea in our approach is to retain only a subset of (super)blocks in order to represent the proof of work of the chain. Let the trimmed chain be denoted by $\P_t$ (subscript $t$ may be omitted when time is clear from context). $\P$ is a subset of the complete blockchain $\C$, i.e., $\P \subseteq \C$. And, it has an associated  number $B'_t$ called the \emph{trimming point}. All blocks to the right of $B'_t$ are retained, including their data and block headers. That is, $\P\{B':\} = \C[B':]$. We refer to $\P\{B':\}$ as the \emph{untrimmed tail}. Blocks to the left of $B'$ may be trimmed, meaning $\P\{:B'\} \subseteq \C[:B']$. Only the block-headers of the blocks in $\P\{:B'\}$ are retained. The blocks that are not in $\P\{:B'\}$ are permanently deleted by the hybrid node. Here is where we differ from pruning. In pruning, all the block headers are retained. In trimming, blocks are completely deleted, including their headers.
    
\begin{figure*}[t]
  \centering
  \includegraphics[width=\linewidth]{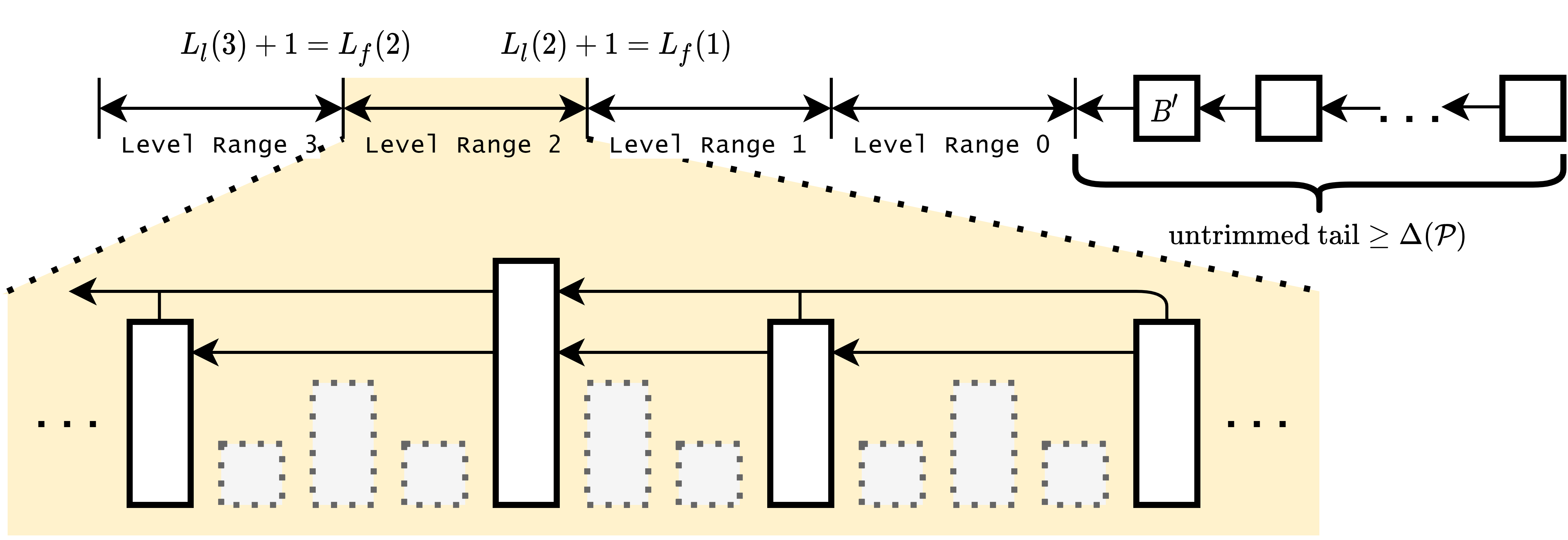}
  \caption{An example of a trimmed chain at an honest node. Notice how it is partitioned into distinct level ranges, each beginning once its predecessor ends. We show a portion of the second level range, $\P\{L_f(2) : L_l(2)\}$. The greyed blocks with dashed outlines represents blocks that have been trimmed. Like Figure \ref{fig:interlinks}, the height of the (super)block is its highest level. Furthermore, after the trimming point, $B'$, is the untrimmed tail where all the blocks are retained.}
  \label{fig:trimmingexample}
\end{figure*}

The trimmed section of $\P$ is further partitioned into \emph{level-ranges}, each level-range corresponding to a unique level $\mu$. A level-range is spread contiguously over a region of the blockchain, and each level range begins at the point its predecessor ends. We denote the starts and ends of level ranges by \emph{level-range functions} $L_f: \mathbb{Z}^+ \to \mathbb{Z}^+$ and $L_l: \mathbb{Z}^+ \to \mathbb{Z}^+$. We define $L_f(\mu)$ as the index of the first block in level-range $\mu$. Similarly, $L_l(\mu)$ is the last block in the level-range $\mu$. Beyond a certain level $\mu_h$, called the highest level, the level-range functions are 0: $L_f(\mu) = L_l(\mu) = 0$ for $\mu > \mu_h$. Also, below a certain level $\mu < \mu_l$, called the lowest level, we have $L_f(\mu) = L_l(\mu) = B'-1$. Notice then that for $\mu_l \leq \mu < \mu_h$, we have $L_l(\mu + 1) + 1 = L_f(\mu)$. The level ranges are also pictorially shown in the example in Figure \ref{fig:trimmingexample}.

     
At level-range $\mu$, we are primarily interested in level-$\mu$ superblocks. As explained with the intuition of NIPoPoW, we need to weigh level-$\mu$ superblocks by $2^\mu$. In order to avoid confusion with PoW, which traditionally does not look at super-levels, we call this notion of weighted PoW simply as the \emph{weight} at level-$\mu$. For the level-range $\mu$, we denote the weight-function $W(\P,\mu)$ as,
    
    \begin{equation*}
        W(\P, \mu) = 2^{\mu} \lvert \P\{L_f(\mu):L_l(\mu)+1\} \uparrow^{\mu} \rvert.
        \label{eq:W}
    \end{equation*}
    
Our trimming protocol in Algorithm  \ref{algo:trimming} ensures that a higher level-range precedes a lower-level range (as illustrated in Figure \ref{fig:trimmingexample}). Therefore, we  can compute the sum of work from the genesis block up to and including level range $\mu$ by, $
    S(\P, \mu) = \sum_{\mu \leq \mu' \leq \mu_h} W(\P, \mu')$.
    
 Since hybrid nodes do not have access to the underlying chain, $S(\P, \mu)$ can be interpreted as their estimate of the PoW up to block $L_l(\mu)$. Note that $W(\P,\mu)$ and $S(\P, \mu)$ are functions of the level-range functions as well. However, we assume that the level-range functions are implicitly defined by $\P$ in order to keep the notation minimal.
 
 \begin{algorithm}[t!]
\DontPrintSemicolon
\SetKwInOut{Input}{input~}
\SetKwInOut{Local}{Local variables~}
\SetKwInOut{Output}{output~}
\DontPrintSemicolon

\Input{
    $\P$ :: My (trimmed) chain\\
    $Q$ :: The trimming interval\\
    $\mu_h$ :: Highest level-range in $\P$ \\
    $L_f(), L_l()$ :: Level-Range functions \\
    $g(), f(), \Delta()$ :: Protocol parameter-functions \\
    $\mathsf{good}_{\delta,g}()$ :: Good-Superchain function (see Appendix \ref{app:protocol})
}

\vspace{\baselineskip}
  
\SetKwFunction{Ftrim}{trim}
\SetKwProg{Fn}{func}{:}{}

\SetKwProg{On}{on event}{:}{}

\On{$\P$ has grown by $Q$ blocks since the last trimming attempt}
{
    $B' = \P[-1] - \Delta(\P)$ \;
    \For{$\mu$ from $\mu_h + 1$ down to 1}
    {   $g \leftarrow g(\P, \mu)$ \;
        $f \leftarrow f(\P, \mu)$ \;
        \If{$|\P\{L_f(\mu):B'\}\uparrow^\mu| \geq f$ and $\mathtt{good}_{\delta, g}(\P\{L_f(\mu):B'\}\uparrow^\mu, \mu)$}
        {
            $\P, L_f, L_l, \text{success} \leftarrow $\Ftrim{$\P$, $L_f$, $L_l$, $B'$, $\mu$, $g$, $f$} \;
            \If{\textnormal{success}=1}
            {
            \textbf{break} \;
            }
        }
    }
    \KwRet \;
}
\;

\Fn{\Ftrim{$\D$, $L_f$, $L_l$, $B'$, $\mu$, g, f}}
{
    $\E \leftarrow \D\{L_f(\mu) : B'\}\uparrow^\mu$ \;
    $A \leftarrow \E[-f]$ \;
    \;
    \For{$\mu'$ from $\mu - 1$ to $0$}
    {
        $\alpha \leftarrow \D\{A : B'\}\uparrow^{\mu'}$ \;
        $\E \leftarrow \E \cup \alpha$ \;
        \;
        \If{$\lvert \alpha \rvert \geq f$ and $\mathsf{good}_{\delta,g}(\alpha, \mu')$}
        {
            $A \leftarrow \alpha[-f]$ \;
        }
    }
    \;
    \eIf{$\lvert \alpha \rvert \geq f$ and $\mathsf{good}_{\delta,g}(\alpha, \mu')$}{
    $L_f(\mu'), L_l(\mu') \leftarrow B'-1$, for all $\mu < \mu'$ \;
    $L_l(\mu) \leftarrow B'-1$ \;
    $\D \leftarrow \D\{:L_f(\mu)\} \cup \E \cup \D\{B':\}$ \;
    \KwRet $\D, L_f, L_l, 1$ \;
    }
    { \KwRet $\D, L_f, L_l, 0$ \;
    }
}
\caption{Trimming Protocol}
\label{algo:trimming}
\end{algorithm} 
 
 The trimming procedure is detailed in Algorithm \ref{algo:trimming}. We briefly describe it here. Trimming is attempted every time the chain grows by $Q$ blocks (line 1). We call $Q$ the trimming interval. The trimming point is set by the required chain-tail length (line 2). Given that $\mu_h$ is the highest level-range in the current trimmed chain, we attempt to trim it further to level $\mu_h+1$. If trimming to that level is not possible, we try to do it to one level lower and so on (line 3). Trimming to a level $\mu$ can be attempted if the specified range of blocks contains enough level-$\mu$ superblocks, and if the corresponding $\mu$-upchain is good (line 6). Roughly, a $\mu$-upchain $\C\uparrow^\mu$ is good if its weight represents the weight of the other level upchains ( $|\C\uparrow^\mu| \approx 2^{(\mu-\mu')} |\C\uparrow^{\mu'}|$, $\mu'<\mu$). For the full definition of \emph{good-superchain}, see Appendix \ref{app:protocol}. Given condition on line 6 is satisfied, the $\mathsf{trim}$ function is called for level $\mu$ (line 7).

The $\mathsf{trim}$ function is very similar to the goodness-aware $\mathsf{Prove}$ algorithm of the NIPoPoW protocol \cite[Algorithm 8]{kiayias2020}. In the $\mathsf{trim}$ function, the $\mu$-upchain is obtained first (line 13). Next, the $(\mu-1)$-level upchain under the last $f$ blocks of the $\mu$-level upchain is also added (lines 17 and 18). If the $(\mu-1)$-level upchain is good, then the $(\mu-2)$-level upchain under its last $f$ blocks is added (lines 20 and 21). Otherwise, the $(\mu-2)$-level upchain under the last $f$ blocks of the $\mu$-upchain are added. This procedure continues until level $0$.

At the end, the $\mathsf{trim}$ function checks if the trimming was a ``success'' by checking if level 0 of the trimmed chain is good. The trim being a success means that it is at least as hard for an adversary to create a longer fork around the trimmed chain, as it would be to do so around the complete chain. Intuitively, this is because the necessary levels of the upchains in this range are good, meaning that they represent the PoW of their corresponding downchains.

In case the trim is a success, the trim function along with the new level range functions are returned (lines 23 to 27) indicating that the trim can be used. Otherwise, the old trimmed chain and level range functions are returned (lines 28 and 29).

 \subsubsection*{Chain Selection} Hybrid nodes need to have a protocol, $\mathsf{Compare}(\C^{(1)}, \C^{(2)})$, to chose the main chain given two conflicting chains, $\C^{(1)}$ and $\C^{(2)}$. Full nodes (that store the entire blockchain) simply choose the longer of the two chains as the main chain. The chain selection protocol for hybrid nodes is a little more complicated since they do not store the entire chain. At a high level, they use the sum of the cumulative weight, $S(\P,0)$, of the trimmed portion of a chain and the length of untrimmed section of the chain as a proxy for the chain length. The complete algorithm involves some more details which are described in full in Appendix \ref{ssec:compare}.
 
 \subsubsection*{State Verification}
Similar to CoinPrune \cite{matzutt2020}, a short commitment to the state at the block is stored in every block. Therefore, a hybrid node can verify the correctness of the blockchain's state by comparing it to the corresponding state commitment. The protocol is called $\mathsf{state-verify}$ and is presented with more details in Appendix \ref{ssec:state_verify}.

\subsubsection*{Hybrid Node's Functionalities}
Here, we describe how a hybrid-node employing the trimming algorithm has the functionalities claimed in Section \ref{ssec:introduction}. Since a hybrid node stores the state of the blockchain at its tip, it can perform transaction validation, block validation and state validation. Using the $\mathsf{Compare}(\cdot,\cdot)$ protocol from Appendix \ref{ssec:compare}, a hybrid node can select the main chain given competing chains. As a consequence, the node can perform mining as well.

A new node joining the system can choose the main chain using the \newline
$\mathsf{Compare}(\cdot,\cdot)$ protocol, and verify the state using the $\mathsf{state\mhyphen verify}$ protocol from Appendix \ref{ssec:state_verify}. Thus, hybrid nodes can bootstrap other hybrid nodes into the system.

The hybrid node verifies payment proofs as follows. In traditional systems like Bitcoin, the prover provides the chain of block-headers in order to establish the PoW, and then provides a short proof for the transaction's inclusion (inclusion-proof) in the chain. In NIPoPoWs the prover provides a superchain (which is logarithmic in the size of the underlying chain) in order to establish the PoW. In either case, the hybrid node can use the $\mathsf{Compare}(\cdot,\cdot)$ protocol to compare the given chain (superchain) to its own trimmed chain. If the two chains only differ near the tail and the inclusion-proof is consistent with the provided chain, then the hybrid node approves the prover's payment proof. Otherwise, it returns false.

\section{Security Definitions }
\label{sec:security-statements}


\subsubsection*{Security from Trim Attack}
Consider the dangerous attack scenario depicted in Figure \ref{fig:trim-attack}, where the adversary provides a trimmed-chain $\P^{(2)}$ which is ``longer'' than the honest chain $\P^{(1)}$ (i.e. $\mathsf{Compare}(\P^{(1)}, \P^{(2)}) = \P^{(2)}$) and the LCA between the two chains precedes the honest chain's trimming point: $b = \mathsf{LCA}(\P^{(1)}, \P^{(2)}) < {B'}^{(1)}$. Denote $b_1$ to be the (super)block after $b$ in $\P^{(2)}$. The honest node cannot verify the state transition from $b$ to $b_1$. This is because the honest node only has access to block headers and not the full state, meaning they can only verify the validity of the block headers, but not of the state transition between the blocks. Moreover, even if the honest node had access to the state at those blocks, there may be a number of blocks between $b$ and $b_1$ that were skipped during the trimming.

\begin{figure}[h]
  \centering
  \includegraphics[width=0.75\linewidth]{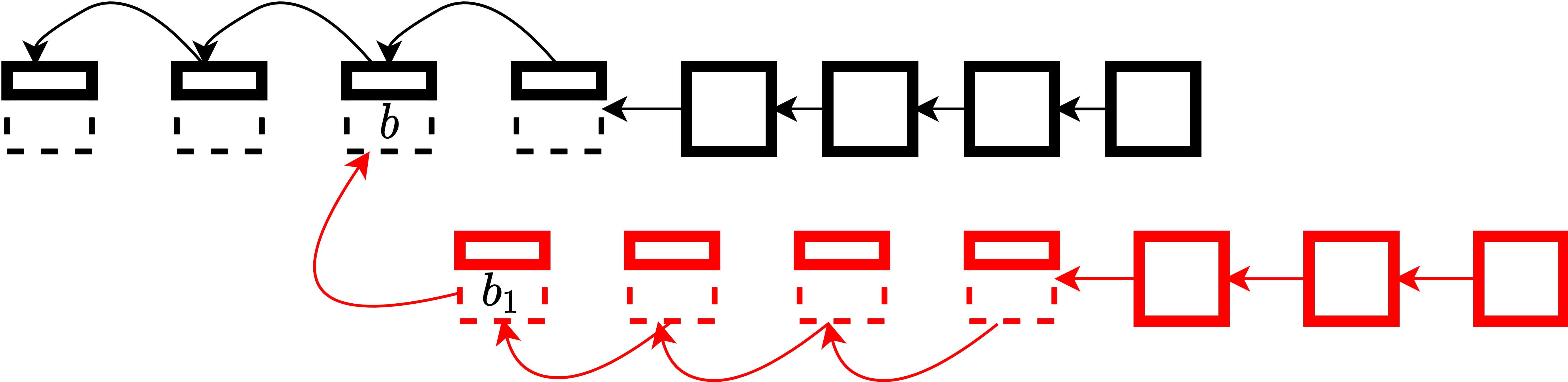}
  \caption{A trim-attack. The honest chain is shown in black, and adversary's chain in red. Complete square-boxes indicate the complete blocks beyond the trimming point, and incomplete boxes indicate block-headers of blocks before the trimming point. Curved arrows indicate that the corresponding blocks may not be subsequent blocks.}
  \label{fig:trim-attack}
\end{figure}

Therefore, if at any time the adversary is able to create a fork as in Figure \ref{fig:trim-attack}, they could arbitrarily alter the state of the chain to their advantage. For instance, the adversary could transfer all of the chain's cryptocurrency into their own accounts. One approach to circumventing this problem is to have the hybrid node re-download the blockchain from a full node in case it encounters a fork preceding its trimming point. However, in that case the security of the system would again rely on the small number of full nodes. We require that hybrid nodes can work independently from full nodes to keep the system as decentralized as possible. To accomplish this, we ensure that there exists no time when an adversary is able to create a fork from beyond a hybrid node's trimming point. 
    \begin{definition}[Attack on the trimmed Chain]
    Let $\omega$ be the fixed randomness\footnote{randomness is w.r.t., the stochastic model described in Appendix \ref{app:model_supp}}.
    Let $\P^{(1)}_{t,\omega}$ be the honest trimmed chain at some honest node at time $t$. Let $\P^{(2)}_{t, \omega}$ be the adversary's trimmed chain. Let $B'(\P^{(1)})$ and $B'(\P^{(2)})$ be their respective trimming points. Let $b_{t,\omega}$ be the LCA block between them: $b_{t, \omega} = \mathsf{LCA}(\P^{(1)}_{t,\omega}, \P^{(2)}_{t, \omega})$. Then, we say that the system is $\mathsf{trim\mhyphen attacked}$ if there exists a time $t$ such that the adversarial trimmed chain is declared to be longer than the honest trimmed chain and the LCA block is before the honest node's trimming point. That is,
    \begin{align*}
        \mathsf{trim\mhyphen attacked} = \{ \omega:& \exists t \text{ s.t. } b_{t, \omega} < B'(\P^{(1)}_{t,\omega}), \text{ and } \mathsf{Compare}(\P^{(1)}_{t, \omega} , \P^{(2)}_{t,\omega}) = \P^{(2)}_{t, \omega} \}.
    \end{align*}
    \label{def:trim_attacked}
    \end{definition}
     \vspace{-.425in}
    \subsubsection*{Congruence}
    Since hybrid nodes only have access to trimmed chains $\P$, we need that the selection of the main trimmed-chain made according to  $\mathsf{Compare}(\cdot,\cdot)$ is in agreement with the underlying complete chains. We formalize this by a property called \emph{congruence}.

    \begin{definition}[Congruence]
    Given any two (complete) chains $\C^{(1)}$ and $\C^{(2)}$ with corresponding trimmed chains $\P^{(1)}$ and $\P^{(2)}$, they are said to be congruent with each other if, $\lvert \C^{(1)} \rvert > \lvert \C^{(2)} \rvert \implies \mathsf{Compare}\left( \P^{(1)}, \P^{(2)} \right) = \P^{(1)}.$
    \label{def:congruence}
    \end{definition}
    
\subsubsection*{State Security}
   Hybrid nodes do not have access to transaction history preceding their trimming point. Instead,
   they rely on the state at the trimming point, $\mathsf{state}(B'_t)$, to compute state at the tip
   of the chain. State is verified using $\mathsf{state\mhyphen verify}$ (Algorithm
   \ref{algo:state-verify}). Firstly, if the adversary launches a trimming attack (Definition
   \ref{def:trim_attacked}), then they could
   change the state of the blockchain arbitrarily. Additionally, the state of the hybrid node is
   also attacked if at any point the adversary can create a different state, $\mathsf{state}'$, that also passes through $\mathsf{state\mhyphen verify}$.

    \begin{definition}[Attack on the State]
    Let $\omega$ be the fixed randomness. At some time $t$, let the trimming point of an honest
    chain $\P^{(1)}_{t,\omega}$ be ${B'}^{(1)}_{t,\omega}$, and let its associated state be
    $\S^{(1)}({B'}^{(1)}_{t,\omega})$. Let the trimming point of the adversary's chain $\P^{(2)}_{t,
    \omega}$ be ${B'}^{(2)}_{t, \omega}$, and let its claimed state be $\S^{(2)}_{t,\omega}$
     The state of the honest node is said to be attacked, denoted by $\mathsf{state\mhyphen
     attacked}$ , if there is either a trim-attack or there exists a time $t$ such that,
     $\S^{(2)}_{t,\omega} \neq \S^{(1)}(B'_{t,\omega})$, such that it verifies against the chain.
     Denoting, $\P_{t,\omega}=\mathsf{Compare}(\P^{(1)}_{t,\omega}, \P^{(2)}_{t,\omega})$, we
     define,
     \begin{align*}
        \mathsf{state\mhyphen attacked} = \{\omega: \omega \in &\mathsf{trim\mhyphen attacked}\} \\
        &\bigcup\{
            \omega: \exists t \text{ such that } \mathsf{state\mhyphen verify}(\S^{(2)}_{t,\omega},
            \P_{t,\omega}) = 1 \}.
    \end{align*}
    \end{definition}

    \subsubsection*{Bootstrapping Security}
    
    When a new (honest) node joins the system, it downloads (possibly trimmed) copies of the blockchain from a number of other hybrid or full nodes. It then chooses the main chain using the $\mathsf{Compare}(\cdot,\cdot)$ protocol and then downloads and verifies the state using the $\mathsf{state\mhyphen verify}$ protocol. In order to show that it adopts the honest trimmed chain and the honest state, we define bootstrapping security below.
    
    \begin{definition}[Bootstrapping Security]
     A joining node is said to be securely bootstrapped into the system if, at the point of it joining, it adopts a trimmed chain that is not $\mathsf{trim\mhyphen attacked}$ with respect to the system's honest chain, and it adopts a state which is not $\mathsf{state\mhyphen attacked}$ with respect to the system's honest state. Otherwise, the node is said to be $\mathsf{bootstrap\mhyphen attacked}$.
     \label{def:bootstrap_attacked}
    \end{definition}
    
    In our system model, we assume that the mining rate of the honest parties and the
    adversary remains constant. This might seem to be counter-factual to Definition
    \ref{def:bootstrap_attacked} because it assumes that nodes can join the system. We remark that
    we make the constant mining rate assumption to make the rigorous security analysis tractable. Practically, we conjecture that if the rate of nodes joining and leaving the system is nearly equal and small enough, then the constant mining rate assumption is a good model for the system.

    \section{Security Results for Hybrid Nodes}
    \label{sec:theorems}
    The protocol has security parameters $k, k' \in \mathbb{N}$, $a, c \in \mathbb{R}^+$, and $\delta \in (0,1)$. Referring to the protocol parameter-functions from Alogrithm \ref{algo:trimming} are defined as:
    $ \Delta(\P) = k' + a \log \left( S(\P, 0) + \lvert \P\{B':\} \rvert \right)$, $g(\P,\mu) = k + a\log S(\P,\mu)$, and $f(\P,\mu) = c \cdot g(\P, \mu)$. We simply state the results here without proof. The proofs can be found in Appendix \ref{app:security-proof}. All the results are with respect to the stochastic model defined in Appendix \ref{app:model_supp}.
    
    First, we show if the honest nodes have the majority of mining power, we can choose security parameters such that our protocol is secure against a trim-attack.
    
    \begin{theorem}[Security of the trimmed Chain]
            Assume an honest majority, $\lambda_h > \lambda_a$, where $\lambda_h$ and $\lambda_a$ are the mining rates of the honest and adversarial nodes respectively. Let the trimming algorithm (Algorithm \ref{algo:trimming}) parameters $k,k',c, a,\delta$ satisfy,
            \begin{equation*}
                \begin{split}
                    k' &= k - a \log\left(((1+\delta^2) + \delta)/\delta \right),  \quad
     1 < (1-\delta)^3\frac{\lambda_h}{\lambda_a}\frac{c'-1}{c'}\frac{c-1-c'}{c}, \\
  1 &< (1-\delta)^5\frac{\lambda_h}{\lambda_a} \frac{c-1}{c}, \quad
  a \geq \frac{8}{\delta^2},
                \end{split}
            \end{equation*}
            
            where $2<c'<c$ is some constant. Then\footnote{In our work, $\negl(k) = e^{-\Omega(k)} < \frac{1}{\mathsf{poly}(k)}$.}, $
     \mathbb{P}(\mathsf{trim\mhyphen attacked}) = \negl(k).$  
    \label{th:trim_attacked}
    \end{theorem}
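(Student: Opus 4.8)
The plan is to adapt the security proof for the goodness-aware NIPoPoW suppression operator of Kiayias et al.~\cite{kiayias2020} to the iterative, time-varying trimming setting, and then to union-bound over all times and all candidate fork points. Fix the randomness $\omega$ and suppose $\omega \in \mathsf{trim\mhyphen attacked}$, witnessed by a time $t$, the honest trimmed chain $\P^{(1)}$ with trimming point ${B'}^{(1)}$, the adversary's chain $\P^{(2)}$, and their LCA block $b < {B'}^{(1)}$ with $\mathsf{Compare}(\P^{(1)},\P^{(2)}) = \P^{(2)}$. First I would reduce this to a single-fork statement. Since the untrimmed tail has length $\Delta(\P^{(1)}) = k' + a\log\!\big(S(\P^{(1)},0) + \lvert \P^{(1)}\{B':\}\rvert\big) \ge k'$, the honest chain contains at least $k'$ real blocks after $b$; letting $\tau$ be the time elapsed since $b$ was mined, a Chernoff bound on the Poisson block-arrival process shows that the honest chain grew by at least $(1-\delta)\lambda_h\tau$ blocks after $b$ while the adversary mined at most $(1+\delta)\lambda_a\tau$ blocks in total during $\tau$, each failing with probability only $e^{-\Omega(\delta^2(\lambda_h+\lambda_a)\tau)}$.

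Next I would show that, except with probability $\negl(k)$, the honest trimmed chain is \emph{good} at every level range simultaneously: for each $\mu$, the number of level-$\mu$ superblocks in any sufficiently long subchain is within a $(1\pm\delta)$ factor of $2^{-\mu}$ times its length. The thresholds $f(\P,\mu) = c\,g(\P,\mu)$ and $g(\P,\mu) = k + a\log S(\P,\mu)$, together with $a \ge 8/\delta^2$, are exactly what makes the segments on which goodness is asserted in Algorithm~\ref{algo:trimming} long enough for this Chernoff bound; and because an honest node only commits to a trim that is a ``success'' (level $0$ of the trimmed portion is good), each $S(\P^{(1)},\mu)$ equals the true proof-of-work of the honest chain up to $L_l(\mu)$ up to a $(1\pm\delta)$ factor. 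The identity $k' = k - a\log\!\big(((1+\delta^2)+\delta)/\delta\big)$ is precisely the conversion letting the goodness slack accumulated at the trimming point be paid for by the tail-length parameter.

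The core of the argument is the comparison itself. For $\mathsf{Compare}$ to output $\P^{(2)}$, past the LCA $b$ there must be a level $\mu^\ast$ at which $\P^{(2)}$ carries at least $f(\cdot)$ level-$\mu^\ast$ superblocks and outweighs $\P^{(1)}$. By the two preceding steps, at level $\mu^\ast$ the honest chain exhibits at least $(1-\delta)^{O(1)}\lambda_h\tau/2^{\mu^\ast}$ level-$\mu^\ast$ superblocks after $b$, whereas every level-$\mu^\ast$ superblock of $\P^{(2)}$ sits on a block the adversary actually mined during $\tau$, of which there are at most $(1+\delta)\lambda_a\tau/2^{\mu^\ast}$. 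The two majority-with-slack inequalities $1 < (1-\delta)^3\frac{\lambda_h}{\lambda_a}\frac{c'-1}{c'}\frac{c-1-c'}{c}$ and $1 < (1-\delta)^5\frac{\lambda_h}{\lambda_a}\frac{c-1}{c}$ are exactly the conditions under which the honest count strictly exceeds the adversarial count, handling respectively the case in which the comparison descends into a lower level range of the honest chain and the case in which it stays at $\mu^\ast$. Hence for a fixed fork block $b$ and witnessing time $t$ the attack occurs with probability $\negl$ in the number of honest blocks after $b$, which is at least $\Delta(\P^{(1)}) \ge k' + a\log(\text{chain length})$; summing over all fork blocks $b$ and over the block-arrival events (the only times the relevant quantities change), the exponential decay in the segment length — with the extra $a\log$ term in $\Delta$ making the sum over the candidate fork blocks converge — collapses the union to $\mathbb{P}(\mathsf{trim\mhyphen attacked}) = \negl(k)$.

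I expect the main obstacle to be this core comparison step: one must argue that \emph{whichever} level $\mu^\ast$ the $\mathsf{Compare}$ protocol lands on — which depends both on the adversary's choices and on which level ranges of the honest chain overlap the segment after $b$ — the honest chain's representation at $\mu^\ast$ still reflects at least $(1-\delta)^{O(1)}\lambda_h\tau/2^{\mu^\ast}$ units of work simultaneously with the upper bound on the adversary. Propagating a clean $(1\pm\delta)$ accounting through the nested loop of the $\mathsf{trim}$ function and combining the cross-level goodness guarantees of a \emph{trimmed} chain (rather than of a single superchain as in NIPoPoW) is where the real work lies, and is the source of the intricate relations among $c$, $c'$, $\delta$ and $\lambda_h/\lambda_a$ in the hypotheses. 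A secondary subtlety is making the ``for all $t$'' union bound rigorous over continuous time; this is handled by observing that all relevant quantities change only at block-arrival times and that the per-instance failure probability decays like $e^{-\Omega(n)}$ in the number $n \ge k'$ of blocks in the contested segment.
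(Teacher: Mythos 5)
Your high-level architecture (concentration of superblock counts, Poisson timing bounds, and a union bound over fork points and over time driven by the $a\log(\cdot)$ terms in $\Delta$, $g$, $f$) matches the paper's, but there is a genuine gap at the step you yourself flag as the core: you assert that every level-$\mu^\ast$ superblock of $\P^{(2)}$ past the LCA $b$ of the trimmed chains ``sits on a block the adversary actually mined during $\tau$''. That is false in general, for two reasons. First, $b = \mathsf{LCA}(\P^{(1)},\P^{(2)})$ can lie strictly before $b' = \mathsf{LCA}(\C^{(1)},\C^{(2)})$, because trimmed chains skip blocks; the segment of $\P^{(2)}$ between $b$ and $b'$ consists of blocks shared with the honest underlying chain, not of fresh adversarial work. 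The paper bounds the weight of this segment separately (Lemma \ref{lm:bounding_adversary_work}) via the dominant-superchain property. Second, even past $b'$ the adversary's chain may incorporate honestly mined blocks that were later orphaned; nothing in your argument prevents the adversary from appropriating such honest work. The paper handles this with an induction on the honest chain length: it defines $b^\ast$ as the earliest block after which all blocks of $\P^{(2)}$ are adversary-mined, and uses the induction hypothesis (no trim-attack at any earlier time) to bound $b^\ast - b''$ by $\Delta(\cdot)$, which via Lemma \ref{lm:trimmed_chain_upperbound} and the choice $k' = k - a\log(((1+\delta^2)+\delta)/\delta)$ is at most $g(\P^{(1)},\mu^{(1)})$. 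Your proposal has no counterpart of this induction, and a direct per-time union bound cannot substitute for it, because the bound on how much honest work the adversary can reuse at time $t$ is itself the security statement at earlier times.

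Relatedly, you misattribute the roles of the hypotheses: in the paper the factors $(c'-1)/c'$ and $(c-1-c')/c$ arise from discounting exactly these non-adversarial contributions (the prefix up to $b''$ and the at-most-$\Delta$ appropriated honest blocks), and the two displayed conditions with $(1-\delta)^3$ and $(1-\delta)^5$ correspond to the case split on the location of the LCA ($b \ge \delta B$ versus $b < \delta B$, where in the latter case the level range containing $b$ may be too short for concentration and the argument jumps to the highest level $\tilde{\mu}^{(2)}$ with $S(\P^{(2)},\tilde{\mu}^{(2)}) \ge \delta B$), not to whether $\mathsf{Compare}$ stays at $\mu^\ast$ or descends into a lower level range. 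Your concentration, timing, and union-bound steps are sound and essentially those of the paper, but without the induction-plus-$b^\ast$ device (or an equivalent mechanism bounding reused honest work) the central counting inequality does not follow.
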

\ifarxiv
    \begin{proof}[Sketch]
    We sketch a simpler case. Assume that the trimmed chains contain exactly the same blocks as the underlying complete chains, but they still have trimming points ${B'}^{(1)}$ and ${B'}^{(2)}$. Denote $B_t = B_t^{(1)}$. In this case, if we let $b_{t,\omega}=\mathsf{LCA}(\C^{(1)}_{t,\omega}, \C^{(2)}_{t,\omega})$, the $\mathsf{trim\mhyphen attacked}$ condition simplifies to 
    \begin{equation*}
        \left\{ \omega: \exists t \text{ such that }
         b_{t,\omega} < \lvert \C^{(1)}_{t, \omega} \rvert - \Delta(\P^{(1)}), \text{ and } \lvert \C^{(2)}_{t, \omega} \rvert > \lvert \C^{(1)}_{t, \omega} \rvert
        \right\}.
    \end{equation*}
     Intuitively, the probability of an adversary's fork catching up with the honest chain decreases exponentially with the length of the honest chain.
     That is, the probability an adversary with less than half of the mining power creates a secret chain longer than $\Delta$ before the honest nodes diminishes as $2^{-\Omega(\Delta)}$. By choosing $\Delta(\P^{(1)}) = k' + a \log(B_t)$, for large constant $a$, we have $2^{-\Omega(\Delta(\P^{(1)}))} \leq C'/{B_t}^2$. Here, $C'$ is a number which is $2^{-\Omega(k')}$. By union-bound property, the probability that there exists a time $t$ when the adversary succeeds in creating a longer secret chain from forking away to the left of the honest trimming point is upper-bounded by $\sum_{B=1}^{\infty} C'/B_t^2$. Observe that the sum of the inverse squares of natural numbers is finite: $\sum_{n \in \mathbb{N}} 1/n^2 = \pi^2/6$. Therefore, by making $C'$ small (i.e., making $k'$ large enough), we can make the probability as small as desired. The full proof of the above simpler case first appeared in  \cite{sompolinsky2016bitcoin}. 
     
     In our full proof in Appendix \ref{app:security-proof}, we prove that with the required probability, the result is true under trimming as well. The key idea is to show that $\mathsf{Weight}(\P) \approx \lvert \C \rvert$ with high enough probability.
    \end{proof}
\fi    
    As a corollary to the above theorem, we can conclude the chain-selection made by hybrid-nodes is consistent with underlying chain lengths.
    
    \begin{corollary}
    The congruence property holds except with probability $\negl(k)$.  
    \label{corr:congruence}
    \end{corollary}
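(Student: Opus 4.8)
The plan is to obtain congruence as a corollary of Theorem~\ref{th:trim_attacked} together with the ``weight tracks length'' estimate that already underlies its proof. First I would fix an event $\mathcal{G}$, of probability $1-\negl(k)$, on which simultaneously: (i) the system is not $\mathsf{trim\mhyphen attacked}$ (Theorem~\ref{th:trim_attacked}); (ii) the honest chain is the longest chain in the system at every time (a standard consequence of honest majority established in Appendix~\ref{app:security-proof}); and (iii) every honestly trimmed chain $\P$ arising in the execution has its $\mathsf{Compare}$-proxy $S(\P,0)+\lvert\P\{B':\}\rvert$ within a factor $1\pm\delta$ of $\lvert\C\rvert$, up to an additive $O(\log\lvert\C\rvert)$ term. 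Item (iii) is exactly the claim ``$\mathsf{Weight}(\P)\approx\lvert\C\rvert$ with high enough probability'' flagged at the end of the sketch of Theorem~\ref{th:trim_attacked}: it follows from the good-superchain conditions that Algorithm~\ref{algo:trimming} enforces on each level range, Chernoff concentration for the number of level-$\mu$ superblocks in an interval of blocks, and a union bound over times using $\sum_B B^{-2}<\infty$ to absorb the additive $O(\log)$ slack.

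Next, working on $\mathcal{G}$, suppose $\lvert\C^{(1)}\rvert>\lvert\C^{(2)}\rvert$; I want $\mathsf{Compare}(\P^{(1)},\P^{(2)})=\P^{(1)}$. Because honest nodes share a chain in the synchronous model, I may take $\C^{(1)}$ to be the honest chain (the general ``any two chains'' statement reduces to this by running $\mathsf{Compare}$ against the honest chain as an intermediary, using that on $\mathcal{G}$ the honest chain outranks everything strictly shorter, so $\mathsf{Compare}$ behaves like a length-consistent total preorder). Let $b=\mathsf{LCA}(\C^{(1)},\C^{(2)})$, which is also $\mathsf{LCA}(\P^{(1)},\P^{(2)})$, and split on its position relative to the honest trimming point. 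If $b<B'(\P^{(1)})$, then the event ``$\lvert\C^{(1)}\rvert>\lvert\C^{(2)}\rvert$ and $\mathsf{Compare}(\P^{(1)},\P^{(2)})=\P^{(2)}$'' is verbatim the $\mathsf{trim\mhyphen attacked}$ event of Definition~\ref{def:trim_attacked}, hence does not occur on $\mathcal{G}$. If $b\ge B'(\P^{(1)})$, the honest suffix $\C^{(1)}[b{:}]$ is retained verbatim in $\P^{(1)}$, so $\mathsf{Compare}$ weighs $\lvert\C^{(1)}\rvert-b$ against the post-$b$ score of $\P^{(2)}$; by (iii) the latter cannot exceed $\lvert\C^{(2)}\rvert-b$ by more than the stated slack, so from $\lvert\C^{(1)}\rvert>\lvert\C^{(2)}\rvert$ one concludes $\mathsf{Compare}(\P^{(1)},\P^{(2)})=\P^{(1)}$; the only subtle sub-case here is $B'(\P^{(1)})\le b<B'(\P^{(2)})$, where $\C^{(2)}$ is itself trimmed beyond the fork, which the parameter relations of Theorem~\ref{th:trim_attacked} confine to a transient short-chain regime.

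Combining the cases, $\Pr(\neg\,\mathsf{congruence})\le\Pr(\overline{\mathcal{G}})=\negl(k)$. I expect the main obstacle to be precisely the case $b\ge B'(\P^{(1)})$: one must check from the detailed specification of $\mathsf{Compare}$ in Appendix~\ref{ssec:compare} that the $\delta$-relative slack in the superblock-weight estimate can never overturn a genuine length inequality, and the clean way to do this is to argue that whenever $\mathsf{Compare}$ actually consults a weight estimate rather than a verbatim suffix, the fork is deep enough that an adverse outcome would already have registered as a trim-attack, so that the ``intermediate'' fork position is (outside a startup phase) essentially vacuous. A secondary point is making the reduction to ``$\C^{(1)}$ honest'' rigorous, i.e. verifying that $\mathsf{Compare}$ is, on $\mathcal{G}$, consistent enough with chain length to be transited through the honest chain.
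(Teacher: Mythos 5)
Your proposal is correct and follows essentially the same route as the paper's own (much terser) proof: the deep-fork case $b<B'(\P^{(1)})$ is dismissed because a congruence violation there is by definition a $\mathsf{trim\mhyphen attacked}$ event, excluded by Theorem~\ref{th:trim_attacked}, and the shallow case is handled by noting that the honest node retains all blocks past its trimming point, so $\mathsf{Compare}$ (lines 11--12 of Algorithm~\ref{algo:compare}) reduces to comparing underlying chain lengths. The auxiliary event (iii), the intermediary reduction through the honest chain, and the worry about the sub-case $B'(\P^{(1)})\le b<B'(\P^{(2)})$ are extra scaffolding the paper does not invoke, but they do not change the substance of the argument.
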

\ifarxiv
    \begin{proof}
    We showed in Theorem \ref{th:trim_attacked} that the adversary cannot create a longer fork from before a honest node's trimming point except with probability $\negl(k)$. Since a honest node retains all blocks after the trimming point, comparing two trimmed chains that differ only after the trimming point is equivalent to comparing the underlying chains (see line 11 and 12 in Algorithm \ref{algo:compare}).
    \end{proof}
\fi
 
    \begin{theorem}[State Security]
     Assuming honest majority, $\lambda_h > \lambda_a$, and that the security parameters are as in Theorem \ref{th:trim_attacked}, then, $ \Pr(\mathsf{state\mhyphen attacked})= \negl(k)$.
     \label{th:state_attacked}
    \end{theorem}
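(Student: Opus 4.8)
The plan is to prove $\Pr(\mathsf{state\mhyphen attacked})=\negl(k)$ by a union bound that follows the structure of the definition of $\mathsf{state\mhyphen attacked}$. Write
\[
\Pr(\mathsf{state\mhyphen attacked})\;\le\;\Pr(\mathsf{trim\mhyphen attacked})\;+\;\Pr(\mathcal F),
\]
where $\mathcal F$ is the event that $\omega\notin\mathsf{trim\mhyphen attacked}$ yet for some time $t$ an adversarial claimed state $\S^{(2)}_{t,\omega}$ that differs from the true state $\S^{(1)}(B'_{t,\omega})$ at the trimming point of $\P_{t,\omega}=\mathsf{Compare}(\P^{(1)}_{t,\omega},\P^{(2)}_{t,\omega})$ nevertheless passes $\mathsf{state\mhyphen verify}(\cdot,\P_{t,\omega})$. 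The first term is $\negl(k)$ directly from Theorem~\ref{th:trim_attacked}, since the security parameters are the same. Hence it suffices to show $\Pr(\mathcal F)=\negl(k)$.

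For $\mathcal F$ I would fix $\omega\notin\mathsf{trim\mhyphen attacked}$ and split on the output of $\mathsf{Compare}$. If $\P_{t,\omega}=\P^{(1)}_{t,\omega}$, the trimming-point block is the honest node's own, so the commitment that $\mathsf{state\mhyphen verify}$ checks against equals $y(\S^{(1)}({B'}^{(1)}))$; any distinct state passing verification is then a collision of the commitment hash $H$. If $\P_{t,\omega}=\P^{(2)}_{t,\omega}$, then $\omega\notin\mathsf{trim\mhyphen attacked}$ forces $b:=\mathsf{LCA}(\P^{(1)},\P^{(2)})\ge{B'}^{(1)}$, so $b$ lies in the honest node's untrimmed tail and the honest node holds $\S^{(1)}(b)$. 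Using that $F$ is deterministic and that $\mathsf{state\mhyphen verify}$/$\mathsf{Compare}$ (Appendix~\ref{app:protocol}) anchor the state check to such a verifiable common block and re-derive the state forward through the retained full blocks, the true state at ${B'}^{(2)}$ is pinned down, and again an accepted $\S^{(2)}\ne\S^{(1)}(B'_{t,\omega})$ forces a collision of some state commitment or block id under $H$.

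To close the argument I would bound the residual failure modes over the whole execution. Since the adversary is PPT it makes only $\mathsf{poly}(k)$ queries to $H$, so the probability that any two distinct inputs collide at any time is $\negl(k)$. If the second case above additionally needs control of the boundary situation in which ${B'}^{(2)}$ sits strictly ahead of $b$ on an adversarial fork, this is an adversarial fork of length $\Omega(\Delta)$ off a point within $\Delta$ of the honest tip; it is handled exactly as in Theorem~\ref{th:trim_attacked}, using the estimate that the probability of such a fork overtaking at chain length $B$ is at most $C'/B^{2}$ with $C'=\negl(k')$, together with the weights-approximate-lengths lemma $S(\P,0)\approx\lvert\C\rvert$, and then summing over all times via $\sum_{B}1/B^{2}<\infty$. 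Adding the (at most three) negligible contributions yields $\Pr(\mathsf{state\mhyphen attacked})=\negl(k)$.

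The main obstacle is the second case: rigorously justifying that ``no trim-attack'' already forces any accepted state to coincide with the honest state up to a hash collision, even when the winning chain is adversarial and its trimming point is not itself a common block. This hinges on the precise semantics of $\mathsf{state\mhyphen verify}$ and $\mathsf{Compare}$, namely that they never trust a state commitment buried in an adversarially trimmed prefix but instead tie the state to a block that is common with, or forward-derivable by, the honest node; once that is established, everything else reduces to collision-resistance of $H$ and the union-bound machinery already developed for Theorem~\ref{th:trim_attacked}.
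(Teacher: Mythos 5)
Your decomposition $\Pr(\mathsf{state\mhyphen attacked})\le\Pr(\mathsf{trim\mhyphen attacked})+\Pr(\mathcal F)$ and the observation that $\mathsf{state\mhyphen verify}$ forces any forged state to match the stored commitments along the untrimmed tail are both consistent with how the paper argues. The genuine gap is in how you bound $\Pr(\mathcal F)$: you close the argument by saying the adversary is PPT, makes only $\mathsf{poly}(k)$ queries to $H$, and hence never finds a collision except with probability $\negl(k)$. That step is not available in this paper's model. The security definitions quantify over all $t\in[0,\infty)$ in a continuous-time stochastic model with no bound on the adversary's total number of queries, and the paper itself points out (Appendix on the random-oracle model) that over an unbounded execution the hash function experiences collisions with probability $1$; the poly-query collision-resistance union bound therefore cannot rule out the event ``there exists a time $t$'' at which a single commitment is forged.

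The paper's proof instead uses the stochastic model of Appendix \ref{app:model_supp}: each attempt to produce a state $\S'\neq\S$ with $y(\S')=y(\S)$ succeeds with probability $q\ll p$, so the adversary's production of a malicious state sequence that stays consistent with the honest commitments is a Poisson process of rate $\lambda_s\ll\lambda_a<\lambda_h$. A single collision is not enough, because $\mathsf{state\mhyphen verify}$ requires the forged state to verify against \emph{every} block of the untrimmed tail, whose length is $\Delta(\P)=k'+a\log(\cdot)$ and grows with the chain; the proof then reuses exactly the race/logarithmic-waiting-time argument of Theorem \ref{th:trim_attacked}, bounding the success probability at chain length $B$ by roughly $\negl(k)/B^{2}$ and taking a union bound over all $B$, which is how the ``for all times'' quantifier is absorbed. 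Your sketch never invokes the $\lambda_s$ process or the requirement of matching the whole tail, which is precisely the mechanism that replaces the poly-query collision bound here; without it, your case analysis reduces $\mathcal F$ to an event that, in this model, actually occurs eventually with probability $1$ if one collision sufficed. (Your bounded-query argument would be fine if one first restricted to $\mathsf{poly}(\kappa)$ rounds as in the random-oracle translation of Appendix \ref{ssec:pass}, but that is not the setting in which Theorem \ref{th:state_attacked} is stated.)
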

\ifarxiv
    \vspace{-.25in}
    \begin{proof}[Sketch]
    In order to pass through $\mathsf{state\mhyphen verify}$ (Algorithm \ref{algo:state-verify}), the adversary must produce a malicious state at the trimming point of an honest node, such that it verifies against all the states corresponding to the untrimmed chain tail. In Section \ref{ssec:stochastic_model}, we explained that the adversary's process of generating a malicious state sequence is a Poisson process of rate $\lambda_s$, where $\lambda_s << \lambda_a$. Therefore, from the same argument as in Theorem \ref{th:trim_attacked}, it follows that there exists no time when the adversary can create such a malicious state.
    \end{proof}
 \fi   
    \begin{corollary}[Bootstrapping Security]
    Assume the arrival process of new nodes is independent of the randomness of the blockchain system. If we further assume that a new node contacts at least one honest node and that there is honest mining majority, then $\mathsf{bootstrap\mhyphen attacked}$ occurs with probability $\negl(k)$.
    \label{corr:bootstrap_attacked}
    \end{corollary}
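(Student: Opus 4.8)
The plan is to reduce Corollary \ref{corr:bootstrap_attacked} to Theorems \ref{th:trim_attacked} and \ref{th:state_attacked}, which already quantify over \emph{all} times and over the worst-case adversary, so that the only genuinely new ingredient is the ``contacts at least one honest node'' hypothesis together with a containment of events. Let $\tau$ be the (possibly random) time at which the new honest node joins. Because the arrival process is assumed independent of the blockchain randomness $\omega$, conditioning on the value of $\tau$ leaves the law of $\omega$ — and hence the constant-mining-rate stochastic model of Appendix \ref{app:model_supp} under which those two theorems are established — unchanged; so it suffices to bound, for an arbitrary fixed $\tau$, the probability that the joining node is $\mathsf{bootstrap\mhyphen attacked}$.

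At time $\tau$ the joining node collects a finite family $\mathcal{F}$ of (trimmed) chains together with their claimed states from the nodes it contacts, adopts the $\mathsf{Compare}$-maximal element $\P^{\star}$ of $\mathcal{F}$, runs $\mathsf{state\mhyphen verify}$ on the state $\S^{\star}$ accompanying $\P^{\star}$, and keeps $\S^{\star}$ only if that check returns $1$. By hypothesis one contacted node is honest, so the honest trimmed chain $\P^{(1)}_{\tau,\omega}$ (with trimming point ${B'}^{(1)}$ and honest state $\S^{(1)}({B'}^{(1)})$) lies in $\mathcal{F}$, and therefore $\mathsf{Compare}(\P^{(1)}_{\tau,\omega},\P^{\star})=\P^{\star}$. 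First I would argue the adopted chain is not $\mathsf{trim\mhyphen attacked}$ off a $\negl(k)$-event: if $\P^{\star}=\P^{(1)}_{\tau,\omega}$ this is immediate since $\mathsf{LCA}(\P^{(1)},\P^{\star})$ is then the chain tip; if $\P^{\star}\neq\P^{(1)}_{\tau,\omega}$, then $\P^{\star}$ is a chain exhibited by the adversary with $\mathsf{Compare}(\P^{(1)}_{\tau,\omega},\P^{\star})=\P^{\star}$, and were $\mathsf{LCA}(\P^{(1)}_{\tau,\omega},\P^{\star})<{B'}^{(1)}$ the pair $(\P^{(1)},\P^{\star})$ at time $\tau$ would be a witness for $\mathsf{trim\mhyphen attacked}$ in the sense of Definition \ref{def:trim_attacked}; since the adversary of Theorem \ref{th:trim_attacked} is worst-case it subsumes whatever chain the joining node is shown, so this has probability $\negl(k)$.

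Next, off that event $\P^{(1)}_{\tau,\omega}$ and $\P^{\star}$ agree up to block ${B'}^{(1)}$, so the ``correct'' state at $\P^{\star}$'s trimming point is the honest one; if the kept $\S^{\star}$ differs from it, then — because $\S^{\star}$ passed $\mathsf{state\mhyphen verify}(\cdot,\P^{\star})$ at time $\tau$ — $\omega$ lies in the event $\{\exists t:\ \mathsf{state\mhyphen verify}(\S^{(2)}_{t,\omega},\P_{t,\omega})=1\}$ from the definition of $\mathsf{state\mhyphen attacked}$, which has probability $\negl(k)$ by Theorem \ref{th:state_attacked}. Putting the two parts together yields the inclusion $\mathsf{bootstrap\mhyphen attacked}\subseteq\mathsf{trim\mhyphen attacked}\cup\mathsf{state\mhyphen attacked}$, and a union bound gives $\Pr(\mathsf{bootstrap\mhyphen attacked})\le\Pr(\mathsf{trim\mhyphen attacked})+\Pr(\mathsf{state\mhyphen attacked})=\negl(k)$ (indeed, since $\mathsf{trim\mhyphen attacked}\subseteq\mathsf{state\mhyphen attacked}$ already, the bound $\Pr(\mathsf{state\mhyphen attacked})=\negl(k)$ suffices).

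The step I expect to be the main obstacle is the bookkeeping in the third paragraph: making fully precise the assertion that ``the state the joining node ends up with, if wrong, is a witness for $\mathsf{state\mhyphen attacked}$.'' This needs (i) that $\mathsf{Compare}$ returns something at least as good as the honest chain whenever the honest chain is among its inputs — here the hypothesis that some contacted node is honest is essential, as otherwise the joining node has no honest reference and $\mathsf{Compare}$ can be steered arbitrarily — and (ii) a careful reconciliation of trimming points, since $\P^{\star}$ may carry a trimming point later than ${B'}^{(1)}$, so one must show that on the complement of the trim-attack event the state recovered along $\P^{\star}$'s untrimmed tail from any verifying $\S^{\star}$ coincides with the honest state, reusing the Poisson-rate argument behind Theorem \ref{th:state_attacked}. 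Everything else is a direct appeal to the two theorems plus the union bound.
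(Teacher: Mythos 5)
Your proposal is correct and follows essentially the same route as the paper: the paper's own proof is simply the observation that $\mathsf{trim\mhyphen attacked}$ and $\mathsf{state\mhyphen attacked}$ are events quantified over all times, so by Theorems \ref{th:trim_attacked} and \ref{th:state_attacked} (together with the independence of the arrival process) an arriving node sees an attacked chain or state only with probability $\negl(k)$. Your write-up just makes explicit the containment $\mathsf{bootstrap\mhyphen attacked}\subseteq\mathsf{trim\mhyphen attacked}\cup\mathsf{state\mhyphen attacked}$ and the witness bookkeeping that the paper leaves implicit.
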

\ifarxiv
    \begin{proof}
    This is a corollary of Theorems \ref{th:trim_attacked} and \ref{th:state_attacked}. Since, the probability that there exists a time when the system is either $\mathsf{trim\mhyphen attacked}$ or $\mathsf{state\mhyphen attacked}$ is very small, the probability that an arriving node sees the system attacked in either way is negligible in $k$.
    \end{proof}
\fi    
    
    
    In the next theorem, we provide an upper bound on the cold-storage required to store the trimmed chain. This theorem relies on the assumption that a particular kind of adversary is absent. We further comment on this in Remark \ref{rm:need_optimism}. Additionally, we also note that the following theorem doesn't account for the storage required to store the state at the trimming point. We discuss optimizing the storage for the state using stateless blockchains in Section \ref{sec:stateless_blockchains}.
    
    \begin{theorem}[Optimistic Succinctness]
    Assuming that all the nodes are honest, the cold-storage requirements for a hybrid node to store the trimmed-chain $\P$ is $O(\log^{4} B)$ with high probability in $k$. Here, $B$ is the length of the underlying chain $\C$.
    \label{th:succinctness}
    \end{theorem}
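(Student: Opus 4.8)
The plan is to bound $|\P|$ by splitting the trimmed chain into its untrimmed tail and its $\mu_h-\mu_l+1$ level ranges, bounding each contribution by $\mathrm{polylog}(B)$, and summing. I would work throughout in the all-honest stochastic model, where standard Chernoff-type concentration gives that every segment of $\C$ of length $n$ contains $\Theta(n/2^{\mu})$ level-$\mu$ superblocks whenever $n\gg 2^{\mu}$, and that the $\mathsf{good}_{\delta,g}(\cdot)$ predicate is satisfied by all upchains arising in Algorithm~\ref{algo:trimming} — both failing only with probability $\negl(k)$ (this is the ``$\mathsf{Weight}(\P)\approx|\C|$'' fact already invoked for Theorem~\ref{th:trim_attacked}). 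Either fixing a time $t$, or taking a union bound over all times using the summability of $1/B_t^2$ exactly as in the sketch of Theorem~\ref{th:trim_attacked}, then lets me treat these as deterministic facts.

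First I would dispatch the easy pieces. The untrimmed tail has $|\P\{B':\}|=\Delta(\P)=k'+a\log\!\big(S(\P,0)+|\P\{B':\}|\big)$ blocks; since $W(\P,\mu)=2^{\mu}|\P\{L_f(\mu):L_l(\mu)+1\}\uparrow^{\mu}|\approx L_l(\mu)+1-L_f(\mu)$ in the honest model, $S(\P,0)=\Theta(B)$, so the tail is $O(\log B)$ (the apparent self-reference is harmless since $S(\P,0)$ dominates $\Delta(\P)$). Next, the number of level ranges is $\mu_h+2$; to have ever trimmed to level $\mu$ the interval $[0,B)$ must contain at least $f(\P,\mu)=c\,g(\P,\mu)\ge ck$ level-$\mu$ superblocks, and since each of the $\le B$ blocks is a level-$\mu$ superblock with probability $2^{-\mu}$, a Chernoff bound shows that w.h.p.\ no level above $\log_2 B+O(\log k)$ ever reaches that many superblocks; hence $\mu_h=O(\log B)$ and there are $O(\log B)$ level ranges.

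The heart of the argument — and the step I expect to be the main obstacle — is bounding the number of stored block headers inside a single level range by $\mathrm{polylog}(B)$. I would split this into the level-$\mu$ ``spine'' $\P\{L_f(\mu):L_l(\mu)+1\}\uparrow^{\mu}$ and the lower-level ``connective tissue'' appended by the descent loop of the $\mathsf{trim}$ function (lines 16--21). For the connective tissue, one checks that each descent step adds only $O(f)$ superblocks at each level $\mu'<\mu$ — because $A$ is repeatedly reset to $\alpha[-f]$ and, by concentration, the next $\mu'$-upchain then spans $\Theta(f)$ level-$\mu'$ blocks — and, crucially, that the old connective tissue is discarded (it is not of level $\ge\mu$, so it is dropped by the next $\uparrow^{\mu}$) and rebuilt rather than accumulated; this gives $O(f\mu_h)$ such blocks per level range. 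For the spine I would set up an invariant on the level-range functions, tracking how $L_f,L_l$ are reassigned each time $\mu_h$ grows (a re-trim from genesis) or we re-trim at the current top level (an extension by $\approx Q$ blocks), to show that the level-$\mu$ count of $\P\{L_f(\mu):L_l(\mu)+1\}$ stays $\mathrm{polylog}(B)$: intuitively a range is ``closed off'' once its level-$\mu$ count first exceeds $f$, so the overshoot is controlled by $Q$ and the constant $c'$. Verifying that no level range's length blows up, and that goodness keeps holding so the descent never stalls, is the delicate part.

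Finally I would combine the three bounds: $|\P| = O(\log B)\ (\text{tail}) + O(\log B)\ (\text{level ranges})\times O(\log^{3}B)\ (\text{headers per range}) = O(\log^{4}B)$, with probability $1-\negl(k)$, where the loose per-range estimate absorbs the $f=O(\log B)$ factor. I would close by noting (cf.\ Remark~\ref{rm:need_optimism}) that the all-honest hypothesis is used exactly where the superblock counts concentrate and the $\mathsf{good}_{\delta,g}$ tests pass: an adversary who withholds high-level superblocks, or skews their distribution, could keep a level range un-promotable and growing, which is why succinctness is only claimed optimistically.
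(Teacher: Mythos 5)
Your overall decomposition (untrimmed tail, $O(\log B)$ level ranges via $\mu_h=O(\log B)$, and within each range a level-$\mu$ spine plus lower-level blocks added by the descent loop) is the same as the paper's, and your bounds on the tail, on $\mu_h$, and on the connective tissue ($O(f\mu_h)=O(\log^2 B)$ blocks per range) are sound. However, the step you yourself flag as the main obstacle --- bounding the spine $\P\{L_f(\mu):L_l(\mu)+1\}\uparrow^{\mu}$ --- is left unproven, and it does not require the invariant-tracking of $L_f,L_l$ across re-trims that you sketch. The paper closes it with a short contradiction (Lemma \ref{lm:small_level_ranges}): if a level range ever contained more than $4f(\P,\mu)$ level-$\mu$ superblocks, then with high probability (Lemma \ref{lm:poisson_length_compare}, using the all-honest hypothesis) that stretch would already contain at least $f(\P,\mu)$ level-$(\mu+1)$ superblocks and pass the goodness test, so Algorithm \ref{algo:trimming} would have promoted it to level $\mu+1$. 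This is what pins each range at $O(\log B)$ spine blocks, hence $O(\log^2 B)$ blocks per range and $O(\log^3 B)$ retained headers in total; your proposal instead settles for a loose $O(\log^3 B)$ headers per range, i.e.\ $O(\log^4 B)$ headers overall.

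This looseness becomes a genuine gap in your final accounting, because you count headers but never charge for their size. The theorem bounds cold \emph{storage}, and every retained header carries an interlink with a pointer to the closest ancestor at each level present in the chain, i.e.\ $\Theta(\log B)$ hashes per header. The paper's $O(\log^4 B)$ is exactly $O(\log^3 B)$ headers times $O(\log B)$ per header; under your accounting, multiplying your $O(\log^4 B)$ headers by the interlink size yields $O(\log^5 B)$, which does not establish the statement (and if instead you intend headers to be constant-size, that premise is false for the interlink structure of Section \ref{sec:model}). To repair the argument you need both fixes: tighten the per-range count to $O(\log^2 B)$ via the promotion/contradiction argument above, and then include the $O(\log B)$ interlink factor explicitly in the storage bound.
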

\ifarxiv
    \begin{proof}[Proof Sketch]
    Assume that all relevant quantities concentrate strongly around their means. First, the highest level-range would be $\mu_h = O(\log B)$. This is because, a level $\mu_h$ is $2^{\mu_h}$ times harder to find than a level-0 superblock.
    
    Next, a level range of level-$\mu$ would contain at most $O(\log B)$ $\mu$-superblocks. Because, if there were more, then it could be trimmed to level-$(\mu+1)$. Since we also retain suffixes of length $O(\log B)$ blocks of every lower level, level-range $\mu$ would contain about $O(\log^2 B)$ blocks. Adding across all the $\mu_h$ number of level ranges, the trimmed portion of $\P$ would contain about $O(\log^3 B)$ blocks. And, the untrimmed tail would contain at most $O(\log B)$ blocks.
    
    Since an interlink contains a link to the previous superblock of every level, the size of the interlink would be $O(\log B)$, thus making the size of each block $O(\log B)$. Therefore, the total storage requirement would be $O(\log^4 B)$. These statements are proved more rigorously in appendix \ref{app:succinctness}.
    \end{proof}
\fi
    \begin{remark}[Need of Optimism for Succinctness]
   Since our protocol is an extension of NIPoPoWs, we need to rely on optimism for succinctness as well. In particular, note that when trimming we need to ensure that the level ranges are ``good''. We have also shown that under honest behaviour, the sizes of the different upchains concentrate around the mean, thus leading to good level-ranges with high probability. However, an adversary with a small mining power could launch an attack which hampers the concentrations around the means and thus makes good-superchains more unlikely. This attack is formally described in \cite{kiayias2020}. We note that, just like in NIPoPoWs, the adversary can only hurt the succinctness of our model, but not its security. Although the economics of launching an attack on the hybrid node's succinctness is yet to be studied formally, we venture to guess that such an attack wouldn't be economically viable to the adversary.
   \label{rm:need_optimism}
\end{remark}

    Clearly, the size of the untrimmed tail is a lower bound on the cold-storage required. In the next theorem we prove that the length of the chain tail has to be at least $\log B$ in order to protect a hybrid node against a $\mathsf{trim\mhyphen attack}$. This implies that our protocol is near-optimal in terms of cold storage, since it only requires $polylog(B)$ storage.
    
    \begin{theorem}[Vulnerability of Short Tails]
       Let the blockchain length be $B_t$ at time $t$, and the trimming point be $B'_t$. If at all times $t$, $\Pr\left(B_t-B'_t = o(\log B_t) \right) > \epsilon$, for some $\epsilon>0$, then there exists an adversary with small mining power, $\lambda_a < \lambda_h$, such that $\Pr(\mathsf{trim\mhyphen attacked})=1$.
       \label{th:trimming_attack}
    \end{theorem}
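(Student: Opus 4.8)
\emph{Proof plan.} The plan is to exhibit an explicit minority adversary ($\lambda_a<\lambda_h$) that repeatedly mounts a short private-fork race, and to show that the short-tail hypothesis forces one of these races to succeed almost surely. Write the hypothesis as: there is a nondecreasing $h$ with $h(n)=o(\log n)$ such that $\Pr\bigl(B_t-B'_t\le h(B_t)\bigr)>\epsilon$ for all $t$ (the adversary's strategy may depend on the protocol, in particular on $h$). Fix constants $C\ge 2$ and $\delta'>2\delta$. The adversary chooses attempt times $t_1<t_2<\cdots$ spaced by $\Theta\!\bigl(h(2B_{t_i})/\lambda_h\bigr)$ real-time units, so that the attempts occupy disjoint windows of the block-arrival processes. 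In attempt $i$ it fixes the fork point $j_i:=B_{t_i}-C\lceil h(2B_{t_i})\rceil$ and, starting at time $t_i$, secretly mines an interlinked private chain on top of block $j_i$ with all of its rate $\lambda_a$, abandoning the attempt once the honest chain has grown by $C\lceil h(2B_{t_i})\rceil$ blocks; if before that the private chain has accumulated a $(1+\delta')$-multiplicative surplus of blocks over the honest chain past $j_i$, it broadcasts the private chain at that moment $\sigma_i$. When this happens, the disputed suffixes past the LCA $j_i$ have length $o(\log B_{t_i})$ on both sides, so the $(1\pm\delta)$ ``good superchain'' weight guarantee underlying $\mathsf{Compare}$ applies to both and the $(1+\delta')$ surplus makes $\mathsf{Compare}$ select the adversary's chain; since $j_i=\mathsf{LCA}$, attempt $i$ realises $\mathsf{trim\mhyphen attacked}$ exactly when it reaches a broadcast \emph{and} $j_i<B'(\P^{(1)}_{\sigma_i})$.

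I would then lower-bound the per-attempt success probability. Tracking (honest chain length) $-$ (adversary private-chain length) along the synchronous, hence totally ordered, stream of block-production events yields a $\pm1$ walk that starts at $d_i:=C\lceil h(2B_{t_i})\rceil$, steps $+1$ with probability $\lambda_h/(\lambda_h+\lambda_a)$ and $-1$ with probability $\lambda_a/(\lambda_h+\lambda_a)$; the surplus-and-abandon rule amounts to asking that, among its first $d_i$ up-steps, this walk also takes at least $(1+2\delta')d_i$ down-steps. Honest majority gives the walk upward drift, but by the elementary large-deviation lower bound for a negative-binomial random variable (equivalently, for a Poisson race) this event still has probability at least $\exp(-\Theta(d_i))=\exp(-\Theta(h(2B_{t_i})))=B_{t_i}^{-o(1)}$, using $h(n)=o(\log n)$. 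On this event the broadcast time satisfies $\sigma_i\le t_i+\Theta(d_i/\lambda_h)$, hence $B_{\sigma_i}<2B_{t_i}$, so when additionally the short-tail event holds at time $\sigma_i$ (probability $>\epsilon$) we get $B'(\P^{(1)}_{\sigma_i})\ge B_{\sigma_i}-h(B_{\sigma_i})\ge B_{\sigma_i}-h(2B_{t_i})\ge B_{t_i}-d_i/C>B_{t_i}-d_i=j_i$, i.e. the fork point is before the trimming point, and the honest disputed suffix has length $B_{\sigma_i}-j_i=O(d_i)=o(\log B_{t_i})$ as claimed. Combining, attempt $i$ realises $\mathsf{trim\mhyphen attacked}$ with conditional probability at least $q_i:=\epsilon'\,B_{t_i}^{-o(1)}$ given the history up to $t_i$, for a constant $\epsilon'>0$.

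Finally I would invoke a conditional Borel--Cantelli argument. Since the honest mining rate is constant, $B_t$ grows almost surely linearly in $t$, and each attempt occupies only $o(\log B_{t_i})$ real time, so by real time $T$ the adversary has completed $N(T)=\Omega(T/\log T)$ attempts, all with $B_{t_i}\le B_T=\Theta(T)$; hence $\sum_{i\le N(T)}q_i\ge \epsilon'\,N(T)\,B_T^{-o(1)}\to\infty$ almost surely, polynomial growth of $N(T)$ beating the $B_T^{-o(1)}$ factor. Because attempt $i$'s success is measurable with respect to the block-arrival processes up to $\sigma_i\le t_{i+1}$ and $\Pr(\text{success}_i\mid\mathcal F_{t_i})\ge q_i$, L\'evy's extension of the second Borel--Cantelli lemma gives that infinitely many attempts succeed almost surely, so $\Pr(\mathsf{trim\mhyphen attacked})=1$.

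The hard part is the coupling inside the second paragraph: making precise that on the surplus event the fork point has genuinely fallen behind the (sub-logarithmic) trimming point at the moment of broadcast, and that the disputed suffixes are short enough that the good-superchain estimate controls $\mathsf{Compare}$ to within the chosen $(1+\delta')$ factor --- this is what forces the fork-point deficit to be taken proportional to $h$ (evaluated generously at $2B_{t_i}$) and pins down the constants $C$ and $\delta'$. The remaining ingredients --- the negative-binomial lower-tail estimate, the linear growth $B_t=\Theta(t)$, and the Borel--Cantelli bookkeeping ensuring $\sum_i q_i$ diverges while the attempt windows stay disjoint --- are routine.
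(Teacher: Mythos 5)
Your plan follows essentially the same route as the paper's own proof: a sequence of disjoint private-fork attempts whose length is tied to the sub-logarithmic tail bound, a per-attempt success probability of order $\exp\left(-\Theta(o(\log B))\right)=B^{-o(1)}$ whose sum over attempts diverges, and a second Borel--Cantelli conclusion (the paper forks from the tip with a recursively defined attempt length $\alpha_i\approx 2c_0\log(\text{chain length})$ and uses independence of disjoint windows, while you fork a fixed $C\lceil h\rceil$ blocks behind the tip and invoke L\'evy's conditional version). Your additional bookkeeping --- the $(1+\delta')$ surplus to control $\mathsf{Compare}$ and the explicit use of the $\epsilon$-probability of a short tail at the broadcast time --- only makes explicit steps that the paper's proof treats informally, so the proposal is correct and matches the paper's approach.
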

\ifarxiv
    \begin{proof}[Proof Sketch]
        The full proof is provided in Appendix \ref{app:trimming-attack}. Again, we consider the simple case here where $\P$ contains the same blocks as the underlying chain $\C$, but has a trimming point denoted by $B'$. This proof sketch actually follows a similar logic as the proof sketch of Theorem \ref{th:trim_attacked}.
        
        The adversary performs a sequence of attempts, where the attempts are over disjoint regions of time, and the $i\textsuperscript{th}$ attempt lasts for $\alpha_i$ blocks. By a similar reasoning as used in the proof sketch of Theorem \ref{th:trim_attacked}, the $i\textsuperscript{th}$ attempt succeeds with probability $2^{-\Omega(\alpha_i)}$. $\alpha_i$ can be made just small enough such that the adversary's secret chain still forks away from before the honest node's trimming point, but $2^{-\Omega(\alpha_i)} \geq 1/i$. Thus, the sum of the probabilities of all the attempts, $\sum_i 1/i$, diverges to infinity. From the 2nd Borel-Cantelli Lemma, we can conclude that $\Pr(\mathsf{trim\mhyphen attacked})=1$.
    \end{proof}
\fi  
   
\section{Optimizing State Storage with Stateless Blockchains}
\label{sec:stateless_blockchains}

Our trimming protocol optimizes the amount of cold storage required to represent the PoW in the blockchain.
In this section, we outline how our work can interface with methods optimizing the storage of the \emph{state}. Furthermore, a detailed description is found in App.~\ref{sec:app-stateless_blockchains}. In hybrid nodes, states need to be stored in cold and hot storage.
Along with the length of the blockchain, the size of the blockchain-state is rapidly increasing with time.
For instance, at the time of writing Bitcoin's UTXO state is almost 4GB in size, which motivates lessening high storage and verification time by extending our work to optimize both the state's cold-storage at the trimming point, and its hot-storage at the chain tip.

We described that storing state-commitments in the blocks enables one to securely establish the state of a trimmed chain, by use of the $\mathsf{state\mhyphen verify}$ protocol described in Algorithm \ref{algo:state-verify}.
In a different line of work, called \emph{stateless blockchains}, state-commitments  are used to reduce the amount of hot-storage required for transaction validation, thus speeding up validation. We claim that when hybrid nodes are used in stateless-blockchains, then state commitments can serve a dual purpose: 1) they can be used to establish the state of the trimmed chain; 2) and, they can be used to perform stateless transaction validation.

Stateless validation, first proposed by Todd~\cite{todd}, is a scheme where nodes validating transactions store a short cryptographic state-commitment rather than the entire state. A client then provides a membership proof that the node can verify against the commitment. Recently, several constructions have been developed to perform stateless validation in both the UTXO and account-based state models using various cryptographic primitives \cite{chepurnoy2018,dryja2019,boneh2019, tomescu2020, reyzin2017, agrawal2020, gorbunov2020}.

Stateless blockchains optimize hot-storage by obviating the need to store the state at the tip of the chain in the RAM. At this point, the state at the trimming point is still stored in the cold storage. We can use this to provide an interface from trimming to stateless blockchains: First, if the proof for some client becomes outdated past the trimming point because it was offline for a long time, then the client can recompute its latest proof by querying a hybrid node for $\S(B'_t)$ and the untrimmed chain tail. Second, if a fork in the chain makes the proofs of several clients invalid, they can all recompute their proofs by contacting the hybrid node similarly.
In App.~\ref{sec:app-stateless_blockchains}, we describe a way to optimize the cold-storage of the state at the trimming point as well.

\section{Future Directions}
\label{app:future}

In this paper, we presented hybrid nodes which use trimming to optimize the cold-storage required to represent the PoW in the chain. When used in conjunction with stateless blockchains, we illustrated that hybrid nodes are optimized both in terms of cold and hot storage. In this section, we lay out some directions for future work.

We assume that there is honest majority, $\lambda_h > \lambda_a$, throughout the execution of the blockchain protocol. Given that we use novel security models, it will be interesting to study if it is economically viable for an adversary to acquire massive resources in order to attain a majority mining power temporarily, and launch an attack such as a $\mathsf{trim-attack}$. Similarly, it will be interesting to study the economics of an adversary who simply attempts to hurt the succinctness of hybrid nodes. In hurting the succinctness, fewer nodes would join as hybrid nodes. Thus, an adversary could attempt to take advantage of the reduced decentralization.

We have proved in Theorem \ref{th:trimming_attack}, that our protocol is near-optimal in terms of the storage it requires to represent a blockchain's PoW if it is to be secure against a $\mathsf{trim-attack}$. However, it is at the moment unclear if further storage optimization in terms of storing the state is possible for a  node that needs to have all the functionalities of a hybrid node (without assuming properties on the clients as we do in Section \ref{sec:stateless_blockchains}).

In our analysis, we assume that the block $\mathtt{id}$'s have a constant difficulty target throughout the blockchain's execution. It may be useful to relax this assumption in order to make our model closer to practical systems.

Finally, our work in this paper has mainly focused on theoretical analysis. An interesting line of work would be to implement the protocol and study the practical gains of using hybrid nodes.

\printbibliography

\newpage

\appendix

\section{Other Related Works}
\label{app:other_work}

\paragraph{Coding Based Storage Reduction}
Several recent works have employed erasure-codes to reduce cold storage costs in blockchains \cite{perard2018, li2021polyshard, kadhe2019sef, raman2017}. The core idea in these works is that each node stores a coded version of a random subset of the blockchain. A new joining node can then download the coded fragments from several nodes and reconstruct the entire blockchain. Although this does lead to a significant reduction in storage requirements, the storage in these methods has an information theoretic lower bound of $\Omega(B/N)$, where $B$ is the length of the blockchain and $N$ is the number of nodes \cite{li2021polyshard}. In our proposed method, the cold storage cost scales \emph{poly-logarithmically} in the length of the blockchain.

\paragraph{Incrementally Verifiable Computation} Incrementally verifiable computation (IVC) is an orthogonal approach, initially proposed for general settings by Valiant \cite{valiant2008incrementally}. IVC's provide a succinct argument of knowledge that some state is correctly computed from the initial state of computation (and all preceding states of computation) by performing a recursive composition of state transitions.
In the context of blockchains, \cite{kattis2020proof,bonneau2020coda, chen2020reducing} employ IVC to reduce state verification time.
For the sake of comparison, we note that these methods are based on generic techniques, like applying a Succinct Non-interactive ARgument of Knowledge (SNARK) on a suitable compliance predicate and recursively composing such SNARKs, which result in concretely high overhead costs for the prover. Furthermore, in \cite{kattis2020proof}, which is most directly related to our work amongst the three, the authors acknowledge that their approach does not lessen storage requirements of mining nodes. In contrast, the hybrid nodes which have mining capability and more.

\paragraph{Lightweight Clients with Sublinear Complexity} Recall lightweight clients only seek to verify payment proofs. Apart from NIPoPoW, another popular method that has sublinear complexity for lightweight clients is \emph{FlyClient} \cite{bunz2020}. It uses probabilistic block sampling and a novel commitment scheme called merkle-mountain-ranges. FlyClient is more versatile that NIPoPoW's in that it doesn't require optimism for succinctness, and it allows for varying block difficulties. Unfortunately, at the moment, we don't believe the FlyClient protocol can be used for hybrid nodes. In addition to theoretical works, there have been recent works that have also studied practical aspects of NIPoPoW and FlyClient \cite{karantias2020, Daveas2020, kiayias2020velvet}.

\section{Additional Model Details}
\label{app:model_supp}

\subsection{Communication Model}
Several communication models are considered in the literature. The simplest is the \emph{synchronous} model where a block broadcast by a node at a certain time is received by all other nodes immediately \cite{garay2015, eyal2014}. Since time is continuous, no more than one block is mined at any given time, implying only one block could be in communication at any given time. 
More complicated communication models with communication delays are also considered in the literature \cite{pass2017, sankagiri2021}. For the sake of simplicity, we consider the synchronous model in this paper and leave it to future work to transfer our results to more complicated communication models.

\subsection{Additional Notation}

We define the notion of a \emph{downchain} slightly different from \cite{kiayias2016}. Let $\C' \subseteq \C\uparrow^{\mu}$ be a level-$\mu$ superchain. Let $\C'[0] = \C\uparrow^{\mu}[j]$. Then, the corresponding downchain is defined as, 

\begin{equation}\label{eq:downchain}
    \C'\downarrow \triangleq \begin{cases}
        \C[\C\uparrow^{\mu}[j-1]+1:\C'[-1]], & \text{if $j>0$}, \\
        \C[:\C'[-1]], & \text{if $j=0$}.
    \end{cases}
\end{equation}

\subsection{Stochastic Model for the Blockchain. }
\label{ssec:stochastic_model}
In order to study the behavior of the blockchain, we consider a stochastic model for the blockchain which has been used previously in the literature to study blockchain security \cite{dembo2020,gavzi2020, sompolinsky2016bitcoin}. The stochastic model allows us to abstract away the cryptographic details of PoW mining and focus specifically on the dynamics of the blockchain growth. In particular, we model PoW mining as calls to an ideal random functionality. At each query, independent of everything else, the ideal functionality returns ``success'' with a small probability $p$ and ``failure'' otherwise. After describing our protocol and its analysis, we interpret our results under popular cryptographic models such as the model used by Pass., et al \cite{pass2017} and the Bitcoin Backbone Model \cite{kiayias2016} in Section \ref{ssec:pass}. 

Since we use a continuous time model, we represent mining as a Poisson process as has been used in previous works \cite{gavzi2020, dembo2020}. Let $\Omega$ be the sample-space, and let $\omega \in \Omega$ denote an outcome (also called fixed randomness). Let the overall mining process of honest nodes be a Poisson process of rate $\lambda_h$ and let the adversary's mining be a Poisson process of rate $\lambda_a$. The honest and adversarial processes are independent of each other. Honest blocks add their mined blocks to the tip of their longest chains and immediately broadcast the block. However, the adversary could add their mined block at any point on the chain. Moreover, the adversary may choose to keep their mined blocks secret and broadcast them at any future time of their choice. Once a block is broadcast, all the honest nodes receive it immediately. Honest nodes are said to have a \emph{majority} if $\lambda_h > \lambda_a$.

Observe that a level-$\mu$ superblock is $2^\mu$ times harder to find than a regular block. In other words, a regular block is a level-$\mu$ superblock with probability $2^{-\mu}$. Therefore, the arrival process for honest (adversarial) level-$\mu$ superblocks is a Poisson process of rate $\lambda_h/2^{\mu}$ \hspace{.2cm}( $\lambda_a/2^{\mu}$). Here as well, the honest and adversarial arrival processes are independent of each other. However, we note that the arrival processes for honest (adversarial) level-$\mu_1$ and level-$\mu_2$ superblocks are not independent even if $\mu_1 \neq \mu_2$. This is because a level-$\mu_1$ superblock is also a level-$\mu_2$ superblock if $\mu_1 \geq \mu_2$.

Next, we also capture the state-commitment process using the ideal random functionality. Recall that we include the Merkle-root $y(\S)$ of the $\S$ at that point in the block header. An adversary could attempt to produce a malicious state, $\S' \neq \S$, which also produces the same Merkle-root $y(\S)$. During this attempt, the adversary would query the tuple $(\S', y(\S))$ to the random functionality and receive ``success'' with a very small probability $q$ and ``failure'' otherwise, independently of everything else. If the adversary receives a success, we say $\S' \equiv \S$. Here, $q$ can be assumed to be very small compared to the block success probability $p$. This is because a block requires only the first $T$ bits of the hash-string to be $0$, whereas in the case of the malicious state, the entire hash-string has to match the original commitment $y$. In other words, the adversary must find a \emph{collision} on the hash function used in the Merkle tree, which is assumed to occur only with negligible probability with respect to $p$.
Assume that the adversary tries to find a malicious state at the tip of the chain at some time $t$. By time $s\geq t$, let $N_t(s)$ denote the number of subsequent blocks' states for which the adversary's malicious state sequence is equivalent to, i.e.,
\begin{equation*}
    F^n(\S'(\C_t)) \equiv F^n(\S(\C_t)) \quad ,\text{for all } 0 \leq n < N_t(s).
\end{equation*}
Then, $N_t(s)$ is a Poisson process of rate $\lambda_s$, where $\lambda_s << \lambda_a$ because $q << p$. 

\section{Our Protocols: Further Details}
\label{app:protocol}

The notion of a superchain's weight being approximately equal to the PoW of the corresponding downchain was formalized as \emph{superquality} in \cite{kiayias2020}. Adding to this idea, we define a notion of a dominant superchain below.

\subsubsection*{Good Superchain} First, we require that every sufficiently large suffix of the upchain weighs at least $(1-\delta)$ times the amount of PoW of its corresponding downchain.

\begin{definition}[Superquality]
Let $\C$ be a blockchain, and $\C\uparrow^\mu$ be its level-$\mu$ upchain. $C\uparrow^{\mu}$ is said to have $(\delta, g)$ superquality, for $\delta > 0$ and $g \in \mathbb{N}$, if for all $g' \geq g$, the following is true:
\begin{align*}
    \lvert \C\uparrow^{\mu}[-g':]\rvert
    \geq (1-\delta)2^{-\mu} \lvert \C\uparrow^{\mu}[-g':]\downarrow \rvert.
\end{align*}\label{def:superchain_quality}
\end{definition}
Next, we define a dominant superchain.

    \begin{definition}[Dominant Superchain]
Consider a \\
$\mu$-superchain $\C'$ of an underlying chain, $\C = \C'\downarrow$, and parameter $g \in \mathbb{N}$. $\C'$ is said to be a dominant superchain with parameter $g$ if the following is true. Let $\P' \subseteq \C$ be any trimmed chain  with associated level range functions $L'_f$ and $L'_l$. Let, $\mu_1 \leq \mu' \leq \mu$, and $ L'_f(\mu') \leq l \leq L'_l(\mu')$. Then,
\begin{equation}
\begin{split}
    2^{\mu_1}\lvert \P'\{l:&L'_l(\mu')+1\}\uparrow^{\mu_1} \rvert \\
    + &\sum_{\mu'' < \mu'} 2^{\mu''}\lvert \P'\{L'_f(\mu''):L'_l(\mu'')+1]\uparrow^{\mu''} \rvert \geq 2^{\mu}g,
    \end{split}
    \label{eq:dom_supchain_2}
\end{equation}
implies that,
\begin{equation*}
    \lvert \P'\uparrow^{\mu} \rvert \geq 1.
\end{equation*}
\label{def:dominant_superchain}
\end{definition}

As intuition, first note that $\C'$ and $\P'$ have the same underlying downchain. We are interested in the weight of $\P'$ to the right of a block $l$ in level range $\mu'$. For reasons that will be clear from our trimming and chain-compare protocols (Algorithms \ref{algo:trimming} and \ref{algo:compare}), we allow weighing the first level range $\mu'$ using a different level $\mu_1 \leq \mu'$. Then, the dominant-superchain property says that if $\P'$ has a weight that is large enough, then it contains at least 1 level-$\mu$ superblock.

A superchain is a \emph{good-superchain} if it has superchain quality and is a dominant superchain.

\begin{definition}[Good Superchain]
A $\mu$-superchain $\C'$ of an underlying chain $\C = \C'\downarrow$, is a good-superchain, denoted $\mathtt{good}_{\delta, g}(\C', \mu)$, if it satisfies superquality with parameters $(\delta, g)$ and is a dominant superchain with parameter $g$.
\end{definition}

Intuitively, if all the respective quantities above are equal to their mean values, then $\C\uparrow^{\mu}$ is a good superchain. As long as an adversary does not try to violate the concentration around the means, a superchain is a good superchain with high probability. The proof for superquality can be found in \cite{kiayias2020}. The proof for dominant superchain property can be found in Appendix \ref{app:dominant_superchain}.
    
\subsection{Trimming} 
With the notion of a good superchain defined, our trimming protocol description from Section \ref{sec:trimming} can be completed by specifying the parameter functions used in Algorithm \ref{algo:trimming}. We next describe our trimming protocol. The protocol has security parameters $k, k' \in \mathbb{N}$, $a, c \in \mathbb{R}^+$, and $\delta \in (0,1)$. The length of the untrimmed tail that is left untouched is dictated by the function $\Delta(\P)$. If we denote $B'$ to be the old trimming point, then let\footnote{Unless otherwise mentioned, logarithms are to the base $e$.},
\begin{equation}
    \Delta(\P) = k' + a \log \left( S(\P, 0) + \lvert \P\{B':\} \rvert \right).
    \label{eq:Delta}
\end{equation}

When $\P\{:B'\}$ is trimmed to a certain level $\mu$, we first ensure that the relevant upchain is a good superchain. The parameter $g$ used here for level $\mu$ is given by, 
\begin{equation}
        g(\P,\mu) = k + a\log S(\P,\mu).
        \label{eq:g}
\end{equation}

Apart from requiring that the $\mu$-upchain is a good-superchain, we require that there are ``enough" number of superblocks of that level $\mu$. This is given by
\begin{equation}
        f(\P,\mu) = c \cdot g(\P, \mu)
        \label{eq:f}
\end{equation}

We require the functions $\Delta, g$ and $f$ to depend on the logarithm of the estimated PoW to obtain strong concentration bounds, which achieves security properties. As far as we know, this notion was first used to obtain strong security for blockchains in \cite{sompolinsky2016bitcoin} where the authors call it ``logarithmic waiting time''.

 \subsection{Comparing Work in trimmed Chains}
\label{ssec:compare}
Hybrid nodes need to have a protocol, $\mathsf{Compare}(\C^{(1)}, \C^{(2)})$, to chose the main chain given two conflicting chains, $\C^{(1)}$ and $\C^{(2)}$. Full nodes (that store the entire blockchain) simply choose the longer of the two chains as the main chain. The chain selection protocol for hybrid nodes is a little more complicated since they do not store the entire chain. At a high level, they use the sum of the cumulative weight, $S(\P,0)$, of the trimmed portion of a chain and the length of untrimmed section of the chain as a proxy for the chain length. Next, we explain this algorithm more formally.

Let $\P^{(1)}$ and $\P^{(2)}$ be two trimmed chains with underlying (complete) chains $\C^{(1)}$ and $\C^{(2)}$. We need a protocol, $\mathsf{Compare}(\P^{(1)}, \P^{(2)})$, that outputs $\P^{(1)}$ if $\lvert \C^{(1)} \rvert > \lvert \C^{(2)} \rvert $, otherwise it outputs $\P^{(2)}$. We allow error with small probability given by the security definitions in Section \ref{sec:security-statements}.
    
The protocol is shown in Algorithm \ref{algo:compare}  and we summarize it here. First, only the parts of the chains after their LCA block need to be compared (line 1). The algorithm assigns weights to each chain, denoted $\mathsf{Weight}(\P)$, and declares the chain with the higher weight the winner (line 5). The weight is assigned to (the non-common part of) a trimmed chain as follows. If the chain doesn't have a long enough untrimmed tail, then its weight is set to 0 (line 8). If the LCA block $b$ is after the trimming point, then the weight is set to the number of blocks after $b$ (lines 11 and 12). Otherwise, the level range in which $b$ lies is computed (line 14). For the first level range, a set $M$ of levels is formed with enough number of superblocks of that level (line 16). Then, the highest PoW among the upchains in set $M$ is considered (line 17). For all subsequent level-ranges, the PoW corresponding their level is added (lines 18 to 20). At the end, the PoW of the untrimmed-tail is added and returned (line 22).

    \begin{algorithm}[t]
        \DontPrintSemicolon
        \SetKwInOut{Input}{input~}
        \SetKwInOut{Local}{Local variables~}
        \SetKwInOut{Output}{output~}
        \DontPrintSemicolon
        
        \Input{
            $\P^{(1)}, L_f^{(1)}, L_l^{(1)}, {B'}^{(1)}$ :: Chain 1\\
            $\P^{(2)}, L_f^{(2)}, L_l^{(2)}, {B'}^{(2)}$ :: Chain 2\\
        }
        
        \vspace{\baselineskip}
          
        \SetKwFunction{FWeight}{Weight}
        \SetKwProg{Fn}{func}{:}{}
        
        \SetKwProg{On}{on event}{:}{}

            $b = \mathsf{LCA}(\P^{(1)},\P^{(2)})$ \;
            $W^{(1)} = \FWeight(\P^{(1)}, L_f^{(1)}, L_l^{(1)}, {B'}^{(1)}, b)$ \;
            $W^{(2)} = \FWeight(\P^{(2)}, L_f^{(2)}, L_l^{(2)}, {B'}^{(2)}, b)$ \;
            
            \;
            \KwRet argmax$(W^{(1)}, W^{(2)})$ \;
        
        \;
        
        \Fn{\FWeight{$\P$, $L_f$, $L_l$ $B'$, $b$}}
        {
        
            \If{$\lvert \P\{B':\}  \rvert < \Delta(g(\P, 0)+ \lvert \P\{B':\}  \rvert )$}
            {
                \KwRet $0$ \;
            }
        
            \;
            
            \If{$b \geq B'$}
            {
                \KwRet $\lvert \P\{b:\} \rvert$ \;
            }
        
            \;
        
            $\hat{\mu} \leftarrow$ such that $L_f(\hat{\mu}) \leq b \leq L_l(\hat{\mu})$ \;
            \;

            $M =  \Big\{\mu': \lvert \P\uparrow^{\mu'}\{ b: L_l(\hat{\mu})+1 \} \rvert \geq f(\P, \hat{\mu}) \Big\}$ \;
            $W = \max_{\mu' \in M \cup \{0\}} \Big\{2^{\mu'} \lvert \P\uparrow^{\mu'} \{b:L_l(\hat{\mu})+1\} \rvert \Big\}.$ \;
                
            \For{$\mu$ from $\hat{\mu}-1$ to $0$}
            {
                \If{$\lvert \P\uparrow^{\mu} \{L_f(\mu):L_l(\mu)+1\} \rvert \geq f(\P,\mu)$}
                {
                $W = W + 2^{\mu} \lvert \P\uparrow^{\mu} \{L_f(\mu):L_l(\mu)+1\} \rvert .$ \;
                }
            }
            \;
            \KwRet $W + \lvert \P\{B':\}  \rvert$ \;
        }
            
    \caption{$\mathsf{Compare}(\cdot, \cdot)$ Protocol to compare work in trimmed chains}
    \label{algo:compare}
    \end{algorithm} 
    
    The $\mathsf{Compare}(\cdot,\cdot)$ algorithm is used in two ways. First, by hybrid-nodes to compute the main chain when they receive another (trimmed) chain which is not the same as their trimmed chain. Second, by new hybrid-nodes to bootstrap into the system. The new node receives (trimmed) chains from other nodes, and then calls  $\mathsf{Compare}(\cdot, \cdot)$ on pairs of chains, until it is left with 1 winner.
    
    
    
    
    
    
    

 \subsection{State Verification}
\label{ssec:state_verify}
Similar to CoinPrune \cite{matzutt2020}, a short commitment to the state at the block is stored in
every block. Therefore, a hybrid node can verify the correctness of the blockchain's state by
comparing it to the corresponding state commitment. The protocol is called $\mathsf{State-Verify}$
and is presented with more details in Algorithm \ref{algo:state-verify}, and it is summarized in the
following paragraph.

After a new node adopts a trimmed chain $\P$ with trimming point $B'$, it needs to download the state, $\S(B')$, at its trimming point. Since all blocks to the right of the trimming point are retained, the node can compute all the states corresponding to blocks in the untrimmed tail. In order to verify the state, the node checks whether the Merkle-root of the computed state for each block in the untrimmed tail matches its corresponding Merkle-root $y$ stored within the block-header. If the Merkle-roots do not match at any block in the untrimmed tail, then the algorithm returns False. This is shown in Algorithm \ref{algo:state-verify}.

\begin{algorithm}[H]
        \DontPrintSemicolon
        \SetKwInOut{Input}{input~}
        \SetKwInOut{Local}{Local variables~}
        \SetKwInOut{Output}{output~}
        \DontPrintSemicolon
        
        \Input{
            $\P$ :: trimmed chain\\
            $B' ::$ trimming point\\
            $\S(B')$ :: State at trimming point\\
        }
        
        \vspace{\baselineskip}
          
        
        \SetKwProg{On}{on event}{:}{}
        
            \For{$i$ from $0$ to $B-B'$}
            {
                Compute $\S(B'+i)$ \;
                \If{$\mathsf{Merkle-root}(\S(B'+i)) \neq y(B'+i)$}
                {
                    \KwRet False \;
                }
        }
        \KwRet True \;
            
    \caption{$\mathsf{State-Verify}$ Protocol to verify the state of a trimmed chain}
    \label{algo:state-verify}
    \end{algorithm}

\section{Poisson Process Preliminaries}
\label{app:poisson_properties}

$X$ is a Poisson random variable with rate $\lambda$, denoted $X \sim Poisson(\lambda)$, if its probability distribution satisfies,
\begin{equation*}
    \P(X = x) = e^{-\lambda} \frac{\lambda^x}{x!}, \quad \text{for } x \in \mathbb{Z}^+.
\end{equation*}

Let $A(t)$ be a random process indexed by time $t \geq 0$. Denote $A(t_1,t_2) = A(t_2)-A(t_1)$, where $t_2 \geq t_1$. $A$ is a Poisson point process (or simply, Poisson process) of rate $\lambda$, denoted as $A \sim PPP(\lambda)$, if,
\begin{equation*}
   A(t) \sim Poisson(\lambda t), \text{for all } t \geq 0, 
\end{equation*}
and for disjoint intervals $(t_1,t_2)$ and $(t_3, t_4)$,
\begin{equation*}
   A(t_1,t_2) \text{ is independent of } A(t_3, t_4). 
\end{equation*}

Since the range of $A$ is the positive integers, and can be shown to be non-decreasing with time, it can be thought of as a counting process. We also note the fact the time between two successive events in $A$ is called the interarrival time, and it is distributed as an exponential random variable of rate $\lambda$.

First, we recall concentration bounds for a Poisson random variable, without proof. The proof is obtained by applying Chernoff Bound \cite{poissontail}.
\begin{lemma}
 Let $X \sim Poisson(\lambda)$, for some $\lambda > 0$. Then, for any $x>0$,
 \begin{align*}
     \Pr(X \geq \lambda + x) &\leq e^{-\frac{-x^2}{2(\lambda + x)}}, \\
      \Pr(X \leq \lambda - x) &\leq e^{-\frac{-x^2}{2(\lambda + x)}}.
 \end{align*}
 \label{lm:poissontail}
\end{lemma}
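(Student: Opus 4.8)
The plan is to prove both tail bounds by the exponential-moment (Chernoff) method, reducing each to an elementary one-variable inequality, exactly as the cited reference \cite{poissontail} suggests. First I would record the moment generating function of a Poisson variable, $\mathbb{E}[e^{sX}] = \exp\!\left(\lambda(e^s-1)\right)$ for all $s \in \mathbb{R}$, which follows directly by summing the series $\sum_{x\ge 0} e^{sx}\, e^{-\lambda}\lambda^x/x!$.

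For the upper tail I would apply Markov's inequality to $e^{sX}$ with $s>0$: $\Pr(X \ge \lambda+x) \le e^{-s(\lambda+x)}\,\mathbb{E}[e^{sX}] = \exp\!\left(\lambda(e^s-1)-s(\lambda+x)\right)$. Minimizing the exponent over $s>0$ yields the optimizer $e^s = 1+x/\lambda$ (legitimate since $x>0$), giving $\Pr(X\ge\lambda+x) \le \exp\!\left(-\lambda\, h_+(x/\lambda)\right)$ with $h_+(u) = (1+u)\log(1+u)-u$. After the substitution $u = x/\lambda$, and noting $x^2/\bigl(2(\lambda+x)\bigr) = \lambda u^2/\bigl(2(1+u)\bigr)$, the claim reduces to the scalar inequality $h_+(u) \ge \frac{u^2}{2(1+u)}$ for $u>0$.

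For the lower tail I would do the symmetric computation with $e^{-sX}$, $s>0$: $\Pr(X \le \lambda-x) \le \exp\!\left(\lambda(e^{-s}-1)+s(\lambda-x)\right)$, optimized at $e^{-s}=1-x/\lambda$ when $0<x<\lambda$, giving $\Pr(X\le\lambda-x)\le\exp\!\left(-\lambda\, h_-(x/\lambda)\right)$ with $h_-(v)=v+(1-v)\log(1-v)$. Here it suffices to prove the stronger $h_-(v)\ge v^2/2$ and then weaken via $v^2/2 \ge v^2/\bigl(2(1+v)\bigr)$, which already delivers the stated (deliberately non-optimal) denominator $2(\lambda+x)$. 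The degenerate range $x\ge\lambda$ I would dispatch separately: $\Pr(X\le\lambda-x)=0$ when $x>\lambda$, while at $x=\lambda$ one has $\Pr(X\le 0)=e^{-\lambda}$, which respects the bound since $\lambda \ge \lambda/4$.

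The routine but most delicate step is verifying the two scalar inequalities, and I would handle them by monotonicity of an auxiliary function rather than brute Taylor estimation. For $h_+$, setting $g(u)=h_+(u)-\frac{u^2}{2(1+u)}$ gives $g(0)=0$ and $g'(u)=\log(1+u)-\frac{u(u+2)}{2(1+u)^2}$; using the identity $\frac{u(u+2)}{2(1+u)^2}=\tfrac12\bigl(1-(1+u)^{-2}\bigr)$ and substituting $w=1+u$, the required $g'(u)\ge 0$ becomes $\log w \ge \tfrac12-\tfrac{1}{2w^2}$. The function $r(w)=\log w-\tfrac12+\tfrac{1}{2w^2}$ satisfies $r(1)=0$ and $r'(w)=(w^2-1)/w^3\ge 0$ for $w\ge 1$, so $r\ge 0$ and hence $g(u)\ge g(0)=0$. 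For $h_-$, the function $m(v)=h_-(v)-v^2/2$ has $m(0)=0$ and $m'(v)=-\log(1-v)-v\ge 0$ (since $-\log(1-v)\ge v$), so $m\ge 0$. The main obstacle is simply organizing these elementary estimates so that the common target $\exp\!\bigl(-x^2/(2(\lambda+x))\bigr)$ emerges uniformly for both tails; everything else is the standard Chernoff template.
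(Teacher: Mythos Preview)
Your proposal is correct and follows exactly the approach the paper indicates: the paper states the lemma \emph{without proof}, merely remarking that ``the proof is obtained by applying Chernoff Bound \cite{poissontail}.'' Your Chernoff/exponential-moment derivation, together with the elementary verifications of $h_+(u)\ge u^2/(2(1+u))$ and $h_-(v)\ge v^2/2$, is precisely that standard argument carried out in full, so there is nothing to contrast.
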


In the following lemma, we upper bound and lower the time required for $n$ events to occur in a Poisson process.

\begin{lemma}
Let $A \sim PPP(\lambda)$. Let $T$ be a random variable indicating the time taken for $n$ events to occur in the Poisson process. That is, $A(T)=n$. Then, except with probability $e^{-n \frac{{\delta'}^2}{2}}$, $T \leq \frac{n(1+\delta')}{\lambda}$. Similarly, except with probability $e^{-n \frac{{\delta'}^2}{2}}$, $T \geq \frac{n(1-\delta')}{\lambda}$. 
\label{lm:time_bound}
\end{lemma}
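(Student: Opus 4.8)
\textbf{Proof plan for Lemma~\ref{lm:time_bound}.}

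The plan is to reduce both tail bounds on the hitting time $T$ to the already-stated Poisson concentration inequalities in Lemma~\ref{lm:poissontail}, using the standard duality between a Poisson point process and its counting function. The key observation is the event equivalence
\[
    \{T \le s\} = \{A(s) \ge n\}, \qquad \{T \ge s\} = \{A(s) \le n-1\} \subseteq \{A(s) \le n\},
\]
which holds because $A$ is a non-decreasing integer-valued counting process that increments by one at each arrival, so the $n$-th arrival has occurred by time $s$ if and only if at least $n$ events have been counted by then. Thus to show $T \le \frac{n(1+\delta')}{\lambda}$ with high probability it suffices to show $A\!\left(\frac{n(1+\delta')}{\lambda}\right) \ge n$ with high probability, and symmetrically for the lower bound on $T$.

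For the first claim, set $s = \frac{n(1+\delta')}{\lambda}$, so $A(s) \sim \mathrm{Poisson}(\lambda s) = \mathrm{Poisson}(n(1+\delta'))$. The complementary (bad) event is $\{T > s\} = \{A(s) \le n-1\} \subseteq \{A(s) \le n\}$. Writing $\Lambda = n(1+\delta')$ and $x = \Lambda - n = n\delta' > 0$, the lower-tail bound of Lemma~\ref{lm:poissontail} gives
\[
    \Pr(A(s) \le n) = \Pr(A(s) \le \Lambda - x) \le \exp\!\left(-\frac{x^2}{2(\Lambda + x)}\right)
    = \exp\!\left(-\frac{n^2{\delta'}^2}{2(n(1+\delta') + n\delta')}\right)
    = \exp\!\left(-\frac{n{\delta'}^2}{2(1+2\delta')}\right).
\]
For the second claim, set $s = \frac{n(1-\delta')}{\lambda}$, so $A(s) \sim \mathrm{Poisson}(n(1-\delta'))$, and the bad event is $\{T < s\} = \{A(s) \ge n\}$. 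Writing $\Lambda = n(1-\delta')$ and $x = n - \Lambda = n\delta' > 0$, the upper-tail bound of Lemma~\ref{lm:poissontail} gives
\[
    \Pr(A(s) \ge n) = \Pr(A(s) \ge \Lambda + x) \le \exp\!\left(-\frac{x^2}{2(\Lambda + x)}\right)
    = \exp\!\left(-\frac{n^2{\delta'}^2}{2(n(1-\delta') + n\delta')}\right)
    = \exp\!\left(-\frac{n{\delta'}^2}{2}\right).
\]

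The only mild gap between what falls out naturally and the stated bound $e^{-n{\delta'}^2/2}$ is in the first (upper) bound on $T$, where the denominator $1+2\delta'$ is slightly larger than $1$; I would absorb this either by noting $\delta' \in (0,1)$ so the exponent is still $\Omega(n{\delta'}^2)$, or by observing that the intended statement is typically read up to such constant factors (alternatively one can tighten by applying the bound $\Pr(A(s)\le n-1)$ with $x = n\delta' + 1$, or simply state the cleaner bound $e^{-n{\delta'}^2/(2(1+2\delta'))} \le e^{-n{\delta'}^2/6}$). The main ``obstacle'' is therefore purely bookkeeping: being careful about the strict-versus-nonstrict inequalities in the event equivalence (the off-by-one in $\{A(s)\le n-1\}$ versus $\{A(s)\le n\}$) and matching the constants in the exponent to the form asserted in the lemma. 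There is no real analytic difficulty once the counting-process duality is invoked.
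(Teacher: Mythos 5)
Your proof is essentially correct, but it takes a genuinely different route from the paper. The paper does not go through the counting-process duality at all: it writes $T$ as a sum of $n$ i.i.d.\ $\exp(\lambda)$ interarrival times and applies a Chernoff bound directly to that sum, obtaining $\Pr\left(T \geq \tfrac{n(1+\delta')}{\lambda}\right) \leq e^{-n(\delta'-\log(1+\delta'))}$ and then relaxing to $e^{-n\delta'^2/2}$ (the lower bound on $T$ is handled symmetrically). Your approach instead inverts the event, $\{T \le s\}=\{A(s)\ge n\}$, and reuses the already-stated Poisson tail bound of Lemma~\ref{lm:poissontail}; this buys you a shorter argument with no new moment-generating-function computation, at the cost of a slightly worse constant in one direction, $e^{-n\delta'^2/(2(1+2\delta'))}$ rather than $e^{-n\delta'^2/2}$, which you correctly flag. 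It is worth noting that this slack is not really a defect of your route alone: the paper's own final relaxation is also loose in that same direction, since $\delta'-\log(1+\delta') = \delta'^2/2-\delta'^3/3+\cdots < \delta'^2/2$ for $\delta'>0$, so the exact constant $1/2$ in the exponent is not literally attained by either derivation for the upper bound on $T$. In both cases the bound is $e^{-\Omega(n\delta'^2)}$, which is all that the downstream lemmas and theorems (which only need $\negl(k)$-type decay) actually use, so your proof is an acceptable alternative; just state the exponent you actually obtain (or note the constant is absorbed into the $\Omega(\cdot)$) rather than asserting the literal $e^{-n\delta'^2/2}$ for that direction.
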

\begin{proof}
Recall that in Poisson point process, interarrival times are i.i.d., exponential. Therefore, for $X_1, \dots, X_n \stackrel{iid}{\sim} exp(\lambda_1)$, $T \stackrel{d}{=} \sum_{i=1}^n X_i$. We can use Chernoff bound on $T$ as follows:

\begin{align*}
    \Pr\left( T \geq \frac{n(1+\delta')}{\lambda_1} \right)
         &\leq e^{-n \sup_{s > 0} \left( \frac{s(1+\delta')}{\lambda_1} + \log (1-s/\lambda_1) \right)}, \\
         &= e^{-n(\delta' - \log(1+\delta'))}, \\
         &\leq e^{-n \frac{{\delta'}^2}{2}}.
\end{align*}
The upper bound is obtained similarly.
\end{proof}

\begin{lemma}
Let $A_1 \sim PPP(\lambda_1)$ and $A_2 \sim PPP(\lambda_2)$ be independent Poisson point processes, where $\lambda_2 < \lambda_1$. 

Assume that $A_1$ observers $n$ events in a certain time range. Then, $A_2$ observes at most $(1+\delta')^2\lambda_2/\lambda_1 n$ events except with probability $2 e^{- \frac{n \lambda_2}{2 \lambda_1} {\delta'}^2}$.

Similarly, assuming that $A_2$ observers $n$ events, $A_1$ observes at most $(1+\delta')^2\lambda_1/\lambda_2 n$ events except with probability $2 e^{- \frac{n}{2} {\delta'}^2}$.
\label{lm:poisson_length_compare}
\end{lemma}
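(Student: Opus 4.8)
\textbf{Proof proposal for Lemma~\ref{lm:poisson_length_compare}.}
The plan is to reduce both statements to an application of Lemma~\ref{lm:time_bound}, which already controls how long a fixed number of events takes in a single Poisson process, together with the Poisson concentration bound of Lemma~\ref{lm:poissontail}. First I would prove the second assertion, since it is the cleaner of the two: suppose $A_2$ observes $n$ events over some time window of (random) length $T$. By Lemma~\ref{lm:time_bound} applied to $A_2 \sim PPP(\lambda_2)$, we have $T \le n(1+\delta')/\lambda_2$ except with probability $e^{-n{\delta'}^2/2}$. Conditioned on the value of $T$ (and using independence of $A_1$ from $A_2$, so that conditioning on an event defined through $A_2$ does not disturb the law of $A_1$ on a disjoint-looking sense — here more carefully, the window is determined by $A_2$ alone, and $A_1$ is independent of $A_2$), the count $A_1$ over that window is $Poisson(\lambda_1 T)$. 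On the good event $T \le n(1+\delta')/\lambda_2$, the mean $\lambda_1 T \le n(1+\delta')\lambda_1/\lambda_2$, and Lemma~\ref{lm:poissontail} with $x = \delta' \cdot n(1+\delta')\lambda_1/\lambda_2$ gives that $A_1$ exceeds $(1+\delta')^2 n\lambda_1/\lambda_2$ only with probability at most $e^{-n{\delta'}^2/2}$ (plugging in, the exponent $x^2/(2(\lambda_1 T + x))$ simplifies to roughly $n{\delta'}^2/2$, possibly after a harmless adjustment of the constant; I would verify the arithmetic here). A union bound over the two failure events yields the claimed $2e^{-n{\delta'}^2/2}$.

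For the first assertion I would argue symmetrically but with the roles reversed: assume $A_1$ observes $n$ events over a window of length $T$; by Lemma~\ref{lm:time_bound} applied to $A_1$, $T \le n(1+\delta')/\lambda_1$ except with probability $e^{-n{\delta'}^2/2}$. Then $A_2$ over that window is $Poisson(\lambda_2 T)$ with mean at most $n(1+\delta')\lambda_2/\lambda_1$ on the good event, and Lemma~\ref{lm:poissontail} bounds the probability that $A_2$ exceeds $(1+\delta')^2 n\lambda_2/\lambda_1$. The only difference from the second case is that the Poisson mean being controlled is now of order $n\lambda_2/\lambda_1$ rather than order $n$, so the deviation parameter $x$ is of order $\delta' n \lambda_2/\lambda_1$ and the resulting exponent is of order $n(\lambda_2/\lambda_1){\delta'}^2/2$ — which matches the stated bound $2e^{-n\lambda_2{\delta'}^2/(2\lambda_1)}$ after the union bound (the first failure term $e^{-n{\delta'}^2/2}$ is dominated by $e^{-n\lambda_2{\delta'}^2/(2\lambda_1)}$ since $\lambda_2 < \lambda_1$, so both can be absorbed into the weaker exponent).

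The main obstacle I anticipate is not conceptual but bookkeeping: making the conditioning argument rigorous. One must be careful that the time window over which we count $A_1$ is a stopping time / measurable function of $A_2$ only, so that independence of the processes lets us treat $A_1$'s increment over that window as an honest $Poisson(\lambda_1 T)$ variable after conditioning on $T$. A clean way to handle this is to fix the deterministic threshold $t^\ast = n(1+\delta')/\lambda_2$ (resp. $n(1+\delta')/\lambda_1$), bound $P(T > t^\ast)$ by Lemma~\ref{lm:time_bound}, and on $\{T \le t^\ast\}$ simply bound $A_1$ over the window by $A_1$ over $[0,t^\ast]$ (monotonicity of the counting process), which is an unconditional $Poisson(\lambda_1 t^\ast)$ — this sidesteps the conditioning subtlety entirely. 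The remaining work is then the routine Chernoff/Poisson-tail estimate, which I would carry out via Lemma~\ref{lm:poissontail}, checking that the constants come out as $(1+\delta')^2$ in the bound on the count and $e^{-\Theta(n{\delta'}^2)}$ in the probability.
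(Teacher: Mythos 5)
Your proposal is correct and follows essentially the same route as the paper: bound the (random) duration of the window via Lemma~\ref{lm:time_bound}, stochastically dominate the other process's count over that window by a Poisson variable with the correspondingly bounded mean, apply Lemma~\ref{lm:poissontail} with deviation $x=\delta'$ times the mean, and union-bound the two failure events (your arithmetic checks out, and your observation that the weaker exponent absorbs the stronger one matches the paper's stated constants). Your remark about fixing the deterministic threshold $t^\ast$ and using monotonicity of the counting process is a slightly more careful justification of the stochastic-domination step that the paper states without elaboration, but it is not a different argument.
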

\begin{proof}

From Lemma \ref{lm:time_bound}, the time required from process $A_1$ to observe $n$ events is upper bounded by $n(1+\delta')/\lambda_1$ except with probability $e^{-n{\delta'}^2/2}$. So, over this range of time, except with probability $e^{-n \frac{{\delta'}^2}{2}}$, $A_2$ is stochastically upper bounded by $Y \sim Poisson\left( n \frac{\lambda_2(1+\delta')}{\lambda_1} \right)$. Now, applying the bound for a Poisson random variable from Lemma \ref{lm:poissontail},

\begin{align*}
    \Pr\left( Y \geq \frac{n \lambda_2}{\lambda_1} (1+\delta')^2 \right) 
    &\leq \exp \left( - \frac{n \left( \frac{\lambda_2(1+\delta')}{\lambda_1} \right)^2 {\delta'}^2}{2 n \frac{\lambda_2}{\lambda_1}(1+\delta')^2} \right), \\
    &= \exp \left( - \frac{n \lambda_2}{2 \lambda_1} {\delta'}^2 \right).
\end{align*}

Therefore, over the time when $A_1$ observes $n$ events, $A_2 \leq (1+\delta')^{2}\lambda_2/\lambda_1 n$ except with probability $2 e^{- \frac{n \lambda_2}{2 \lambda_1} {\delta'}^{2}}$.

The other bound is obtained similarly.
\end{proof}

\begin{lemma}
Let $A_1 \sim PPP(\lambda_1)$ and $A_2 \sim PPP(\lambda_2)$ be independent Poisson point processes, and $A = A_1 + A_2$. Given that over some range of time $A = n$, then $A_1 \leq \frac{n \lambda_1}{\lambda_1 + \lambda_2}(1+\delta')$ except with probability $e^{-\frac{n\lambda_1}{3(\lambda_1 + \lambda_2)}\frac{{\delta'}^{(2)}}{3}}$.
\end{lemma}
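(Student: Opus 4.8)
The plan is to condition on the total count in the time window and invoke the classical thinning (colouring) property of Poisson processes, after which a single multiplicative Chernoff bound finishes the job. Fix the time window, say of length $\tau$, and note that since $A_1$ and $A_2$ are independent, $A_1(\tau) \sim \mathrm{Poisson}(\lambda_1\tau)$, $A_2(\tau) \sim \mathrm{Poisson}(\lambda_2\tau)$, and the superposition satisfies $A(\tau) = A_1(\tau)+A_2(\tau) \sim \mathrm{Poisson}((\lambda_1+\lambda_2)\tau)$. Conditioning on $\{A(\tau)=n\}$, a direct computation with the Poisson pmfs gives
\[
\Pr(A_1(\tau) = k \mid A(\tau) = n) = \frac{\Pr(A_1(\tau)=k)\,\Pr(A_2(\tau)=n-k)}{\Pr(A(\tau)=n)} = \binom{n}{k} p^{k}(1-p)^{n-k},
\]
where $p = \lambda_1/(\lambda_1+\lambda_2)$. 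Hence $A_1 \mid \{A = n\} \sim \mathrm{Binomial}(n,p)$ with mean $\mu = np = n\lambda_1/(\lambda_1+\lambda_2)$, independently of $\tau$; equivalently, this is just the statement that each of the $n$ superposed points is labelled "type $1$" independently with probability $p$.

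Next I would apply a standard multiplicative Chernoff bound to this binomial: for $X \sim \mathrm{Binomial}(n,p)$ and ${\delta'} \in (0,1)$, $\Pr(X \geq (1+{\delta'})\mu) \leq \exp(-\mu {\delta'}^2/3)$. Substituting $\mu = n\lambda_1/(\lambda_1+\lambda_2)$ and (if one wants the exact exponent written in the statement) weakening the constant, this yields
\[
\Pr\!\left(A_1 \geq \frac{n\lambda_1}{\lambda_1+\lambda_2}(1+{\delta'}) \ \Big|\ A = n\right) \leq \exp\!\left(-\frac{n\lambda_1}{\lambda_1+\lambda_2}\cdot\frac{{\delta'}^2}{3}\right).
\]
Since this conditional bound holds for every value of $n$, it holds verbatim over the given range of time on which $A = n$, which is exactly the claim.

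There is essentially no hard step here; this lemma plays the same role as Lemma~\ref{lm:time_bound} and Lemma~\ref{lm:poisson_length_compare} and is strictly simpler, because conditioning on the total count collapses the two processes into a single binomial, so no interarrival-time argument is needed. The only points requiring care are (i) justifying the conditional binomial — either cite it as the standard Poisson splitting fact or derive it from the pmfs as above — and (ii) matching the constant in the exponent to whichever Chernoff inequality is invoked; since a smaller constant only weakens the tail bound, the displayed estimate is safe.
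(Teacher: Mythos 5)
Your proposal is correct and matches the paper's own proof: the paper likewise observes that $(A_1 \mid A = n) \sim \mathrm{Bin}\bigl(n, \lambda_1/(\lambda_1+\lambda_2)\bigr)$ and then invokes the standard multiplicative Chernoff bound for binomials, exactly as you do (you merely spell out the pmf computation and note that your exponent $\delta'^2/3$ is stronger than, hence implies, the constant stated in the lemma).
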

\begin{proof}
Given $A$, $A_1$ behaves like a binomial random variable: $(A_1 \big\vert A = n) \sim \mathtt{Bin}\left( n, \frac{\lambda_1}{\lambda_1 + \lambda_2} \right)$. Therefore, the upper bound on $A_1$ is obtained by using the standard Chernoff bound for binomial random variables.
\end{proof}


\section{Security Proofs}
\label{app:security-proof}

Let's recall the notation. $\P^{(1)}$ is the honest trimmed chain and $\P^{(2)}$ is the adversarial chain. The underlying chains be $\C^{(1)}$ and $\C^{(2)}$ respectively. Their respective trimming points are ${B'}^{(1)}(t)$ and ${B'}^{(2)}(t)$, and their respective chain lengths are $B^{(1)}$ and $B^{(2)}$. Let $b = \mathsf{LCA}(\P^{(1)}, \P^{(2)})$. And, $b' = \mathsf{LCA}(\C^{(1)}, \C^{(2)})$. Trivially, $b' \geq b$. Observe that $b'$ and $b$ may not be equal because the trimmed chains may skip blocks and thus have an earlier LCA block.

$L^{(1)}(\cdot)$ is the level range function for $\P^{(1)}$, and $L^{(2)}$ for $\P^{(2)}$. Denote,

\begin{align*}
    \mu^{(1)} \text{ such that } L^{(1)}(\mu^{(1)}+1) < b \leq L^{(1)}(\mu^{(1)}), \\
    \mu^{(2)} \text{ such that } L^{(2)}(\mu^{(2)}+1) < b \leq L^{(2)}(\mu^{(2)}).
\end{align*}

From the method by which $\mathsf{Weight}()$ function in Algorithm \ref{algo:compare} chooses levels, for the level ranges of $\mu^{(1)}$ and $\mu^{(2)}$, the $\mathsf{Weight}()$ function might use levels $\hat{\mu}^{(1)}\leq \mu^{(1)}$ and $\hat{\mu}^{(2)}\leq \mu^{(2)}$ respectively (see line 16 and 17 of Algorithm \ref{algo:compare}). For subsequent level ranges, the corresponding levels are used by $\mathsf{Weight()}$ (see line 20).

First, we bound the gap between the LCA of the pruned chains, $b$, and the LCA of the underlying chains, $b'$.

\begin{lemma}\label{lm:bounding_adversary_work}
Denote, 
\begin{align*}
    M = \max &\Bigg\{\; 2^{\hat{\mu}^{(1)}} g\Big(\P^{(1)}, \mu^{(1)} \Big), \quad
    2^{\mu^{(1)}-1} g\Big(\P^{(1)},\mu^{(1)}-1\Big)   \;\Bigg\}.
\end{align*}
$M$ essentially indicates the maximum of PoW of the first and second level ranges of the honest trimmed chain.

Let $b'$ be in the ${\mu'}^{(2)}$-level range in $\P^{(2)}$. That is,
\begin{equation*}
    L^{(2)}({\mu'}^{(2)}+1) < b' \leq L^{(2)}({\mu'}^{(2)}).
\end{equation*}
If ${\mu'}^{(2)} = \mu^{(2)}$, then, 
\begin{equation*}
    2^{\hat{\mu}^{(2)}}\lvert \P^{(2)}\{b+1:b'+1\}\uparrow^{\hat{\mu}^{(2)}} \rvert \leq M.
\end{equation*}
Otherwise, 
\begin{align*}
    2^{\hat{\mu}^{(2)}}&\lvert \P^{(2)}\{b+1:L(\mu^{(2)})+1\}\uparrow^{\hat{\mu}^{(2)}} \rvert  \\
    &+\sum_{\mu^{(2)} > \mu > {\mu'}^{(2)}} 2^{\mu} W^{(2)}(\P^{(2)},\mu)  \\
    &+2^{{\mu'}^{(2)}}\lvert \P\{L({\mu'}^{(2)})+1:b'+1\}\uparrow^{{\mu'}^{(2)}} \rvert \leq M.
\end{align*}
\end{lemma}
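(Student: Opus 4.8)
The approach is to exploit that the segment $\{b+1,\dots,b'\}$ sits entirely inside the common prefix $\C^{(1)}[:b'+1]=\C^{(2)}[:b'+1]$. Every block there is a genuine block of \emph{both} underlying chains, so its hash, its superblock level and its interlink are fixed by $\omega$ and are identical whether read off $\C^{(1)}$ or $\C^{(2)}$; in particular the multiset of level-$\nu$ superblocks lying in $\{b+1,\dots,b'\}$ is fixed, and the adversary's only freedom is which of them $\P^{(2)}$ retains. Hence for every level $\nu$, the count $\lvert\P^{(2)}\{b+1:b'+1\}\uparrow^{\nu}\rvert$ that the $\mathsf{Weight}$ routine can credit is at most the true number of level-$\nu$ superblocks in $\{b+1,\dots,b'\}$, i.e.\ at most $\lvert\C^{(1)}\{b+1:b'+1\}\uparrow^{\nu}\rvert$. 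The plan is then to bound these true counts using the structure of the \emph{honest} chain: since $\P^{(1)}$ was produced by Algorithm \ref{algo:trimming}, each of its level-range upchains was certified $\mathsf{good}_{\delta,g}$, hence satisfies superquality (Definition \ref{def:superchain_quality}) and the dominant-superchain property (Definition \ref{def:dominant_superchain}); and since the $\mathsf{trim}$ routine retains the \emph{entire} level-$\mu$ upchain over each level range $\mu$, the honest chain's retained level-$\mu^{(1)}$ and level-$(\mu^{(1)}-1)$ superblocks over its first two relevant level ranges coincide with the true superblocks of $\C^{(1)}$ there.

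Next I would locate $\{b+1,\dots,b'\}$ relative to the honest level-range structure. Since $b$ lies in honest level range $\mu^{(1)}$, the segment begins as a suffix of that range, then covers at most all of range $\mu^{(1)}-1$, and so on. The crux is to show it cannot spill past the end of range $\mu^{(1)}-1$: superquality bounds the underlying length of a honest level range $\mu$ by roughly $2^{\mu}$ times its retained level-$\mu$ superblock count, which is itself $\Theta\!\big(g(\P^{(1)},\mu)\big)$ (at least $f=c\,g$ by the trimming condition on line~6 of Algorithm \ref{algo:trimming}, and not much more, else it would have been re-trimmed to level $\mu+1$); so if $b'$ overshot range $\mu^{(1)}-1$, then viewing $\P^{(2)}$ restricted to that downchain as a trimmed chain, the dominant-superchain property would force $\P^{(2)}$ to retain a level-$\mu^{(1)}$ superblock strictly between $b$ and the end of that range --- a genuine common block that $\P^{(1)}$ also retains, which via the interlink structure makes the common prefix of $\P^{(1)}$ and $\P^{(2)}$ extend past $b$, contradicting $b=\mathsf{LCA}(\P^{(1)},\P^{(2)})$. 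With the segment confined, the two cases of the statement follow from the counting bound: if $b'$ is still in $\P^{(2)}$'s level range $\mu^{(2)}$, then $\mathsf{Weight}$ credits the whole segment at the single level $\hat\mu^{(2)}$, and $2^{\hat\mu^{(2)}}\lvert\P^{(2)}\{b+1:b'+1\}\uparrow^{\hat\mu^{(2)}}\rvert$ is at most the larger of the two honest-range weights, namely $M$; if $b'$ falls in a lower range ${\mu'}^{(2)}<\mu^{(2)}$, the credit is the telescoping sum in the statement, and the same per-term superblock count together with the dominant-superchain bound applied to each level in turn sums to at most $M$.

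The main obstacle I anticipate is the bookkeeping around the level-range endpoints combined with matching the $\mathsf{good}$-parameter: Definition \ref{def:dominant_superchain} is stated for a trimmed chain of a \emph{fixed} good superchain with a \emph{fixed} parameter $g$, so one must carefully identify $\P^{(2)}$'s restriction to $\{b+1,\dots,b'\}$ with such a trimmed chain of the honest range-$\mu^{(1)}$ (and range-$(\mu^{(1)}-1)$) downchains and check that the governing parameter is exactly $g(\P^{(1)},\mu^{(1)})$ (resp.\ $g(\P^{(1)},\mu^{(1)}-1)$) --- which is precisely why $M$ is phrased in those terms rather than in terms of the actual retained counts. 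A secondary subtlety is that $g$ and $f$ depend on the estimated work $S(\P,\cdot)$, so comparing quantities across $\P^{(1)}$ and $\P^{(2)}$ requires the two chains' work estimates to be within constant factors; this is available from the concentration underlying Theorem \ref{th:trim_attacked} and Corollary \ref{corr:congruence}, or, in the regime of this lemma, simply from the fact that both chains share the prefix up to $b'$.
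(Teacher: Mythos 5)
Your central tool is the right one, but the way you finish does not deliver the stated bound. The lemma is a \emph{deterministic} consequence of the goodness checks built into Algorithm \ref{algo:trimming}: take as the good superchain the honest level-$\mu^{(1)}$ (resp.\ level-$(\mu^{(1)}-1)$) upchain over its level range, with parameter $g(\P^{(1)},\mu^{(1)})$ (resp.\ $g(\P^{(1)},\mu^{(1)}-1)$), and take as the trimmed chain $\P'$ of Definition \ref{def:dominant_superchain} the blocks of $\P^{(2)}$ lying in $(b,b']$ (these sit in the common prefix, as you note). The weight expressions appearing in the two cases of the lemma are exactly the left-hand side of \eqref{eq:dom_supchain_2} with $l$ just after $b$ and $\mu_1=\hat\mu^{(2)}$. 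If that expression reached $M$, the dominant-superchain property would force $\P^{(2)}$ to retain a level-$\mu^{(1)}$ (or level-$(\mu^{(1)}-1)$) superblock strictly after $b$; since $\P^{(1)}$ retains its \emph{entire} upchain of that level over the corresponding range, this would be a common retained block beyond $b$, contradicting $b=\mathsf{LCA}(\P^{(1)},\P^{(2)})$. That contrapositive is the whole proof, and it is why $M$ is phrased in terms of $g$: $g$ is the dominance threshold.

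You instead use the dominance-plus-LCA contradiction only to confine $b'$ to the first two honest level ranges, and then try to conclude $\le M$ by counting: the adversary's credited superblocks are at most the true superblocks in those ranges, bounded via superquality. This step fails quantitatively and qualitatively. Quantitatively, an honest level range contains at least $f(\P^{(1)},\mu)=c\,g(\P^{(1)},\mu)$ (and up to about $4f$, cf.\ Lemma \ref{lm:small_level_ranges}) superblocks of its level, so ``true count times $2^{\mu}$'' only yields a bound of order $c\cdot M$, not $M$; nothing in the counting route recovers the factor $c$. Qualitatively, superquality and the ``else it would have been re-trimmed'' bound are concentration statements that hold only with high probability, so your route turns a deterministic lemma (which must feed into the union bound over all times and all choices of $b$ in Theorem \ref{th:trim_attacked}) into a probabilistic one, and the confinement of $b'$ is in any case not needed. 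Replace the counting finish with the direct contrapositive application of Definition \ref{def:dominant_superchain} described above and the argument closes.
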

The above two equations essentially convey that the weight of the adversary's trimmed chain $\P^{(2)}$ between $b$ and $b'$ is upper bounded by $M$.
\begin{proof}[Proof of Lemma \ref{lm:bounding_adversary_work}]
This is a direct consequence of the dominant-superchain property.
\end{proof}

Next, we upper bound the weight of a trimmed chain.

\begin{lemma}
Let $\C$ be a blockchain and $\P$ be its corresponding trimmed chain produced by Algorithm \ref{algo:trimming}. Let $\delta \leq 1/\sqrt{2}$. If $\lvert \C \rvert = B$, then except with probability $\frac{\negl(k)}{B^{a \delta^2/2 - 1}}$, $S(\P,0) + \lvert \P\{B':\} \rvert \leq ((1+\delta)^2 + \delta)B$.
\label{lm:trimmed_chain_upperbound}
\end{lemma}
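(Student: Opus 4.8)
The quantity $S(\P,0)+\lvert\P\{B':\}\rvert$ is precisely the hybrid node's estimate of the underlying chain length $B$, so the lemma asserts that this estimate never substantially overshoots. The plan is to control the contribution of each level range separately and then add them up. Recall that the trimmed segment $\C[0:B']$ is partitioned contiguously into level ranges (higher levels nearer to genesis), so if $n_\mu$ denotes the number of underlying blocks occupied by level-range $\mu$, then $\sum_{\mu_l\le\mu\le\mu_h} n_\mu = B'$. Moreover $W(\P,\mu)=2^\mu\lvert\C[L_f(\mu):L_l(\mu)+1]\uparrow^\mu\rvert$: the trimmed chain retains exactly the underlying chain's level-$\mu$ superblocks in that index range, and the lower-level suffixes that the $\mathsf{trim}$ function splices in contribute nothing to the $\uparrow^\mu$ count. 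Thus the estimate equals $\sum_\mu 2^\mu\lvert\C[L_f(\mu):L_l(\mu)+1]\uparrow^\mu\rvert + (B-B')$, and since each summand has expectation $n_\mu$, the expected estimate is exactly $B$; the whole task is a concentration argument.

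Next I would, for each fixed $\mu$, use the stochastic model: level-$\mu$ superblocks arise as an independent thinning (rate $\lambda/2^\mu$) of the block-arrival process, so over the time window spanned by level-range $\mu$ the count of level-$\mu$ superblocks is, by Lemmas \ref{lm:poissontail}--\ref{lm:poisson_length_compare} (this is where $\delta\le 1/\sqrt2$ is used), at most $(1+\delta)^2\,2^{-\mu} n_\mu$ — hence $W(\P,\mu)\le(1+\delta)^2 n_\mu$ — except on a bad event of probability $\exp\!\big(-\Omega(\delta^2\,2^{-\mu} n_\mu)\big)$. The crucial point is that a level range that actually survives in $\P$ satisfied the condition on line 6 of Algorithm \ref{algo:trimming} when it was created, namely it held at least $f(\P,\mu)=c\,g(\P,\mu)$ level-$\mu$ superblocks; so either its expected count $2^{-\mu}n_\mu$ is $\Omega(g(\P,\mu))=\Omega(k+a\log S(\P,\mu))$, or the "too many superblocks" bad event has already fired. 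In both cases the per-range failure probability is $\exp\!\big(-\Omega(\delta^2(k+a\log S(\P,\mu)))\big)=\negl(k)\cdot S(\P,\mu)^{-\Omega(a\delta^2)}$, and since $S(\P,\mu)\le S(\P,0)\le\mathrm{poly}(B)$ with $S(\P,\mu)$ of order $B$ for the ranges carrying essentially all the weight, I would union-bound over the $O(\log B)$ levels and over the positions a range can occupy — the latter handled by a maximal-inequality form of the Chernoff bound so the loss is only logarithmic — to conclude that, except with probability $\negl(k)/B^{a\delta^2/2-1}$, one has $W(\P,\mu)\le(1+\delta)^2 n_\mu$ for every level range simultaneously.

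On that event I would simply sum: $S(\P,0)=\sum_\mu W(\P,\mu)\le(1+\delta)^2\sum_\mu n_\mu=(1+\delta)^2 B'$, and adding $\lvert\P\{B':\}\rvert=B-B'$ gives
\[
S(\P,0)+\lvert\P\{B':\}\rvert\le(1+\delta)^2 B'+(B-B')=B+\big((1+\delta)^2-1\big)B'\le\big((1+\delta)^2+\delta\big)B,
\]
using $B'\le B$ (in fact this yields the slightly stronger bound $(1+\delta)^2B$, and the extra $+\delta$ is comfortable slack absorbing off-by-ones). I expect the real obstacle to be the uniformity in the middle step: the level ranges and $B'$ are themselves functions of the randomness $\omega$, so the concentration estimate must hold simultaneously over all endpoints a range might land on, and one must verify that every surviving range has expected level-$\mu$ count $\Omega(g(\P,\mu))$. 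This is exactly where the "logarithmic waiting time" choice $g(\P,\mu)=k+a\log S(\P,\mu)$ earns its keep: it makes each per-range failure probability decay like a large negative power of $B$, enough to absorb the union bound while still leaving the advertised $\negl(k)/B^{a\delta^2/2-1}$.
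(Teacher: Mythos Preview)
Your approach is essentially the paper's: decompose $S(\P,0)$ into per-level-range contributions, bound each by $(1+\delta)^2 n_\mu$ via Lemma~\ref{lm:poisson_length_compare}, union-bound over the $O(\log B)$ levels (Lemma~\ref{lm:max_level}), and sum.

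The one substantive difference is how the top level ranges are handled, and here the paper has a clean device you did not identify. You treat the extra $+\delta$ as ``comfortable slack absorbing off-by-ones''; in the paper it is not slack at all but the budget for the highest levels, where concentration is too weak to use. Concretely, the paper takes the \emph{highest} $\mu$ with $S(\P,\mu)\ge\delta B$. All levels strictly above $\mu$ then have cumulative weight $S(\P,\mu+1)<\delta B$ by definition of $\mu$, and that is precisely the $+\delta B$ in the bound. For every level $\mu'\le\mu$ one has $S(\P,\mu')\ge\delta B$, so the number of level-$\mu'$ superblocks in that range is at least $k+a\log(\delta B)$; this \emph{uniform} lower bound makes the per-range failure probability $\negl(k)/B^{a\delta^2/2}$ uniformly in $\mu'$, and a plain union bound over the $O(\log B)$ levels finishes. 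This threshold split is exactly the precise version of what you gestured at with ``$S(\P,\mu)$ of order $B$ for the ranges carrying essentially all the weight,'' and it removes the need for your maximal-inequality detour. (Your separate concern about the level-range endpoints being random is legitimate, and the paper's proof, like yours, does not explicitly discharge it---it simply applies Lemma~\ref{lm:poisson_length_compare} to each realized range.)
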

\begin{proof}
Let $\mu$ be the highest level with $S(\P,\mu)\geq \delta B$. This means that, for all $\mu' \leq \mu$,
\begin{equation*}
    \lvert \P\{L_f(\mu'):L_l(\mu')+1\}\uparrow^{\mu'} \rvert \geq k + a \log(\delta B).
\end{equation*}
By Lemma \ref{lm:poisson_length_compare}, we get that for every $\mu' \leq \mu$,
\begin{equation*}
\lvert \P\{L_f(\mu'):L_l(\mu')+1\}\uparrow^{\mu'} \rvert
     \leq (1+\delta^2) \left( L_l(\mu')-L_f(\mu') + 1 \right),
\end{equation*}
except with probability 
\begin{equation*}
    2 e^{- \frac{k + a \log(\delta B)}{2} {\delta}^2} = \frac{\negl(k)}{B^{a\delta^2/2}}.
\end{equation*}

From Lemma \ref{lm:max_level}, the number of level-ranges in $\P$ is at most $\log B/\log 2$, except with probability $\negl(k)/B^{a/4}$. Therefore, by a union bound,
\begin{equation*}
    S(\P,0) + \lvert \P\{B':\} \rvert \leq ((1+\delta)^2) + \delta)B,
\end{equation*}
except with probability,
\begin{equation*}
    \log B \cdot \frac{\negl(k)}{B^{a\delta^2/2}} + \frac{\negl(k)}{B^{a/4}} = \frac{\negl(k)}{B^{a \delta^2/2 - 1}}.
\end{equation*}
\end{proof}

Next, we prove Theorem \ref{th:trim_attacked}.

\subsection{Proof of Theorem 1}

\begin{proof}[Proof of Theorem \ref{th:trim_attacked}]
The proof is by induction on the length of the honest blockchain $B^{(1)}$.

As the base case, consider $B^{(1)} = ck$. Since no trimming has occurred up to this point, the probability of a trim-attack up to this point is 0.

As the induction hypothesis, assume that up to chain-length $B$, the adversary wasn't able to create a fork from beyond a honest node's trimming point. That is,
\begin{equation}
\begin{split}
    \forall t &\text{ such that } B^{(1)}_t < B, \mathsf{Compare}\left( \P^{(1)}_t, \P^{(2)}_t \right) = \P^{(2)}_t 
    \text{ implies that } \\ & \mathsf{LCA}\left( \P^{(1)}_t, \P^{(2)}_t \right) \geq B^{(1)}_t - \Delta(S(\P^{(1)}_t,0) + \lvert \P\{{B'}^{(1)}_t:\} \rvert). 
\end{split}
\end{equation}

For the induction step, consider a time $t$ when the length of the honest chain is $B_t^{(1)}=B$. As a contradiction, assume that the adversary has created a secret chain $\P^{(2)}_t$ such that,
$\mathsf{Compare}\left(\P^{(1)}, \P^{(2)}\right) = \P^{(2)}$, and $b = \mathsf{LCA}\left( \P^{(1)}, \P^{(2)} \right) < B - \Delta(S(\P^{(1)}_t,0) + \lvert \P\{{B'}^{(1)}_t:\} \rvert)$.

We now split into two cases based on the location of $b$:

\paragraph{Case 1: $b \geq \delta B$}

By superchain quality of $\P^{(1)}$, we have, 
\begin{equation}
    \begin{split}
        S(\P^{(1)}, \mu^{(1)}) &\geq  (1 - \delta)b, \\
        &\geq (1-\delta)\delta B.
    \end{split}
\end{equation}

Moreover, since $\P^{(2)}$ shares blocks with $\P^{(1)}$ up to block $b$, we have, 
\begin{equation}
\begin{split}
    S(\P^{(2)}, \mu^{(2)}) &\geq  (1 - \delta)b, \\
    &\geq (1-\delta)\delta B.
\end{split}
\end{equation}

As a consequence of Lemma \ref{lm:bounding_adversary_work}, the weight of $\P^{(2)}$ between blocks $b$ and $b'$ is upper bounded by $M$. 

Next, we define a block $b'' \geq b'$ which is such that all level ranges in $\P^{(2)}\{b'':\}$ are of size at least $c'g(\P^{(1)}, \mu^{(1)})$, for some $1 < c' < c$. In order to ensure this, $b''$ can be defined as,
\begin{equation}
    b'' = \begin{cases}
    b', & \text{if } \lvert \P^{(2)}\{b'+1:L_l({\mu'}^{(2)})+1 \}\uparrow^{{\mu'}^{(2)}} \rvert \\
    &\quad \geq c'(k + a\log((1-\delta)\delta B)), \\
    L^{(2)}_l({\mu'}^{(2)}) +1, &\text{otherwise},
    \end{cases}
\end{equation}
where ${\mu'}^{(2)}$ is the level range in which block $b'$ resides in $\P^{(2)}$. If $b' > b$, then ${\mu'}^{(2)} \leq \max \{{\hat{\mu}}^{(1)}, \mu^{(1)}-1\}$. Therefore, the weight of $\P^{(2)}$ between $b'$ and $b''$ is upper bounded by $c'M$.

Let $b^*$ be the earliest block such that all blocks in $\P^{(2)}\{b^*+1:\}$ were mined by the adversary. Then, by the induction hypothesis,
\begin{equation*}
    b^* - b'' \leq \Delta(S(\P^{(1)}_t,0)+\lvert \P^{(1)}_t\{{B'}^{(1)}:\} \rvert).
\end{equation*}
The above statement is proved as follows. Assume there was a time $t^*$ when $b^*$ was in a honest node's chain, but at a later time $t$ the block after $b''$ in $\P^{(2)}_t$ is not in the honest node's chain $\C_t$. By the induction hypothesis,
\begin{equation*}
\begin{split}
    b^* - b'' &\leq \Delta(S(\P^{(1)}_{t^*},0)+\lvert \P^{(1)}_{t^*}\{{B'}^{(1)}:\} \rvert), \\
    &\leq \Delta(S(\P^{(1)}_t,0)+\lvert \P^{(1)}_t\{{B'}^{(1)}:\} \rvert).
    \end{split}
\end{equation*}

Given, $k' = k - a \log\left(((1+\delta^2) + \delta)/\delta \right)$. Therefore, from Lemma \ref{lm:trimmed_chain_upperbound}, except with probability $\frac{\negl(k)}{B^{a \delta^2/2 - 1}}$,
\begin{equation*}
    \Delta(S(\P^{(1)}_t,0)+\lvert \P^{(1)}_t\{{B'}^{(1)}:\} \rvert) \leq g(\P^{(1)}, \mu^{(1)}).
\end{equation*}

Therefore,
\begin{equation*}
    \mathsf{Weight}(\P^{(2)},b^*) \geq \frac{c'-1}{c'}\frac{c-1-c'}{c} \mathsf{Weight}(\P^{(1)},b).
\end{equation*}

Since, each level range in $\P^{(2)}\{b^*:\}$ contains at least $g(\P^{(1)},\mu^{(1)})$ number of corresponding superblocks, we have that except with probability $\negl(k) \log B/B^{a\delta^2/2}$, the time required $T$ required by the adversary is lower bounded as,
\begin{equation*}
    T \geq \frac{(1-\delta)}{\lambda_a}\frac{c'-1}{c'}\frac{c-1-c'}{c} \mathsf{Weight}(\P^{(1)},b).
\end{equation*}

In this time, except with probability $\negl(k)/B^{a\delta^2/2}$, the honest chain growth is lower bounded as,
\begin{equation*}
    \lvert \C\{b:\} \rvert \geq (1-\delta)^2\frac{\lambda_h}{\lambda_a}\frac{c'-1}{c'}\frac{c-1-c'}{c} \mathsf{Weight}(\P^{(1)},b).
\end{equation*}

We know from superquality property that $\mathsf{Weight}(\P^{(1)},b) \geq (1-\delta)\lvert \C\{b:\} \rvert$. Therefore, we reach a contradiction if, 

\begin{equation}
 (1-\delta)^3\frac{\lambda_h}{\lambda_a}\frac{c'-1}{c'}\frac{c-1-c'}{c}
 > 1.
 \label{eq:case_1_conclusion}
\end{equation}


\paragraph{Case 2: $b < \delta B$}
The reason to consider this case separately is because when $b < \delta B$, the corresponding level-range containing $b$ may not have enough number of blocks for us to obtain a strong concentration bound. Therefore, we jump ahead to a level which contains enough superblocks. Let ${\tilde{\mu}}^{(2)}$ be the highest-level in $\P^{(2)}$ such that, 
\begin{equation*}
    S(\P^{(2)},{\tilde{\mu}}^{(2)}) \geq \delta B.
\end{equation*}

In case $b'$ falls in the level range of $\tilde{\mu}^{(2)}$ or below, then the same analysis for Case 1 holds. In particular, if the relation in \eqref{eq:case_1_conclusion} is satisfied, we reach a contradiction except with probability $\frac{\negl(k)}{B^{a \delta^2/2 - 1}}$.

Otherwise, if $b'$ falls in a level-range higher than $\tilde{\mu}^{(2)}$, we set,
\begin{equation*}
    b'' = L^{(2)}_f(\tilde{\mu}^{(2)}).
\end{equation*}

Since $b<\delta B$, we have,

\begin{equation*}
    \begin{split}
        \mathsf{Weight}(\P^{(2)}, b'') &\geq (1-\delta)\mathsf{Weight}(\P^{(2)}, b), \\
        &\stackrel{(x)}{\geq} (1-\delta)\mathsf{Weight}(\P^{(1)}, b), \\
        &\stackrel{(y)}{\geq} (1-\delta)^3B.
    \end{split}
\end{equation*}
(x) follows because we assume that $\P^{(2)}$ beats $\P^{(1)}$. (y) follows by superchain quality of the honest chain and the fact that $b< \delta B$.

By the same argument of the induction hypothesis used in Case 1, except with probability $\frac{\negl(k)}{B^{a \delta^2/2 - 1}}$, $(c-1)/c$ fraction of the blocks in $\P^{(2)}\{b'':\}$ were mined by the adversary. Again, except with probability $\frac{\negl(k)}{B^{a \delta^2/2}}$, by the time that the adversary mines those blocks, the length of the honest chain would be,
\begin{equation*}
    \lvert \C\{b+1:\} \rvert \geq (1-\delta)^5\frac{\lambda_h}{\lambda_a} \frac{c-1}{c}B. 
\end{equation*}

The above equation is a contradiction if,
\begin{equation}
 (1-\delta)^5\frac{\lambda_h}{\lambda_a} \frac{c-1}{c} > 1.
 \label{eq:case_2_conclusion}
\end{equation}
\noindent \rule{1cm}{0.4pt} End Case 2 \rule{1cm}{0.4pt}

There are at most $B - \Delta(\P^{(1)})$ unique choices that can be made for block $b$. For each choice, the conclusions of Case 1 and Case 2 hold except with probability $\frac{\negl(k)}{B^{a \delta^2/2-1}}$. Therefore, taking a union bound over all choices of $b$, the conclusions hold except with probability $\frac{\negl(k)}{B^{a \delta^2/2-2}}$. This completes the induction step.

The induction step at chain-length $B$ is violated with probability $\frac{\negl(k)}{B^{a \delta^2/2-2}}$. Therefore, summing over all chain lengths, we have,
\begin{equation}
    \begin{split}
        \Pr(\mathsf{trim-attacked}) &\leq \sum_{B=1}^\infty \frac{\negl(k)}{B^{a \delta^2/2-2}}, \\
        &= \negl(k),
    \end{split}
\end{equation}
where the last equation follows since $a \delta^2/2-2 \geq 2$.

\end{proof}

\subsection{Proofs of Other Theorems:}

\begin{proof}[Proof of Corollary \ref{corr:congruence}]
    We have proved in Theorem \ref{th:trim_attacked} that the adversary cannot create a longer fork from before a honest node's trimming point except with probability $\negl(k)$. Since a honest node retains all blocks after the trimming point, comparing two trimmed chains that differ only after the trimming point is equivalent to comparing the underlying chains (see line 11 and 12 in Algorithm \ref{algo:compare}).
    \end{proof}
    
\begin{proof}[Proof Sketch of Theorem \ref{th:state_attacked}]
    In order to pass through the $\mathsf{state-verify}$ protocol in Algorithm \ref{algo:state-verify}, the adversary has to produce a malicious state at the trimming point of an honest node, such that it verifies against all the states corresponding to the untrimmed chain tail. In Section \ref{ssec:stochastic_model}, we explained that the adversary's process of generating a malicious state sequence is a Poisson process of rate $\lambda_s$, where $\lambda_s << \lambda_a$. Therefore, from the same argument as in Theorem \ref{th:trim_attacked}, we can prove that there exists no time when the adversary can create such a malicious state.
    \end{proof}
    
    \begin{proof}[Proof of Corollary \ref{corr:bootstrap_attacked}]
    This is a corollary of Theorems \ref{th:trim_attacked} and \ref{th:state_attacked}. Since, the probability that there exists a time when the system is either \\ $\mathsf{trim-attacked}$ or $\mathsf{state-attacked}$ is very small, the probability that an arriving node sees the system attacked in either way is negligible in $k$.
    \end{proof}
    
    \begin{proof}[Proof Sketch of Theorem \ref{th:succinctness}]
    Assume that all relevant quantities concentrate strongly around their means. First, the highest level-range would be $\mu_h = O(\log B)$. This is because, a level $\mu_h$ is $2^{\mu_h}$ times harder to find than a level-0 superblock.
    
    Next, a level range of level-$\mu$ would contain at most $O(\log B)$ $\mu$-superblocks. Because, if there were more, then it could be trimmed to level-$(\mu+1)$. Since we also retain suffixes of length $O(\log B)$ blocks of every lower level, level-range $\mu$ would contain about $O(\log^2 B)$ blocks. Adding across all the $\mu_h$ number of level ranges, the trimmed portion of $\P$ would contain about $O(\log^3 B)$ blocks. And, the untrimmed tail would contain at most $O(\log B)$ blocks.
    
    Since an interlink contains a link to the previous superblock of every level, the size of the interlink would be $O(\log B)$, thus making the size of each block $O(\log B)$. Therefore, the total storage requirement would be $O(\log^4 B)$. These statements are proved more rigorously in appendix \ref{app:succinctness}.
    \end{proof}

\section{Trimming Attack}
\label{app:trimming-attack}

Here we prove Theorem \ref{th:trimming_attack}.

\begin{proof}[Proof of Theorem \ref{th:trimming_attack}]
Let $f(x)=c_0 \log(1+x)$, for some small constant $c_0$. Define a sequence $\left( \alpha_i \right)_{i}$ as follows:

\begin{equation*}
    \begin{split}
        \alpha_0 &= 0, \\
        \alpha_1 &= 1, \\
        \alpha_{i} &= 2f\left( \sum_{j=1}^{i-1} \alpha_j \right), \quad i\geq 2.
    \end{split}
\end{equation*}

The adversarial attack follows a sequence of attempts. The $i\textsuperscript{th}$ attempt begins when $\lvert \C_{t} \rvert = \sum_{j=0}^{i-1} \alpha_j$. At this point, the adversary abandons the previous attempt and starts mining on a secret chain $\D_t$ forking away from $\C_t$ at this tip. The attempt lasts until one of $\C_t$ or $\D_t$ reach a length of $\left(\sum_{j=0}^{i} \alpha_j\right) + 1$. If $\D$ reaches the length first, then the adversary waits until $\C$ reaches a length of $\sum_{j=0}^{i} \alpha_j$, at which point it publishes $\D$. Since, $\D$ will be longer than $\C$, the honest nodes will adapt $\D$. The fact that this is a successful trimming attack follows from the following claim:

\begin{claim1}
For all large enough $i$,
$\alpha_i \geq f(\sum_{j=1}^{i} \alpha_j)$
\end{claim1}

From the definition of the attack, the length of the honest chain when adversary publishes his secret chain is $\sum_{j=1}^i \alpha_i$. Recall, $f(\sum_{j=1}^i \alpha_i)$ is an upperbound on the number of blocks between the trimming point and the tip of the chain for the honest node, since $B-B' = o(\log B)$. Therefore, the above claim shows that the adversary has successfully created a fork from before a trimming point of any node. To prove the claim, for large enough $i$ we have,
\begin{align*}
    \alpha_i &= f\left( \sum_{j=1}^{i-1} \alpha_j \right) + f\left( \sum_{j=1}^{i-1} \alpha_j \right), \\
        &\stackrel{(a)}{\geq} f\left( \sum_{j=1}^{i-1} \alpha_j \right) + f(\alpha_i), \\
        &\stackrel{(b)}{\geq} f\left( \sum_{j=1}^{i} \alpha_j \right).
\end{align*}
(a) follows because $2c_0\log(z) \leq z$. (b) follows by the sub-additivity of $\log(1+z)$.

The below two claims are needed to compute the probability of a successful attack attempt.

\begin{claim1}
$\alpha_i \leq c_1 \log i$, for all $i \geq 2$ and $c_1 = 1+4c_0$.
\label{claim:log_upper_bound}
\end{claim1}
Assuming $c_0 \leq 1/4$, $1 \leq c_1 \leq 2$. We will prove the above claim by induction. As the base case, $\alpha_2 = 2c_0\log(1+\alpha_11) = 2c_0\log2 \leq c_1\log2$. Assume that the claim is true for all $j\leq i-1$. Then,
\begin{align*}
    \alpha_i &= 2c_0\log\left( 1 + \sum_{j=1}^{i-1} \alpha_j \right), \\
    &\leq 2c_0 \log(1+(i-1)c_1\log i), \\
    &\stackrel{(a)}{\leq} 2c_0\log(i c_1 \log i), \\
    &= 2c_0 \log i + 2c_0 \log c_1 + 2c_0 \log\log i, \\
    &\stackrel{(b)}{\leq} 2c_0 \log i + 2c_0 \log i + \log i, \\
    &= (4c_0 + 1) \log i.
\end{align*}
(a) follows since $c_1 \geq 1$. (b) follows since $i \geq 2 \geq c_1$.

The probability of the adversary's $i\textsuperscript{th}$ attempt succeeding is $e^{-c_2 \alpha_i}$, for some constant $c_2$ \cite{sompolinsky2016bitcoin}.
\begin{claim1}
$\sum_{i=1}^{\infty} e^{-c_2\alpha_i} = \infty$.
\label{claim:infinite_sum}
\end{claim1}
To prove the claim,
\begin{align*}
    \sum_{i=2}^{\infty} e^{-c_2\alpha_i} &= \sum_{i=2}^\infty e^{-2c_0c_2 \log (1+\sum_{j=1}^{i-1} \alpha_i)}, \\
    &\stackrel{(a)}{\geq} \sum_{i=2}^\infty e^{-\log (1+\sum_{j=1}^{i-1} \alpha_j)}, \\
    &= \sum_{i=2}^\infty \frac{1}{1+\sum_{j=1}^{i-1} \alpha_j}, \\
    &\stackrel{(b)}{\geq} \sum_{i=2}^\infty \frac{1}{1+\sum_{j=1}^{i-1} c_1 \log i}, \\
    &\stackrel{(c)}{=} \sum_{i=2}^\infty \frac{1}{1+ c_1 i \log i}, \\
    &= \infty.
\end{align*}
(a) follows since $c_0$ can be made as small as desired. (b) follows from Claim \ref{claim:log_upper_bound}. (c) can be verified to diverge using the integral test. 

Since the attack-attempts are carried over disjoint sections of the blockchain, all the attempts are mutually independent. Therefore, from Claim \ref{claim:infinite_sum} and the second Borel-Cantelli Lemma, we have,
\begin{equation*}
    \Pr(\mathsf{trim-attacked}) = 1.
\end{equation*}
\end{proof}

\section{Dominant Superchain}
\label{app:dominant_superchain}
We will use the following concentration bound for sums of independent, but not necessarily identically distributed, Bernoulli random variables. It can be proved by an elementary application of Chernoff bound and can be found in  \cite{dubhashi1998}.

\begin{lemma}
Let $\{ X_i \}_{i=1}^n$ be a sequence of independent random variables with $X_i \sim Bernoulli(p_i)$. Let $S_n = \sum_{i=1}^n X_i$, and $s_n = \sum_{i=1}^n p_i$. Then, for $0 < \varepsilon < 1$,
\begin{equation*}
    \Pr\left( S_n \leq (1-\varepsilon)s_n \right) \leq e^{-\frac{\varepsilon^2}{2}s_n}. 
\end{equation*}
\label{lm:chernoffbernoulli}
\end{lemma}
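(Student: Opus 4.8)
The plan is to apply the exponential Chernoff method to the lower tail and then reduce the resulting transcendental exponent to the stated Gaussian-type bound by an elementary calculus argument.

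First I would convert the lower-tail event into an upper-tail event for $e^{-tS_n}$: for any $t > 0$,
\begin{equation*}
\Pr(S_n \leq (1-\varepsilon)s_n) = \Pr\left(e^{-tS_n} \geq e^{-t(1-\varepsilon)s_n}\right) \leq e^{t(1-\varepsilon)s_n}\,\mathbb{E}\left[e^{-tS_n}\right],
\end{equation*}
by Markov's inequality. Using independence of the $X_i$ together with $\mathbb{E}\left[e^{-tX_i}\right] = 1 + p_i(e^{-t}-1)$, I would factor the moment generating function and bound each factor via $1 + x \leq e^{x}$, obtaining $\mathbb{E}\left[e^{-tS_n}\right] \leq \exp\left(s_n(e^{-t}-1)\right)$. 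Combining the two estimates, the tail is at most $\exp\left(s_n\left[(1-\varepsilon)t + e^{-t} - 1\right]\right)$.

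Next I would optimize the exponent over $t > 0$. Differentiating the bracket yields the stationarity condition $e^{-t} = 1-\varepsilon$, i.e. $t = -\ln(1-\varepsilon)$, which is indeed positive for $0 < \varepsilon < 1$. Substituting this value collapses the exponent to $s_n\left[-(1-\varepsilon)\ln(1-\varepsilon) - \varepsilon\right]$, so that
\begin{equation*}
\Pr(S_n \leq (1-\varepsilon)s_n) \leq \exp\left(-s_n\left[(1-\varepsilon)\ln(1-\varepsilon) + \varepsilon\right]\right).
\end{equation*}

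Finally I would close the gap to the claimed bound by establishing the scalar inequality $(1-\varepsilon)\ln(1-\varepsilon) + \varepsilon \geq \varepsilon^2/2$ for $0 \leq \varepsilon < 1$. Setting $\phi(\varepsilon) = (1-\varepsilon)\ln(1-\varepsilon) + \varepsilon - \varepsilon^2/2$, I would compute $\phi(0) = 0$, then $\phi'(\varepsilon) = -\ln(1-\varepsilon) - \varepsilon$ with $\phi'(0) = 0$, and $\phi''(\varepsilon) = \varepsilon/(1-\varepsilon) \geq 0$. Since $\phi''$ is nonnegative, $\phi'$ is nondecreasing and hence nonnegative on $[0,1)$, so $\phi$ is itself nondecreasing and thus $\phi(\varepsilon) \geq \phi(0) = 0$. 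This gives $\Pr(S_n \leq (1-\varepsilon)s_n) \leq e^{-\varepsilon^2 s_n/2}$, as required. I expect this last analytic step to be the main obstacle: the probabilistic content is the routine moment-generating-function factorization, whereas verifying the transcendental inequality cleanly — rather than resorting to a cruder truncated-series bound — is where the argument must be handled with care.
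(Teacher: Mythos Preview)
Your proof is correct. The paper does not actually supply a proof of this lemma; it simply states that ``it can be proved by an elementary application of Chernoff bound and can be found in \cite{dubhashi1998}.'' Your argument is precisely such an elementary application --- Markov on $e^{-tS_n}$, factorization of the MGF via independence, the $1+x \leq e^x$ bound, optimization in $t$, and the calculus verification that $(1-\varepsilon)\ln(1-\varepsilon)+\varepsilon \geq \varepsilon^2/2$ --- so your route is exactly what the paper has in mind, just written out in full.
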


We now prove the high likelihood of dominant superchains.

\begin{lemma}
Let $\C$ be a blockchain and $\C\uparrow^{\mu}$ be its corresponding level-$\mu$ upchain. Then, for any $k\in \mathbb{N}$ and $g \geq k + a\log(\lvert \C \rvert)$. Then, if $a\geq 8$, $\C\uparrow^{\mu}$ is a dominant superchain with high probability in $k$ and $\lvert C \rvert$.
\end{lemma}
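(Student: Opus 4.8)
The plan is to prove the dominant-superchain property by contraposition: I will show that if some trimmed chain $\P' \subseteq \C$ has, in the level ranges up to and including $\mu'$, a large accumulated weight (at least $2^\mu g$ in the sense of Definition~\ref{def:dominant_superchain}), then with high probability $\C\uparrow^\mu$ contains at least one level-$\mu$ superblock in the corresponding span. Equivalently, I will show that the probability of the bad event — the weight threshold is met yet $\lvert \P'\uparrow^\mu \rvert = 0$ — is $\negl(k)$ after a union bound over the relevant choices of the block $l$ and the level ranges.

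First I would fix the underlying downchain segment that the left-hand side of \eqref{eq:dom_supchain_2} refers to; call its length $n$ (the number of level-$0$ blocks it spans). The key observation is that each block in this segment is independently a level-$\mu''$ superblock with probability $2^{-\mu''}$ (this is exactly the stochastic model of Appendix~\ref{ssec:stochastic_model}, and the indicator events are monotone across levels). So $\lvert \P'\{\cdot\}\uparrow^{\mu''} \rvert$ is a sum of independent Bernoulli$(2^{-\mu''})$ variables, and likewise $\lvert \P'\uparrow^\mu \rvert$ over the same span is a sum of independent Bernoulli$(2^{-\mu})$ variables. The left side of \eqref{eq:dom_supchain_2} is then, up to the $2^{\mu''}$ weights, a weighted count whose expectation is at most (a constant times) $n$; so the hypothesis ``weight $\ge 2^\mu g$'' forces, via a Chernoff upper-tail bound of Lemma~\ref{lm:poissontail}-type (or its Bernoulli analogue), that $n \ge c \cdot 2^\mu g$ with overwhelming probability — otherwise the weighted count could not have been that large. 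Then, conditioned on $n \ge c\cdot 2^\mu g$, the count $\lvert \P'\uparrow^\mu \rvert$ has mean $2^{-\mu} n \ge c g \ge c k$, so by Lemma~\ref{lm:chernoffbernoulli} with $\varepsilon$ close to $1$ we get $\lvert \P'\uparrow^\mu \rvert \ge 1$ except with probability $e^{-\Omega(cg)} = e^{-\Omega(k)} \cdot (\lvert\C\rvert)^{-\Omega(a)}$. Choosing $a \ge 8$ makes the $\lvert\C\rvert^{-\Omega(a)}$ factor strong enough to absorb the union bound over the $O(\lvert\C\rvert)$ choices of the starting block $l$ and the $O(\log\lvert\C\rvert)$ choices of level ranges, leaving a bound that is high-probability in both $k$ and $\lvert\C\rvert$.

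The main obstacle I anticipate is handling the two-part structure of the left-hand side of \eqref{eq:dom_supchain_2}: the first level range is weighted at a possibly smaller level $\mu_1 \le \mu'$ while the remaining ranges are weighted at their own levels $\mu''$, and the block $l$ can sit anywhere inside its level range. One must argue carefully that, whichever of these terms is responsible for the weight reaching $2^\mu g$, the corresponding downchain span is long enough to guarantee a level-$\mu$ superblock with high probability — and that the spans for distinct level ranges are disjoint, so the Bernoulli indicators across them really are independent and can be concatenated into one clean sum. A secondary subtlety is that the level-range functions $L'_f, L'_l$ are themselves determined by the (adversarially influenced) history; I would deal with this by taking the union bound over all combinatorially possible choices of $l$ and of the level-range boundaries within the chain of length $\lvert\C\rvert$, which is why the $\lvert\C\rvert^{-\Omega(a)}$ slack and the condition $a \ge 8$ are exactly what is needed. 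The rest is routine Chernoff bookkeeping.
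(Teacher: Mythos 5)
Your high-level plan (contraposition plus Chernoff) is the right instinct, but two steps do not go through as written. First, your union bound is over ``all combinatorially possible choices of $l$ and of the level-range boundaries,'' and you claim the slack $\lvert\C\rvert^{-\Omega(a)}$ with a constant $a\ge 8$ absorbs it. It cannot: the level-range boundaries partition a prefix of the chain into up to $O(\log \lvert\C\rvert)$ contiguous intervals, so the number of configurations is of order $\binom{\lvert\C\rvert}{O(\log\lvert\C\rvert)}$, which is superpolynomial in $\lvert\C\rvert$, while the per-event failure probability is only polynomially small in $\lvert\C\rvert$ (times $\negl(k)$). The paper's proof avoids this blow-up entirely: it first observes that a level-$\mu$ superblock anywhere in the underlying segment $\C[b_1:b_2+1]$ is necessarily retained by \emph{every} trimmed chain over that segment whose highest level range is at most $\mu$ (such a block is a level-$\mu''$ superblock for every $\mu''\le\mu$, so whichever level range covers it keeps it), and it works with the highest-weight trimmed chain over the segment; consequently the union bound runs only over the at most $\lvert\C\rvert^2$ choices of the endpoints $b_1,b_2$, which is exactly what $a\ge 8$ (i.e.\ $B^{-a/4}$) is calibrated for. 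Note that you also need this retention observation anyway: you write that ``$\lvert\P'\uparrow^{\mu}\rvert$ over the same span is a sum of independent Bernoulli$(2^{-\mu})$ variables,'' but $\P'$ does not contain all $n$ blocks of the span, so equating $\lvert\P'\uparrow^{\mu}\rvert\ge 1$ with ``the span contains a level-$\mu$ superblock'' is a claim that must be proved, not an ``equivalently.''

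Second, your detour through the underlying span length $n$ (upper-tail bound showing $n\gtrsim 2^{\mu}g$, then a lower-tail bound on a Binomial$(n,2^{-\mu})$ conditioned on that event) is both where the configuration-dependence enters and where a conditioning subtlety lurks: the event that the weighted count in \eqref{eq:dom_supchain_2} is large is defined by the same superblock-level indicators whose level-$\mu$ refinements you then want to treat as fresh Bernoullis. The paper's argument dispenses with $n$ altogether: given that a counted block is a level-$\mu''$ superblock, it is a level-$\mu$ superblock with conditional probability exactly $2^{\mu''-\mu}$, independently across blocks; dividing the hypothesis \eqref{eq:dom_supchain_2} by $2^{\mu}$ shows these conditional success probabilities sum to at least $g\ge k+a\log\lvert\C\rvert$, and a single application of Lemma \ref{lm:chernoffbernoulli} (heterogeneous Bernoullis) gives $\Pr(S_n\le 1)\le e^{-g/4}\le \negl(k)/\lvert\C\rvert^{a/4}$ in one step. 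You would need either to adopt this conditional-probability device or to restructure your two-step argument (fixing the span deterministically and case-splitting on $n$), and in the latter case you would still have to solve the union-bound problem above; as proposed, the argument has a genuine gap.
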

\begin{proof}
Consider a starting block $b_1 \geq \C[0]$ and ending block $b_2 \leq \C[-1]$.

First, observe that if $\C[b_1:b_2+1]\uparrow^{\mu} \geq 1$, then any trimmed chain $\P$ starting at $b_1$ and ending at $b_2$, and with its highest level range being at most $\mu$, will contain at least one level-$\mu$ superblock. This is because a level-$\mu$ superblock, call it $b_3$, is also a level-$\mu'$ superblock for any $\mu' \leq \mu$. Therefore, the interlink data-structure ensures that a there is a level range $\mu'$, which contains a link to $b_3$.

Let $\P'$ be the trimmed chain of the highest weight starting at $b_1$ and ending at $b_2$, and with its highest level-range being at most $\mu$. Assume that it satisfies the premise as defined in equation \ref{eq:dom_supchain_2}. That is,
\begin{equation*}
\begin{split}
    2^{\mu_1}\lvert \P'\{l:&L'_l(\mu')+1\}\uparrow^{\mu_1} \rvert \\
    + &\sum_{\mu'' < \mu'} 2^{\mu''}\lvert \P'\{L'_f(\mu''):L'_l(\mu'')+1]\uparrow^{\mu''} \rvert \geq 2^{\mu}g.
    \end{split}
\end{equation*}

It is useful re-arrange the equation by dividing both sides by $2^\mu$.
\begin{equation}
\begin{split}
    2^{\mu_1-\mu}\lvert \P'\{l:&L'_l(\mu')+1\}\uparrow^{\mu_1} \rvert \\
    + &\sum_{\mu'' < \mu'} 2^{\mu''-\mu}\lvert \P'\{L'_f(\mu''):L'_l(\mu'')+1]\uparrow^{\mu''} \rvert \geq g.
    \end{split}
    \label{eq:dom_superchain_app_1}
\end{equation}
Given that a block is a level-$\mu''$ superblock, let $X$ be an indicator variable for it also to be a level-$\mu$ superblock. Then, $\Pr(X=1) = 2^{\mu''-\mu}$. Therefore, we can associate a sequence of Bernoulli random variables $\{X_i\}_{i=1}^n$ with $\P$, where $X_i$ indicates whether the the $i\textsuperscript{th}$ superblock as taken in \eqref{eq:dom_superchain_app_1} is a level-$\mu$ superblock. Denoting $\Pr(X_i=1)= p_i$, equation \eqref{eq:dom_superchain_app_1} says that,
\begin{equation*}
    s_n = \sum_{i=1}^n p_i \geq g.
\end{equation*}

Now, using Lemma \ref{lm:chernoffbernoulli}, we can lower bound $S_n = \sum_{i=1}^n X_i$, as,
\begin{equation*}
\begin{split}
    \Pr(S_n \leq 1) &\leq e^{-\frac{(1-1/g)^2}{2}g}, \\
    &\leq e^{-g/4}, \\
    &\leq \frac{\negl(k)}{B^{a/4}},
\end{split}
\end{equation*}
where we denote $B = \lvert C \rvert$. There are totally $B$ ways to choose $b_1$, and then $B-b_1$ ways to choose $b_2$. Therefore, there are totally at most $B^2$ ways to choose $b_1$ and $b_2$. Taking a union bound over all possible starting points $b_1$ and ending points $b_2$, we conclude there exists at least one level-$\mu$ superblock in $\C$ with high probability if $a \geq 8$.
\end{proof}

\section{Succinctness}
\label{app:succinctness}
\begin{lemma}
When the length of the blockchain $\C$ is $B$, the highest-level of superblock in the corresponding trimmed chain $\P$ is at most $\log B / \log 2$ except with probability $\negl(k)/B^{a/4}$. 
\label{lm:max_level}
\end{lemma}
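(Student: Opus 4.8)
The plan is to read ``highest-level of superblock in $\P$'' as $\mu_h$, the highest level range of the trimmed chain (this is the quantity actually invoked in Lemma~\ref{lm:trimmed_chain_upperbound}), and to show that $\mu_h$ can exceed $\log B/\log 2$ only if $\C$ happens to contain far more superblocks of some very high level than one expects; this last event is then controlled by a geometric sum over the level. Since the statement is already conditioned on $\lvert\C\rvert = B$, no union over time will be needed.

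First I would extract a counting lower bound from the trimming protocol. Suppose that, at the time $\lvert\C\rvert = B$, the highest level range of $\P$ is $\mu_h$. By the test on line~6 of Algorithm~\ref{algo:trimming}, together with the fact that any later re-trim to the current top level only moves the right endpoint $L_l(\mu_h)$ rightward and never shrinks the range, the set $\P\{L_f(\mu_h):L_l(\mu_h)+1\}\uparrow^{\mu_h}$ has at least $f(\P,\mu_h)$ elements. Hence $S(\P,\mu_h)=W(\P,\mu_h)=2^{\mu_h}\lvert\P\{L_f(\mu_h):L_l(\mu_h)+1\}\uparrow^{\mu_h}\rvert\geq 2^{\mu_h}f(\P,\mu_h)\geq 2^{\mu_h}$, so, substituting $\log S(\P,\mu_h)\geq \mu_h\log 2$ into $f(\P,\mu_h)=c\,g(\P,\mu_h)=c(k+a\log S(\P,\mu_h))$, I get $f(\P,\mu_h)\geq c(k+a\mu_h\log 2)$. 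All $f(\P,\mu_h)$ of these blocks are level-$\mu_h$ superblocks lying in $\C$, so $\C$ contains at least $c(k+a\mu_h\log 2)$ level-$\mu_h$ superblocks.

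Next I would bound the probability of that event. In the stochastic model each of the $B$ blocks of $\C$ is, independently, a level-$\mu$ superblock with probability $2^{-\mu}$ (by thinning the block-arrival process; conditioning on the total block count makes the level-$\mu$ indicators i.i.d.\ Bernoulli$(2^{-\mu})$), so the count $N_\mu$ of level-$\mu$ superblocks in $\C$ is $\mathrm{Bin}(B,2^{-\mu})$. For $\mu>\log B/\log 2$ we have $B2^{-\mu}<1$, so the standard binomial upper tail $\Pr(\mathrm{Bin}(B,p)\geq m)\leq(eBp/m)^m$ gives, once $k$ exceeds a small absolute constant, $\Pr\bigl(N_\mu\geq c(k+a\mu\log 2)\bigr)\leq\bigl(e/(c(k+a\mu\log 2))\bigr)^{c(k+a\mu\log 2)}\leq e^{-c(k+a\mu\log 2)}=e^{-ck}2^{-ca\mu}$. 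Hence $\Pr(\mu_h=\mu)\leq e^{-ck}2^{-ca\mu}$, and summing this geometric series over all integers $\mu>\log B/\log 2$ yields $\Pr\bigl(\mu_h>\log B/\log 2\bigr)\leq e^{-ck}B^{-ca}/(1-2^{-ca})$. Since $2<c$ and $a\geq 8/\delta^2\geq 8$, we have $ca>a/4$, so $B^{-ca}\leq B^{-a/4}$ while $e^{-ck}=\negl(k)$, giving the claimed bound $\negl(k)/B^{a/4}$.

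I expect the only delicate points to be (i) justifying that the top level range still retains at least $f(\P,\mu_h)$ superblocks \emph{at the current time}, not merely when it was first created --- which is exactly the monotonicity remark that re-trimming to the current maximal level only moves $L_l(\mu_h)$ rightward, so the relevant superblock set never loses elements --- and (ii) making precise the passage from the continuous-time Poisson description of superblock arrivals to the binomial law for $N_\mu$ given $\lvert\C\rvert=B$, which I would dispatch by the thinning argument above. Everything else is a routine Chernoff estimate plus the convergence of $\sum_\mu 2^{-ca\mu}$.
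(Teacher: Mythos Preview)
Your argument is correct. The paper's proof is terser and proceeds slightly differently: it fixes the single threshold level $\mu=\log B/\log 2$, notes that one level-$\mu$ superblock already has weight $2^\mu=B$ so that $g(\P,\mu)\geq k+a\log B$, models $\lvert\C\uparrow^\mu\rvert$ as (approximately) $\mathrm{Poisson}(1)$, and applies the Poisson tail bound of Lemma~\ref{lm:poissontail} with $\lambda=1$, $x=k-1+a\log B$ to get $e^{-x/4}=\negl(k)/B^{a/4}$ directly. The final sentence then observes that if $\lvert\C\uparrow^\mu\rvert<g(\P,\mu)$ the chain is never trimmed to level $\mu$, which (implicitly via the fact that $\mu_h$ increases by at most one per trimming round) rules out all higher levels too.

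Your route differs in two respects. First, instead of reducing to a single level, you union-bound over every $\mu>\log_2 B$ and sum the geometric series; this avoids any appeal to the ``$\mu_h$ increments by one'' property and would survive in a protocol where the top level could jump. Second, you work with the exact $\mathrm{Bin}(B,2^{-\mu})$ law and the elementary bound $(eBp/m)^m$ rather than the Poisson approximation, which is cleaner given that the statement is conditioned on $\lvert\C\rvert=B$. You also use the sharper threshold $f=c\,g$ instead of $g$, yielding the exponent $ca$ which you then relax to $a/4$; the paper gets $a/4$ directly from its Poisson bound. Both approaches land on the same final estimate; yours is more self-contained, the paper's is shorter but leans on a couple of implicit steps.
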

\begin{proof}
Let $\mu = \log B / \log 2$. Then, weight of 1 level-$\mu$ superblock is $2^{\mu} = B$. Therefore, $g(\P,\mu) \geq B$.

Also, $\lvert \C\uparrow^\mu \rvert \sim Poisson(1)$. Therefore, applying the Poisson concentration bound,

\begin{align*}
    \Pr\left(\lvert \C\uparrow^\mu \rvert \geq g(\P,\mu) \right) &\leq \Pr\left(\lvert \C\uparrow^\mu \rvert \geq 1 + (k-1) + a\log B \right), \\
    &\leq e^{-\frac{k-1 + a\log B}{4}}, \\
    &= \frac{\negl(k)}{B^{a/4}}.
\end{align*}
The proof is complete by observing that if $\lvert \C\uparrow^\mu \rvert < g(\P,\mu)$, then the chain won't be trimmed to this level.
\end{proof}

Next, we show that individual level ranges aren't too big.

\begin{lemma}
Let $\P$ be a trimmed chain of an underlying chain $\C$ obtained by the trimming protocol. Let $L_f$ and $L_l$ be the associated level range functions. Then, for every level $\mu$, $\lvert \P\{L_f(\mu)+1:L_l(\mu)\}\uparrow^\mu \rvert \leq 4 f(\P,\mu)$, with high probability.
\label{lm:small_level_ranges}
\end{lemma}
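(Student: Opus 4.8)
The plan is to argue by contradiction, exploiting the fact that Algorithm \ref{algo:trimming} always attempts to trim to a higher level before a lower one: if the level-$\mu$ range of the final trimmed chain $\P$ held more than $4f(\P,\mu)$ level-$\mu$ superblocks, the algorithm would have trimmed to level $\mu+1$ instead.

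First I would reduce the claim to a statement about the underlying chain $\C$. In the final state, $L_l(\mu)$ and $L_f(\mu)$ were fixed by the last invocation $\tau$ of the $\mathsf{trim}$ routine on level $\mu$, with trimming point $B' = L_l(\mu)+1$; moreover no later trim to a level $>\mu$ can have occurred, since such a trim would collapse the level-$\mu$ range and (there being no later level-$\mu$ trim) leave it collapsed. One checks by induction that every level range of level $\nu$ retains all level-$\nu'$ superblocks of $\C$ lying in its span for every $\nu'\le\nu$ — this is exactly what the first line of $\mathsf{trim}$ together with the interlink structure guarantees — so the level-$\mu$ range of $\P$ contains precisely the level-$\mu$ superblocks of $\C$ in $[L_f(\mu),B')$. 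It therefore suffices to show $|\C\{L_f(\mu):B'\}\uparrow^{\mu}|\le 4f(\P,\mu)$ with high probability.

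The structural step is next. In the trimming attempt that ran $\tau$, the algorithm tried level $\mu+1$ first and did not trim there. Since $\mathsf{trim}$, once called, always returns success (its terminal level-$0$ goodness and size checks are trivially met), level $\mu+1$ must have failed the guard on line $6$: either $|\P\{L_f(\mu+1):B'\}\uparrow^{\mu+1}|<f(\P,\mu+1)$ or the upchain $\P\{L_f(\mu+1):B'\}\uparrow^{\mu+1}$ fails to be $(\delta,g)$-good. Since the underlying chain of $\P$ is the honest longest chain, whose level upchains are $(\delta,g)$-good except with probability $\negl(k)$ (Appendix \ref{app:dominant_superchain} and the superquality result of \cite{kiayias2020}), the second alternative happens only with negligible probability, so w.h.p. the first holds. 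Because the level-$(\mu+1)$ range lies to the left of the level-$\mu$ range we have $L_f(\mu+1)\le L_f(\mu)$, and because levels $\ge\mu+1$ are untouched between $\tau$ and the final state the relevant value of $f$ equals $f(\P,\mu+1)$, which is $\le f(\P,\mu)$ since $S(\P,\cdot)$ is non-increasing in the level. Combining with the retention property above, $|\C\{L_f(\mu):B'\}\uparrow^{\mu+1}|\le|\P\{L_f(\mu+1):B'\}\uparrow^{\mu+1}|<f(\P,\mu)$ w.h.p.

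Finally I would close the loop with a Chernoff bound: conditioned on the set of level-$\mu$ superblocks, each is independently a level-$(\mu+1)$ superblock with probability $1/2$, so if $|\C\{L_f(\mu):B'\}\uparrow^{\mu}|=n>4f(\P,\mu)$ then by Lemma \ref{lm:chernoffbernoulli} at least $n/4>f(\P,\mu)\ge f(\P,\mu+1)$ of them are level-$(\mu+1)$ superblocks except with probability $e^{-\Omega(n)}\le e^{-\Omega(f(\P,\mu))}=\negl(k)$, contradicting the bound just derived; a union bound over the at most $\mu_h=O(\log B)$ levels (Lemma \ref{lm:max_level}) then finishes. I expect the main obstacle to be making this last union bound rigorous in the presence of the \emph{random} endpoints $L_f(\mu),B'$: one must union over the $O(B^{2})$ possible windows (and the $O(k+\log B)$ possible threshold values), which is affordable for low levels only because the $a\log S(\P,\cdot)$ term in $f$ pushes the per-window failure probability down to $B^{-\Omega(a)}$; for high levels $\mu$ near $\mu_h$, where $f(\P,\mu)$ is only $\Theta(k)$, one instead bounds the level-$\mu$ range directly by $|\C\uparrow^{\mu}|=O(B/2^{\mu})$ via the Poisson tail bound of Lemma \ref{lm:poissontail}.
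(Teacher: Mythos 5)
Your proof is correct and follows essentially the same route as the paper's: argue by contradiction that more than $4f(\P,\mu)$ level-$\mu$ superblocks in the range would, with high probability, contain at least $f(\P,\mu+1)$ level-$(\mu+1)$ superblocks, so Algorithm \ref{algo:trimming} would already have trimmed that range to level $\mu+1$. The differences are minor: the paper invokes Lemma \ref{lm:poisson_length_compare} where you use Bernoulli thinning via Lemma \ref{lm:chernoffbernoulli}, and you additionally spell out the goodness guard, the monotonicity of $f$ in the level, and the union bound over windows, which the paper's two-line proof leaves implicit.
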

\begin{proof}
As a contradiction, assume that there were a level $\mu$, such that $\lvert \P\{L_f(\mu)+1:L_l(\mu)\}\uparrow^\mu \rvert \leq 4 f(\P,\mu)$. Then, from Lemma \ref{lm:poisson_length_compare}, with high probability, we would have $\lvert \P\{L_f(\mu)+1:L_l(\mu)\}\uparrow^{\mu+1} \rvert \leq  f(\P,\mu)$. This means that the trimming protocol would have trimmed this range to level $\mu+1$. 
\end{proof}

Now we are ready to prove the optimistic succinctness theorem.

\begin{proof}[Proof Theorem \ref{th:succinctness}]
From Lemma \ref{lm:max_level} we know that the highest level is $O(\log B)$ with high probability. And, from Lemma \ref{lm:small_level_ranges}, the number of $\mu$-level superblocks in level range $\mu$ is $O(\log B)$ with high probability.

According to the construction of level range $\mu$ in the trimming protocol, it contains $O(\log B)$ superblock of level-$\mu'$ for each level $\mu' < \mu$. Therefore, each level range contains $O(\log^2 B)$ blocks with high probability. And, since there are at most $O(\log B)$ level ranges, the number of blocks in the trimmed section of $\P$ is $O(\log^3 B)$ blocks. Since, the untrimmed tail consists of $O(\log B)$ blocks, $\P$ totally contains $O(\log^3 B)$ blocks.

Each block has the interlink, which in turn contains a link to the previous superblock of every level. Since there are at most $O(\log B)$ level of superblocks, the size of the interlink is $O(\log B)$. This makes the overall size of $\P$ as $O(\log^4 B)$.
\end{proof}

\section{Our Protocol in the Random Oracle Model} 
\label{ssec:pass}

In our analysis, we have modelled PoW mining as calls to an ideal random functionality which returns a ``success'' with a small probability, and ``failure'' otherwise, independent of everything else. However, in current PoW systems, mining is performed by making calls to a random oracle hash function, $H: \{0,1\}^* \to \{0,1\}^\kappa$, where $\kappa$ is a security parameter. Since the output space of $H$ is finite, we cannot have our formulation of $\mathsf{trim-attacked}$ in Definition \ref{def:trim_attacked} as a security model. This is because $\mathsf{trim-attacked}$ considers time from $0$ to infinity, during which time the hash function $H$ is going to experience collisions with probability 1. In particular, in order to ensure that $H$ doesn't experience collisions with high probability, one can only run the blockchain for $\mathsf{poly}(\kappa)$ time (or, $\mathsf{poly}(\kappa)$ rounds in the discrete time model)\footnote{$\mathsf{poly}(\kappa)$ means polynomial in $\kappa$.}.

In \cite{pass2017}, the authors denote our ideal random functionality as $\mathcal{F}_{tree}$. They prove that when a blockchain is run for $\mathsf{poly}(\kappa)$ rounds, the $\mathcal{F}_{tree}$ model and the random oracle model are statistically close (see Lemma 5.1 in \cite{pass2017}). Using this, they prove that if a blockchain protocol is secure under the $\mathcal{F}_{tree}$ model, it is secure under the random oracle model for $\mathsf{poly}(\kappa)$ number of rounds as well. We can employ this result in transferring our security definitions in Section \ref{sec:security-statements} to the more realistic random oracle model for hashing. For example, Theorem \ref{th:trim_attacked} would imply that an adversary without a majority of the mining power cannot create a longer-fork from before a honest node's trimming point at any point during the $\mathsf{poly}(\kappa)$ rounds of blockchain execution (instead of infinite number of rounds). Similar arguments hold under the Bitcoin Backbone model\cite{kiayias2016} as well.

\section{Optimizing Storage with Stateless Blockchains}
\label{sec:app-stateless_blockchains}
Our trimming protocol optimizes the amount of cold storage required to represent the PoW in the blockchain.
In this section, we describe how our work can be interfaced to other methods that optimize the storage of the \emph{state}.
In hybrid nodes, states need to be stored both in cold and hot storage.
The state at the trimming point, $\S(B'_t)$, needs to be stored in cold-storage in order to recover the state at the chain tip when forks are created.
Conversely, the state at the tip of the chain, $\S(B_t)$, needs to be stored in hot-storage in order to perform transaction and block validation.
Along with the length of the blockchain, the size of the blockchain-state is also rapidly increasing with time.
For instance, at the time of writing Bitcoin's UTXO state is almost 4GB in size, which motivates lessening high storage and verification time burdens by extending our work to optimize both the state's cold-storage at the trimming point, and its hot-storage at the chain tip.

So far, we have described that storing state-commitments in the blocks enables one to securely establish the state of a trimmed chain, by use of the $\mathsf{state\mhyphen verify}$ protocol described in Algorithm \ref{algo:state-verify}.
In a different line of work, called \emph{stateless blockchains}, state-commitments  are used to reduce the amount of hot-storage required for transaction validation, thus speeding up validation. In this section, we illustrate that when hybrid nodes are used in stateless-blockchains, then state commitments can serve a dual purpose: 1) they can be used to establish the state of the trimmed chain; 2) and, they can be used to perform stateless transaction validation.

Stateless validation, first proposed by Todd \cite{todd}, is a scheme where nodes validating transactions store a short cryptographic state-commitment rather than the entire state. A client then provides a membership proof that the node can verify against the commitment. Recently, several constructions have been developed to perform stateless validation in both the UTXO and account-based state models using various cryptographic primitives \cite{chepurnoy2018,dryja2019,boneh2019, tomescu2020, reyzin2017, agrawal2020, gorbunov2020}.
As an example, the following is how stateless validation of an account-based state would work via \emph{vector commitments}, a primitive allowing one to commit to an ordered sequence of values in a way that allows one to later open the commitment at a specific position in the sequence, with proof that the $i^{th}$ value is the $i^{th}$ committed message (see \cite{catalano2013vector}). Suppose an account is represented by a tuple of the form mapping a public key to a balance:
$(pk_i,v_i)$, and that party 1 want to transfer an amount $m$ to party 2. Then, the transaction submitted by party 1 takes the form $tx = (pk_1,v_1,pk_2,m,\pi,\sigma)$ where $pk_1$ is the public key of the source account, $pk_2$ is the public key of the destination account, $v_1$ is the initial balance of $pk_1$, $m$ is the amount being transferred, $\pi$ is a proof of membership of $(pk_1,v_1)$ in the commitment, and $\sigma$ is a signature of $(v_1,pk_2,m,\pi)$ under $pk_1$. 
In order to validate a transaction, a node checks that $m <v_1$, the validity of the signature, and the validity of the proof with respect to the public commitment $y(B_t)$. If the checks passes, the commitment is updated to reflect the change in the two accounts: $(pk_1, -m)$ and $(pk_2, +m)$. With each such update, users update their proofs $\pi$ so that they remain valid with respect to the latest commitment. 

As a first step, observe that stateless blockchains optimize hot-storage by obviating the need to store the state at the tip of the chain in the RAM. At this point, the state at the trimming point is still stored in the cold storage. We can use this to provide an interface from trimming to stateless blockchains: Firstly, if the proof for some client becomes outdated past the trimming point because it was offline for a long while, then the client can recompute its latest proof by querying a hybrid node for $\S(B'_t)$ and the untrimmed chain tail. Secondly, if a fork in the chain makes the proofs of several clients invalid, they can all recompute their proofs by contacting the hybrid node similarly.

Now, we describe a way to optimize the cold-storage of the state at the trimming point as well. In order to do this, we need to make an assumption on the clients. At every point, the client needs to store its membership proof corresponding to every block in the untrimmed tail of hybrid nodes. This enforces a couple of constraints on clients. First, the storage requirements of a client would scale logarithmically in the length of the blockchain as well, since the length of the chain-tail grows logarithmically. Second, the clients need to be online very often so that their proofs never get outdated.
If this is too cumbersome, clients can alternatively delegate the job of saving and updating proofs to proof-serving nodes \cite{chepurnoy2018}. In this case, the hybrid node need not store the entire state at the trimming point either. This is because, even in the event that a fork is created in the untrimmed-tail, all the clients would have their proofs corresponding to the forking point. In this case, a joining hybrid node would no longer need to use the $\mathsf{state\mhyphen verify}$ protocol, since just the state commitment suffices. The rest of the hybrid node's protocols proceed as before.

\end{document}
\endinput